\documentclass[submission,copyright,creativecommons]{eptcs}

\usepackage{iftex}

\usepackage{subcaption}

  \usepackage[english]{babel}
  \usepackage[T1]{fontenc}        

\title{Counterfactual Causality for Reachability and Safety based on Distance Functions}
\author{Julie Parreaux
\institute{Aix Marseille Univ, CNRS, LIS, Marseille, France}
\email{julie.parreaux@univ-amu.fr}
\and
Jakob Piribauer
\institute{Technische Universit\"at Dresden, Germany \\ Technische Universit\"at M\"unchen, Germany}
\email{jakob.piribauer@tu-dresden.de}
\and
Christel Baier
\institute{Technische Universit\"at Dresden, Germany}
\email{christel.baier@tu-dresden.de}
}


\usepackage[dvipsnames]{xcolor}
\usepackage{etex}
\usepackage{wrapfig}
\usepackage{graphicx}
\usepackage{amssymb,amsmath,amsthm}

\usepackage[mathscr]{eucal}
\usepackage{verbatim}
\usepackage{comment}
\usepackage{mathtools}
\usepackage{amsfonts}
\usepackage{nicefrac}
\usepackage{color}
\usepackage{graphicx}
\usepackage{float}
\usepackage{tikz}
\usepackage{xspace} 
\usepackage{pgf}
\usetikzlibrary{arrows,automata}

\usepackage[usenames,dvipsnames]{pstricks}
\usepackage{epsfig}
\usepackage{pst-grad} 
\usepackage{pst-plot} 
\usepackage{pstricks-add}
\usepackage[linesnumbered,ruled]{algorithm2e}
\usepackage{hyperref}
\usepackage{cleveref}
\usepackage[backgroundcolor=magenta!10!white]{todonotes}
\usepackage{enumitem}

\usepackage{wasysym}
\usepackage{multirow}


\usetikzlibrary{automata,positioning,shadows}
\usetikzlibrary{calc, shapes, backgrounds,arrows}
\usetikzlibrary{chains, decorations.pathmorphing}
\usetikzlibrary{positioning,decorations.pathreplacing}
\usetikzlibrary{shapes.misc}

\tikzset{every loop/.style={looseness=7}, >=latex}
\tikzset{every picture/.style={>=latex}}
\tikzstyle{PlayerReach}=[draw,circle,minimum size=6mm,inner sep=1.5pt]
\tikzstyle{PlayerSafe}=[draw,rectangle,minimum size=6mm,inner sep=1.5pt]
\tikzstyle{Player}=[draw,diamond,minimum size=7mm,inner sep=1.5pt]
\tikzstyle{target}=[draw=none,rectangle, minimum size=1mm]
\tikzstyle{PlayerReachmin}=[draw,circle, minimum size=4mm,inner sep=0pt]
\tikzstyle{PlayerSafemin}=[draw,rectangle,minimum size=4mm,inner sep=0pt]
\tikzstyle{cause} =[draw=blue,line width = 0.5mm]

\tikzstyle{strat} =[minimum width=0.1cm,line width=0.01mm,draw=none]
\tikzstyle{vecArrow} = [decoration={markings,mark=at position
	1 with {\arrow[scale=.5,>=latex]{>}}}
]

\usepackage{thmtools, thm-restate}

\newtheorem{theorem}{Theorem}
\newtheorem{proposition}[theorem]{Proposition}
\newtheorem{lemma}[theorem]{Lemma}

\theoremstyle{definition}
\newtheorem{example}[theorem]{Example}
\newtheorem{definition}[theorem]{Definition}
\newtheorem{remark}[theorem]{Remark}


\definecolor{darkgreen}{rgb}{0.1, 0.5, 0.1}
\definecolor{RZ}{HTML}{385803}
\colorlet{simonColor}{Yellow!30!white}
\colorlet{jakobColor}{Green!30!white}



\newcommand{\eqdef}{\,\stackrel{\mathclap{\tiny\mbox{def}}}{=}\,}
\newcommand{\markend}{\hfill ${ \lrcorner}$}

\newcommand{\T}{\mathcal{T}}

\newcommand{\cT}{\T}

\newcommand{\sinit}{s_{\mathit{init}}}
\newcommand{\wgt}{\mathit{wgt}}
\newcommand{\AP}{\mathsf{AP}}

\newcommand{\dpref}{d_{\mathit{pref}}}
\newcommand{\dhamm}{d_{\mathit{Hamm}}}
\newcommand{\dghamm}{d_{\mathit{gHamm}}}

\newcommand{\dlev}{d_{\mathit{Lev}}}
\newcommand{\dstrat}[1]{d^{#1}}

\newcommand{\dist}{\mathit{dist}}
\newcommand{\dHH}{\ensuremath{d^*}\xspace}

\newcommand{\game}{\ensuremath{\mathcal{G}}\xspace}
\newcommand{\Pl}{\ensuremath{\Pi}\xspace}
\newcommand{\ReachPl}{\ensuremath{\mathsf{Reach}}\xspace}
\newcommand{\SafePl}{\ensuremath{\mathsf{Safe}}\xspace}
\newcommand{\MinPl}{\ensuremath{\mathsf{Min}}\xspace}
\newcommand{\MaxPl}{\ensuremath{\mathsf{Max}}\xspace}

\newcommand{\Locs}{\ensuremath{V}\xspace}
\newcommand{\LocsReach}{\ensuremath{\Locs_{\ReachPl}}\xspace}
\newcommand{\LocsSafe}{\ensuremath{\Locs_{\SafePl}}\xspace}
\newcommand{\LocsMin}{\ensuremath{\widetilde{\Locs}_{\MinPl}}\xspace}
\newcommand{\LocsMax}{\ensuremath{\widetilde{\Locs}_{\MaxPl}}\xspace}
\newcommand{\LocsT}{\ensuremath{\Locs_\textsl{Eff}}\xspace}
\newcommand{\LocsNegT}{\ensuremath{\Locs_{\neg\textsl{Eff}}}\xspace}
\newcommand{\loc}{\ensuremath{v}\xspace}
\newcommand{\locinit}{\ensuremath{v_{\textsf{i}}}\xspace}
\newcommand{\locT}{\ensuremath{\loc_\textsl{Eff}}\xspace}
\newcommand{\locNegT}{\ensuremath{\loc_{\neg\textsl{Eff}}}\xspace}

\newcommand{\Trans}{\ensuremath{\Delta}\xspace}
\newcommand{\trans}{\ensuremath{\delta}\xspace}

\newcommand{\weight}{\ensuremath{\textsf{wt}}\xspace}
\newcommand{\sCost}{\ensuremath{\textsf{max-cost}}\xspace}
\newcommand{\cost}{\ensuremath{\textsf{cost}}\xspace}

\newcommand{\loosestrategy}{\ensuremath{\sigma}\xspace}
\newcommand{\winstrategy}{\ensuremath{\tau}\xspace}
\newcommand{\strategy}{\ensuremath{\mu}\xspace}

\newcommand{\play}{\ensuremath{\xi}\xspace}
\newcommand{\looseplay}{\ensuremath{\pi}\xspace}
\newcommand{\winplay}{\ensuremath{\rho}\xspace}

\newcommand{\outcomes}{\mathsf{Play}}


\newcommand{\Z}{\mathbb{Z}}
\newcommand{\N}{\mathbb{N}}

\begin{document}
\maketitle
\paragraph*{Funding:}{\small{This work was partly funded by DFG Grant 389792660 as part of TRR 248 (Foundations of Perspicuous Software Systems), the Cluster of Excellence EXC 2050/1 (CeTI, project ID 390696704, as part of Germany’s Excellence Strategy), and  the DFG projects BA-1679/11-1 and BA-1679/12-1.}}
\vspace{24pt}


\begin{abstract}
Investigations of causality in operational systems aim at providing human-understandable explanations of \emph{why} a system behaves as it does.
There is, in particular, a demand to explain what went wrong on a given counterexample execution that shows that a system does not satisfy a given specification. 
To this end, this paper investigates a notion of counterfactual causality in transition systems based on Stalnaker's and Lewis' semantics of counterfactuals in terms of most similar possible worlds and introduces a novel corresponding notion of counterfactual causality in two-player games. 
 Using distance functions between paths in transition systems to capture the similarity of executions, this notion defines whether reaching a certain set of states is a cause for the fact that a given execution of a system  satisfies an undesirable reachability or safety property. Similarly, using distance functions between memoryless strategies in reachability and safety games, 
 it is defined whether reaching a set of states is a cause for the fact that a given strategy for the player under investigation is losing.

The contribution of the paper is two-fold:
In transition systems, it is shown that counterfactual causality can be checked in polynomial time for three prominent distance functions between paths. 
In two-player games, 
the introduced notion of counterfactual causality is shown to be checkable in polynomial time for two natural distance functions between memoryless strategies.
Further,  a notion of  explanation that  can be extracted from a counterfactual cause and that pinpoints  changes to be made to the given strategy in order to transform it into a winning strategy  is defined. For the two distance functions under consideration, the problem to decide whether such an explanation imposes only minimal necessary changes to the given strategy with respect to the used distance function turns out to be coNP-complete and not to be solvable in polynomial time if P is not equal to NP, respectively.

\end{abstract}

\section{Introduction}

%
%
%

Modern software and hardware systems have reached a level of complexity that makes it impossible for humans to assess whether a system behaves as intended without tools  tailored for this task.
To tackle this problem, automated verification techniques have been developed. \emph{Model checking} is one prominent such technique: A model-checking algorithm takes  a mathematical model of the system under investigation and a formal specification of the intended behavior and  determines whether all possible executions of the model satisfy the specification.
While the results of a model-checking algorithm provide guarantees on the correctness of a system or affirm the presence of an error, 
their usefulness  is, nevertheless, limited as they do not provide a human-understandable explanation of the  behavior of the system.

To provide additional information on \emph{why} the system behaves as it does,
certificates witnessing the result of the model-checking procedure, in particular counterexample traces in case of a negative result, have been studied extensively (see, e.g., \cite{ClarkeGMZ95,MaPn95,CGP99,Namjoshi01}). Due to the potentially still enormous size of counterexample traces and other certificates, a line of research has emerged that tries
to distill  comprehensible explications of what {causes} the system to behave as it does using formalizations of   \emph{causality} (see, e.g., \cite{Pearl09,Peters2017,ICALP21}).

\paragraph*{Forward- and backward-looking causality}

There are two fundamentally different types of notions of causality: \emph{forward-looking} and \emph{backward-looking} notions \cite{vandePoel2011}.
In the context of operational system models, forward-looking causality describes general causal relations between events that might happen along some possible executions. Backward-looking causality, on the other hand, addresses the causal relation between events along a given execution of the system model. 
This distinction is captured in more general contexts by the distinction between
\emph{type-level} causality addressing general causal dependencies between events that might happen when looking forward in a world model, and \emph{token-level} or \emph{actual} causality, corresponding to the backward view, that addresses causes for a particular event that actually happened  (see, e.g., \cite{Halpern15}). 

Notions of \emph{necessary} causality are typically  forward-looking: A necessary cause $C$ for an effect $E$ is an event that occurs on every execution that exhibits the effect $E$ (see, e.g.,  \cite{Baier2022Causality}, and for a philosophical analysis of necessity in causes \cite{Mackie65}).
The backward view naturally arises when the task is to explain what went wrong after an undesired effect has been observed. In the verification context, the backward view is natural for explaining counterexamples, see e.g. \cite{Zeller02,BallNR03,GroveV03,RenieresR03,WYIG06,WangAKGSL2013}. Most of these techniques rely on the \emph{counterfactuality} principle, which has been originally studied in philosophy \cite{Hume1739,Hume1748,Stalnaker1968,Lewis1973,Lewis1973counterfactuals} 
and formalized mathematically by Halpern and Pearl \cite{HalpernP2001,HalpernP04,HalpernP05,Halpern15}. 
Intuitively, counterfactual causality requires that 
the {effect} would not have happened, if the {cause} had not occurred, in combination with some minimality constraints for causes.
The most prominent account for the semantics of the involved counterfactual implication is provided 
   by Stalnaker and Lewis 
\cite{Stalnaker1968,Lewis1973,Lewis1973counterfactuals} 
in terms of closest, i.e.,  most similar, possible worlds. 
The statement ``if the cause $C$ had not occurred, then the effect $E$ would not have occurred'' holds true if in the worlds that are most similar to the actual world 
and in which $C$ did not occur, $E$ also did not occur. 
Interpreting executions of a system as possible worlds,
the actual world is an execution $\pi$ where both the effect $E$ and its
counterfactual cause $C$ occur, while the effect $E$ does not occur 
in alternative executions that
are as similar as possible to $\pi$ and that do not exhibit~$C$.

%


For a more detailed discussion on the distinction between forward- and backward looking causality and related concepts for responsibility, we refer the reader, e.g., to \cite{vandePoel2011,YazdanpanahD16,YazdanpanahDJAL19,BaierFM21,ICALP21}.

%

%
%
%
%
%
%

\paragraph*{Defining counterfactual causality in transition systems and reachability games}

%

To define our backward-looking notion of counterfactual causality in transition systems, we follow an approach similar to the one by Groce et al \cite{groce2006error} who presented a Stalnaker-Lewis-style formalization of  counterfactual dependence of events using distance functions.
We consider the case where effects are reachability or safety properties and causes are sets of states.
To illustrate the idea, let
 $\T$ be a transition system and let  $E$ and $C$  be disjoint sets of states of $\T$ indicating a reachability effect and a potential cause, respectively. 
Consider an execution $\pi$ that reaches the effect set and the potential cause set.
We  employ the counterfactual reading of causality by Stalnaker and Lewis by viewing executions as possible worlds  using a similarity metric  $d$ on paths:
Reaching $C$ was a cause for $\pi$ to reach $E$ if all paths $\zeta$, that do not reach $C$ and that are most similar to $\pi$ according to $d$ among all paths with this property, satisfy $\zeta\vDash \Box \neg E$, i.e.,  they do not reach $E$.
So, we first determine the minimal similarity-distance $d_{\min}=\min \{ d(\pi,\zeta) \mid \zeta\vDash \Box \neg C\}$ from $\pi$ to a path $\zeta$ that does not reach $C$.
Then, we check whether all paths that do not reach $C$ and have similarity-distance $d_{\min}$ to $\pi$ do not reach $E$: Do all 
$
\zeta\in \{\zeta^\prime\mid d(\pi,\zeta^\prime)=d_{\min}$  and $\zeta^\prime\vDash \Box \neg C\}$  satisfy $\Box \neg E$? 
If the answer is yes, it is the case that ``if $C$ had not occurred, then $E$ would not have occurred'' and so $C$ is a counterfactual cause for $E$ on $\pi$.

\begin{example}
\label{ex:traffic}
Consider the following distance function on paths in a labeled transition system $\cT$ with states $S$ and a labeling function $L\colon S\to \mathcal{A}$ for a set of labels $\mathcal{A}$:
For paths $\pi=s_0,s_1,\dots$ and $\pi^\prime=t_0,t_1,\dots$, we define 
$
\dist(\pi,\pi^\prime) = | \{n\in \mathbb{N} \mid L(s_n)\not=L(t_n) \}|$.
So, paths are more similar if their traces  differ at fewer positions. To determine whether $C$ is a cause for $E$ on $\pi$, we  first determine what the least number  $n_{\min}$ of changes to the state labels  of $\pi$ is  to obtain a path $\zeta$ that does not reach $C$. Then, we have to check whether all paths differing from $\pi$ in $n_{\min}$  labels and  not reaching  $C$ do  not reach $E$. If this is the case, $C$ is a counterfactual cause for $E$ on $\pi$ with respect to  $\dist$.

	\begin{figure}[t]
	\resizebox{.65\textwidth}{!}{
		\includegraphics{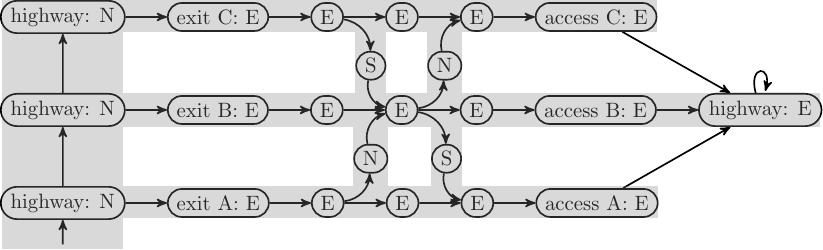}}	
		\hspace{12pt}			
		\resizebox{.3\textwidth}{!}{
		\begin{tikzpicture}
		[scale=1,->,>=stealth',auto ,node distance=0.5cm, thick]
		\tikzstyle{r}=[thin,draw=black,rectangle]
		
		\node[scale=1, rounded rectangle, draw] (start) {$a$};
		\draw[<-] (start) --++(0,+0.7);
		\node[scale=1, rounded rectangle, draw, below=.7 of start,xshift=-2cm] (step0) {$b$};
		\node[scale=1, rounded rectangle, draw, below=.7 of start,xshift=2cm] (step1) {$b$};
		
		\node[scale=1, rounded rectangle, draw, below=.7 of step0,xshift=-1cm] (step00) {$c$};
		\node[scale=1, rounded rectangle, draw, below=.7 of step0,xshift=1cm] (step01) {$a$};
		
		\node[scale=1, rounded rectangle, draw, below=.7 of step00] (step000) {$d$};
		
		\node[scale=1, rounded rectangle, draw, below=.7 of step01] (step011) {$d$};

		\node[scale=1, rounded rectangle, draw, below=.7 of step1] (step10) {$c$};

		\node[scale=1, rounded rectangle, draw, below=.7 of step10] (step100) {$d$};
		

		\draw[color=blue,rounded corners] (step1)+(-.4,-.3)    rectangle  node[right=0,yshift=.45cm] {$\mathit{cause}$}  ($(step1)+(+0.4,+0.3)$);
		
			\draw[color=red,rounded corners] (step011)+(-.4,-.3)    rectangle  node[above=0.25] {$\mathit{effect}$}  ($(step100)+(+0.4,+0.3)$);

	\draw[color=black,->] (start) edge  (step0) ;
	\draw[color=black,->,line width=2pt] (start) edge  (step1) ;
	\draw[color=black,->] (step0) edge  (step00) ;
	\draw[color=black,->] (step0) edge  (step01) ;
	\draw[color=black,->,line width=2pt] (step1) edge  (step10) ;
	\draw[color=black,->] (step00) edge  (step000) ;
	
	\draw[color=black,->] (step01) edge  (step011) ;
	
	\draw[color=black,->,line width=2pt] (step10) edge  (step100) ;
	

%
%
%
%
%

	\end{tikzpicture}}
\caption{On the left: example transition system modelling a traffic grid (Ex. \ref{ex:traffic}).  On the right: Example of a $\dhamm$-counterfactual cause that is not a $\dpref$-counterfactual cause (Ex. \ref{ex:dpref}).}
\label{fig:example_traffic}
		\end{figure}

Now, consider the example transition system $\cT$ modelling a  road system with a highway going north that has three  exits into a small town which can be left again on a highway heading east depicted in Fig. \ref{fig:example_traffic}.
		Each state is labeled with $N$, $E$, or $S$ for north, east, and south as indicated in Fig. \ref{fig:example_traffic} depending on the direction the cars move on the respective road.
		Say, an agent  traverses the system via the path $\pi$ with trace $NNE^\omega$, i.e., by taking exit B from the first highway and
then going eastwards straight through the town.		
Assume that there is a traffic jam on access B while the other access roads are free. The question is now whether taking exit B was a cause for being stuck in slow traffic  later on, i.e., for the effect $\{\text{access B}\}$.
		First, note here that the set $\{\text{exit B}\}$ is  not a forward-looking necessary cause for reaching $\{\text{access B}\}$. There are  paths through the  system that avoid $\{\text{exit B}\}$, but reach $\{\text{access B}\}$.
		
		However, given the fact that the agent traversed the town by going straight eastwards, it is reasonable to say that the agent would have reached a different access road if she had taken a different exit from the first highway.
		This is  reflected in the counterfactual definition using  $\dist$: There are two paths that do not reach exit B and whose traces differ from $\pi$ at only one position, namely the paths with trace $NE^\omega$ and $NNNE^\omega$. These paths do also not reach access B. So,  $\{\text{exit B}\}$
		is a counterfactual cause for $\{\text{access B}\}$ on $\pi$; if the agent had taken exit A or C, she would not have hit the low traffic flow at access B. 
%
%
\markend
\end{example}

In the context of two-player reachability games, causality has  been used as a tool to solve games \cite{BaierCFFJS21}. 
In our work, we  focus on explaining why a certain strategy does not allow the player to win. 
More precisely, in reachability games between players 
with a safety and the complementing reachability objective, respectively, we consider the situation where one of the players $\Pl$ has a winning strategy, but loses the game using a strategy $\sigma$. We introduce a notion of counterfactual causality that aims to provide insights into what is wrong with strategy $\sigma$ by
transferring the counterfactual definition using  distance functions $d$ on memoryless strategies. A set of states $C$ is said to be a $d$-counterfactual cause for the fact that $\sigma$ is losing if all memoryless strategies $\tau$, that make sure that $C$ is not reached and have minimal $d$-distance to $\sigma$ among all such strategies, are winning. Furthermore, we introduce \emph{counterfactual explanations} that specify minimally invasive changes of $\sigma$'s decisions required to turn $\sigma$ into a winning strategy.

\paragraph*{Contributions}
\begin{itemize}
\item
We show that $d$-counterfactual causal relationships in transition systems (defined as in \cite{groce2006error}) 
can be checked in polynomial time for the following three distance metrics $d$ (Sec. \ref{sub:check_TS}):
\begin{enumerate}
\item
the prefix distance: paths are more similar if their traces share a longer prefix.
\item
the Hamming distance that counts the positions at which  traces of paths differ.
\item
the Levenshtein distance that counts how many insertions, deletions, and substitutions are necessary to transform the trace of one path to the trace of another path.
\end{enumerate}
Furthermore, we show that the  notion of $d$-counterfactual causality for the Hamming distance is consistent with Halpern and Pearl's but-for causes \cite{HalpernP04,HalpernP05}.
%
\item
In reachability games, we provide a generalization of this notion using similarity metrics on memoryless deterministic strategies. 
We show that for the Hausdorff lifting of the prefix distance on paths to a distance function on memoryless deterministic strategies, the resulting notion can be checked in polynomial time (Sec. \ref{sec:counterfactual_games}).
\item
We introduce a notion of \emph{counterfactual explanation} that can be computed from a counterfactual cause (Sec. \ref{sec:explain-strat}).
An explanation specifies where a non-winning  strategy needs to be changed. Of particular interest are $D$-minimal explanations that enforce only minimal necessary changes with respect to a distance function $D$ on strategies. For two distance functions related to the Hamming distance, we show that checking whether an explanation is minimal is coNP-complete and not in P if P$\not=$NP, respectively.
\end{itemize}
An overview of the complexity results can be found in Table~\ref{table:contribution}.


\begin{table}[tbp]
	\centering
	\hspace{-24pt}
	\begin{subtable}[b]{.3\linewidth}
	\begin{tabular}{|c|c|}
		\hline
		distance $d$ & causality \\
		\hline
		\multirow{2}{*}{prefix} & in P  \\
		& (Thm.~\ref{thm:check_dpref_TS}) \\
		\hline
		\multirow{2}{*}{Hamming} & in P  \\ 
		& (Thm.~\ref{thm:checking_dhamm}) \\
		\hline
		\multirow{2}{*}{Levenshtein} & in P \\
		& (Thm.~\ref{thm:checking_Levenstein}) \\
		\hline
	\end{tabular}
	\end{subtable}
	\begin{subtable}[b]{.6\linewidth}
	\begin{tabular}{|c|c|c|}
		\hline
		distance $d$ & causality & explanations \\
		\hline
		Hausdorff lifting $\dpref^H$  & 
		in P & \\
		of the prefix distance & (Thm.~\ref{thm:checking-pref-game}) & \\
		\hline
		Hamming strategy & in P in acyclic games &
		coNP-complete  \\
		distance $\dhamm^s$ & (Thm.~\ref{thm:checking-hamm-game}) & (Cor. \ref{cor:check-hamm-game}) \\
		\hline
		Hausdorff-inspired  &  & not in P if P$\not=$NP \\
		distance \dHH & & (Cor. \ref{cor:check-hamm-game}) \\
		\hline
	\end{tabular}
	\end{subtable}
	\caption{Overview of the complexity results. On the left, the complexities of checking $d$-counterfactual 
		causality  in transition systems, and on the right, the complexities of checking $d$-counterfactual 
		causality and  $d$-minimality of explanations  in reachability games.}
	\label{table:contribution}
\end{table}

\paragraph*{Related work}

Ways to pinpoint the problematic steps in a counterexample trace by 
localizing errors  have widely  been studied  \cite{Zeller02,BallNR03,GroveV03,RenieresR03,WYIG06,WangAKGSL2013}.
For counterfactuality in transition systems, we follow the approach of \cite{groce2006error} with distance metrics. In contrast to the causes in this paper,  causes in \cite{groce2006error} are formulas in an expressive logic that can precisely talk about the valuation of variables after a certain number of steps. Further  \cite{groce2006error} is not concerned with checking causality, but with finding causes, which, due to the expressive type of causes, algorithmically boils down to finding executions avoiding the effect with a minimal distance to the given one.

Based on  counterfactuality, Halpern and Pearl \cite{HalpernP04,HalpernP05,Halpern15} provided an influential formalization of causality using structural equation models, which   has served as the basis for various notions of causality in the verification context  (see,e.g., \cite{BeerBCOT2012,Leiner-FischerL2013}).
A key ingredient is the notion of \emph{intervention} to provide a semantics for the counterfactual implication in Hume's definition of causality. 
 An intervention in a structural equation model sets a variable to a certain value by force, ignoring its dependencies on other variables, and evaluates the effects of this enforced change. In a sense, a minimal set of interventions to avoid a cause then leads to a most similar execution avoiding the cause. We will discuss the relations between our definition and the Halpern-Pearl definition in more detail in Section \ref{sec:HP-causality}.
 In \cite{BeerBCOT2012}, interventions are employed to counterexample traces in transition systems by allowing to flip  atomic propositions along a  trace. In contrast to our notion of counterfactual causes, this is tailored for complex linear time properties, but does not provide insights for reachability and safety. Furthermore, the flipping of atomic propositions can be seen as a change in the transition system while our definition  considers alternative executions without manipulating the system.
In~\cite{coenen2022temporal}, the Halpern-Pearl approach is applied to provide a counterfactual definition of 
causality in reactive systems. 
A distance partial order, namely the subset relation on sets of positions at which traces differ, is used to describe which interventions are acceptable as they constitute minimal changes  necessary to avoid the cause.  Checking causality is  shown to be decidable by a formulation as a hyperlogic model-checking problem.
Furthermore, notions of necessary and sufficient causes as sets of states in transition systems have been considered \cite{Baier2022Causality}. These do not rely on the counterfactuality principle and are of forward-looking nature.




We are not aware of formalisations of causality in game structures. The related concept of responsibility, has been investigated in 
multi-agent models \cite{YazdanpanahD16,YazdanpanahDJAL19}. Notions of forward and backward responsibility of players in multi-player
game structures with acyclic tree-like  arena have been studied \cite{BaierFM21}.

For a detailed overview of work on causality and related concepts in operational models, we refer the reader to the survey articles \cite{Chockler16,ICALP21}.

\section{Preliminaries}
\label{sec:prelim}

We briefly present notions we use and our notation. For details, see \cite{BK08,GradelTW-02}.

\noindent \textit{Transition systems.}
A transition system is a tuple $\cT=(S,\sinit,\to,L)$ where $S$ is a finite set of states, $\sinit\in S$ is an initial state, $\to\subseteq S\times S$ is a transition relation and $L\colon S\to 2^{\AP}$ is a labeling function where $\AP$ is a set of atomic propositions.
A path in a transition system is a finite or infinite sequence of states $s_0 s_1\dots$ such that $s_0=\sinit$ and, for all suitable indices $i$, there is a transition from $s_i$ to $s_{i+1}$, i.e., $(s_i,s_{i+1})\in \to$. Given a path $\pi=s_0 s_1\dots$, we denote its trace $L(s_0)L(s_1)\dots$ by $L(\pi)$.
If there are no outgoing transitions from a state, we call the state \emph{terminal}.

\noindent \textit{Computation tree logic (CTL).} The branching-time logic CTL consists of state formulas that are evaluated at states in a transition system  formed by $\Phi ::= \top \mid a \mid \Phi\land \Phi \mid \neg \Phi \mid \exists \varphi \mid \forall \varphi$ 
where $a\in \AP$ is an atomic proposition 
 and path formulas evaluated on paths formed by $\varphi::= \bigcirc \Phi \mid \Phi \mathrm{U} \Phi$. The semantics of the temporal operators in path formulas is as usual. We use the abbreviations $\lozenge \Phi$ for $\top \mathrm{U} \Phi$ and $\Box \Phi = \neg \lozenge\neg \Phi$ and also allow sets of states $T$ in the place of state formulas. The semantics of $\exists \varphi$ are that there exists a path starting in the state at which the formula is evaluated that satisfies $\varphi$; $\forall \varphi$ is defined dually to that as usual. Model checking of CTL-formulas can be done in polynomial time. For details, see \cite{BK08}.

\noindent \textit{Reachability games.}
	A \emph{reachability game} is a tuple $\game = (\Locs, \locinit, \Trans)$ where  
	$\Locs = \LocsReach \uplus \LocsSafe \uplus \LocsT$ is the set of 
	vertices shared between players \ReachPl and 
	\SafePl, and  some target vertices $\LocsT$ ($\textsl{Eff}$ for effect).
	$\locinit \in \Locs \setminus \LocsT$ is the initial vertex and 
	$\Trans \subseteq \Locs \times \Locs$ is the set of edges.
We denote by 
$\Trans(\loc)$ the set of edges from \loc.
W.l.o.g., we assume that target vertices are terminal states, 
i.e. for all vertices $\loc \in \LocsT$, $\Trans(\loc) = \emptyset$. A 
\emph{finite play} is a finite sequence of vertices 
$\looseplay = \loc_0\loc_1\cdots \loc_k\in \Locs^*$ such
that for all $0\leq i<k$, $(\loc_i,\loc_{i+1})\in \Trans$. A \emph{play} is either 
a finite play ending in a target vertex, or an infinite sequence of vertices 
such that every finite prefix is a finite play.
Transition systems can be viewed as one-player games.

A \emph{strategy} for \ReachPl in a reachability game \game is a mapping
$\loosestrategy \colon \Locs^* \LocsReach \to \Locs$. A play or finite
play $\looseplay = \loc_0\loc_1\cdots$ is a \emph{\loosestrategy-play} if
for all~$k$ with $\loc_k \in \LocsReach$, we have 
$\loosestrategy(\loc_0\cdots\loc_k) = \loc_{k+1}$. A strategy \loosestrategy is 
an \emph{MD-strategy} (for memoryless deterministic) if for all finite plays \play and $\play'$ with the same last 
vertex, we have that $\loosestrategy(\play)=\loosestrategy(\play')$. In this paper, we mainly use MD-strategies and write 
$\loosestrategy(\loc_k)$ instead of $\loosestrategy(\loc_0\cdots\loc_k)$ for MD-strategies $\loosestrategy$.
 Moreover, under 
a (partial) MD-strategy \loosestrategy, we define the 
\emph{reachability game under \loosestrategy}, denoted by $\game^{\loosestrategy} = 
(\Locs, \locinit, \Trans^{\loosestrategy})$, by removing edges not chosen by $\sigma$, i.e.,
$
\Trans^{\loosestrategy} = \Trans \setminus 
	\{ (\loc, \loc')\in \Trans \mid  \loc \in \LocsReach\text{ and $\sigma(\loc)$ is defined and $\sigma(\loc)\not=(\loc, \loc')$} \}
$.
When \loosestrategy is completely defined, $\game^{\loosestrategy}$ is a 
transition system. Finally, a strategy is \emph{winning} if all 
\loosestrategy-plays starting in \locinit  end in a target vertex. Analogous definitions 
apply to \SafePl.
	In reachability games, either \ReachPl or \SafePl wins with an MD-strategy. This winning strategy can be computed in 
	polynomial time (see, e.g., \cite{GradelTW-02}).

\noindent \textit{Distance function.}
A \emph{distance function} on a set $A$ is a function $d\colon A \times A \to \mathbb{R}_{\geq 0}\cup\{\infty\}$ such that 
  $d(x,x)=0$  for all $x \in A$ and
  $d(x,y)=d(y,x)$  for all $x,y \in A$.
It is called a \emph{pseudo-metric} if additionally
 $d(x,y)+d(y,z)\geq d(x,z)$  for all $x,y,z \in A$, and
 a \emph{metric} if further 
 $d(x,y)=0$ holds iff $x=y$  for all $x,y \in A$.

\section{Counterfactual causes in transition systems}
\label{sec:counterfactual_in_TS}

In this section, we introduce the backward-looking notion of counterfactual causes in transition systems using distance functions (Section \ref{sub:definition_TS}). Afterwards, we prove that the definition can be checked in polynomial time for three well-known distance functions (Section~\ref{sub:check_TS}). Finally, we illustrate similarities between our notion of counterfactual causality to the definition of causality by Halpern and Pearl (Sec \ref{sec:HP-causality}).
Proofs omitted here can be found in Appendix \ref{app:counterfactual_in_TS}.

\subsection{Definition}
\label{sub:definition_TS}
The effects we consider  are reachability or safety properties $\Phi=\lozenge E$ or $\Phi=\Box \neg E$ for a set of states $E$. As the behavior of the system after $E$ has been seen is not relevant for these properties, we assume that $E$ consists of terminal states.
\begin{definition}[$d$-counterfactual cause in transition systems]
	\label{def:conterfactual_TS}
Let $\cT$ be a transition system and let $d$ be a distance function on the set of maximal paths of $\cT$.
Let $E$ be a set of terminal  states and let $C$ be a set of states disjoint from $E$. Let $\Phi=\lozenge E$ or $\Phi=\Box \neg E$.
Given a maximal path $\pi$ that visits $C$ and satisfies $\Phi$ in $\cT$, we say that $C$ is a \emph{$d$-counterfactual cause for $\Phi$ on $\pi$} if 
\begin{enumerate}
\item there is a maximal path $\rho$ in $\cT$ that does not visit $C$, and
\item all maximal paths $\rho$ with $\rho\vDash \Box \neg C$ with minimal distance to $\pi$  do not satisfy $\Phi$.
In other words,  all maximal paths $\rho$ with $\rho\vDash \Box \neg C$ such that $d(\pi,\rho)\leq d(\pi,\rho^\prime)$ for all $\rho^\prime$ with
$\rho^\prime \vDash \Box \neg C$ satisfy $\rho\vDash \neg \Phi$. 
\end{enumerate}
\end{definition}

The choice of the similarity distance $d$ of course heavily influences the notion of $d$-counterfactual cause.
In this paper, we will instantiate the definition with three distance functions that are among the most prominent distance functions between traces (or words). 
An experimental investigation to clarify in which situations what kind of distance functions leads to a desirable notion of causality, however, remains as future work.

 \noindent\textit{Prefix metrics $\dpref^{\AP}$ and $\dpref$:} given two paths $\pi$ and $\rho$, let $n(\pi,\rho)$ be the length of the longest common prefix of their traces $L(\pi)$ and $L(\rho)$. Then, 
$
\dpref^{\AP}(\pi,\rho)\eqdef 2^{-n(\pi,\rho)}$.
We can also define the distance on paths instead of traces, which will be used later on: $\dpref(\pi,\rho)\eqdef 2^{-m(\pi,\rho)}$ where 
 $m(\pi,\rho)$ is the length of the longest common prefix of $\pi$ and $\rho$ as paths.
This can be seen as a special case of $\dpref^{\AP}$ if we assume that all states have a unique label.

The prefix metric measures similarity in a temporal way saying that executions are more similar if they initially agree 
for a longer period of time. If no further structure of the transition system or meaning of the labels is known, this distance function might be a 
reasonable  choice for counterfactual causality.

 \noindent\textit{Hamming distance $\dhamm$:} Given two words $w=w_0\dots w_n$ and $v=v_0\dots v_n$ of the same length,
 we define
$
\dhamm(w,v)  \eqdef   |\{0\leq i \leq n \mid w_i \not= v_i \} |$.
For two maximal paths $\pi$ and $\rho$ of the same length in a transition system $\cT$ with labeling function $L$, we define 
$\dhamm(\pi,\rho) \eqdef \dhamm(L(\pi),L(\rho))$. So, the distance between two paths is the Hamming distance of their traces. 

The Hamming distance seems to be a reasonable measure if a system naturally proceeds through different layers, e.g., if a counter is increased in each step. Then, traces are viewed to be more similar if they agree on more layers. The temporal order of these layers, however, does not play a role.

 \noindent\textit{Levenshtein distance $\dlev$ \cite{levenshtein1966binary}:}
Given two words $w=w_0\dots w_n$ and $v=v_0\dots v_m$, the Levenshtein distance is defined as the minimal number of editing operations needed to produce $v$ from $w$ where the allowed operations are insertion of a letter, deletion of a letter, and substitution of a letter by a different letter. 
Formally, we define $\dlev$ in terms of \emph{edit sequences}.
Let $\Sigma$ be an alphabet and $v,w\in \Sigma^\ast\cup\Sigma^\omega$ be two words over $\Sigma$.
The \emph{edit alphabet} for $\Sigma$ is defined as 
$
\Gamma \eqdef  (\Sigma\cup\{\varepsilon\})^2 \setminus \{(\varepsilon,\varepsilon)\}
$
where $\varepsilon$ is a fresh symbol.
An edit sequence for $v$ and $w$ is now a word $\gamma\in \Gamma^\ast\cup\Gamma^\omega$ such that the projection of $\gamma$ onto the first component results in $v$ when all $\varepsilon$s are removed and the projection of $\gamma$ onto the second component results in $w$ when all $\varepsilon$s are removed.
E.g., let $\Sigma=\{a,b,c\}$, $v=abbc$ and $w=accbc$. One  edit sequence is 
$
\gamma= (a,a)(b,c)(\varepsilon,c)(b,b)(c,c)$.
The weight  of an edit sequence  $\gamma=\gamma_1\gamma_2\dots$  is defined as
$
\wgt(\gamma)=| \{i \mid \gamma_i\not= (\sigma,\sigma)\text{ for all $\sigma \in \Sigma$}\}|$.
Then, for all words $v\in \Sigma^\ast\cup\Sigma^\omega$ and $w\in \Sigma^\ast\cup\Sigma^\omega$, we define
$
\dlev(v,w)=\min\{\wgt(\gamma)\mid \gamma \text{ is an edit sequence for $v$ and $w$}\}$.
%
%
Again, we obtain a pseudo-metric on paths  via the Levenshtein metric on traces.
%

The Levenshtein distance might be particularly useful if 
labels model actions that are taken. Two executions that are obtained by sequences of actions that only differ by 
inserting or leaving out some actions, but otherwise using the same actions, are considered to be similar in this case.

\begin{example}\label{ex:dpref}
 Let us  illustrate  counterfactual causality for the prefix metric $\dpref$ and the Hamming distance $\dhamm$.
%
Consider the  transition system depicted in Figure \ref{fig:example_traffic}. A path $\pi$ as indicated by the bold arrows on the right via the potential cause to the effect  has been taken:
This is not a $\dpref$-counterfactual cause on $\pi$: The most similar paths to $\pi$ that do not reach $\mathit{cause}$  are both paths that move to the left initially. As one of these paths reaches $\mathit{effect}$, the set $\mathit{cause}$ is not a $\dpref$-counterfactual cause for reaching $\mathit{effect}$. 

Considering the distance function $\dhamm$ with the labels of the states as in Figure \ref{fig:example_traffic}, we get a different result:
The trace of $\pi$ is $abcd$. The paths that avoid the potential cause have traces $abcd$ and $abad$, respectively. So, the most similar path avoiding $\mathit{cause}$ is the path on the left with trace $abcd$ that also avoids $\mathit{effect}$. So, $\mathit{cause}$  is a $\dhamm$-counterfactual cause on $\pi$ for $\lozenge\mathit{effect}$. Intuitively, this can be understood as saying if the system had avoided $\mathit{cause}$ but otherwise behaved (as similar as possible to) as it did in terms of the produced trace, the effect would not have occurred. 
In particular, if labels represent actions that have been chosen, this is  a reasonable reading of causality.
\markend
\end{example}


%
%
%

\subsection{Checking counterfactual causality in transition systems}
\label{sub:check_TS}

In this section, we provide algorithms to check $d$-counterfactual causality for the three distance functions $\dpref^{\AP}$, $\dhamm$, and $\dlev$.
For these algorithms, a maximal execution $\pi$ of the system has to be given. We assume that $\pi$ is a finite path ending in a terminal state.
The problem to find causes that are small or satisfy other desirable properties is not addressed in this paper and remains as future work.
We will briefly come back to this in the conclusions.

%
%

\paragraph*{Prefix distance.}
First, we consider  $\dpref^{\AP}$-counterfactual causality and hence $\dpref$-counterfactual causality as a special case.

\begin{restatable}{theorem}{checkprefAPTS}
\label{thm:check_dpref_TS}
Let $\cT=(S,\sinit,\to,L)$ be a transition system, $E$ a set of terminal states, $C$ a set of states disjoint from $E$, 
and $\Phi=\lozenge E$ or $\Phi=\Box \neg E$.
Let $\pi=s_0 \dots s_n$ be an execution reaching $C$ and satisfying $\Phi$.
It is decidable in polynomial time whether $C$ is a $\dpref^{\AP}$-counterfactual cause for $\Phi$ on $\pi$.
\end{restatable}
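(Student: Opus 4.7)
The plan is to reduce the question to a small number of standard polynomial-time graph computations on sub-transition-systems of $\cT$. First, I would compute the set $R \subseteq S$ of states from which a maximal path of $\cT$ avoiding $C$ exists, by the usual fixed-point iteration: start from $R_0 = S \setminus C$ and repeatedly discard any non-terminal state that has no successor in the current $R_i$, until stabilisation. By construction, every $s \in R$ is either terminal in $\cT$ or has a successor in $R$, so the maximal paths of $\cT$ avoiding $C$ coincide with the maximal paths of the transition system $\cT_R$ obtained by restricting $\cT$ to $R$. Condition~1 of Definition~\ref{def:conterfactual_TS} is then equivalent to $\sinit \in R$, and I would return \emph{not a cause} immediately if this fails.

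Next, I would determine the minimal distance $d_{\min} = \min_\rho \dpref^{\AP}(\pi,\rho)$ over maximal $C$-avoiding paths $\rho$, which amounts to finding the maximal common trace-prefix length. Writing $\pi = s_0\dots s_n$, I define inductively $A_0 = \{\sinit\}$ and
\[
A_{k+1} \,=\, \{\, t \in R \mid L(t) = L(s_{k+1}) \text{ and } u \to t \text{ for some } u \in A_k \,\} \quad \text{for } 0 \leq k < n.
\]
These sets are computable in polynomial time, and $A_k$ collects exactly the endpoints of paths in $\cT_R$ whose traces match $L(s_0)\dots L(s_k)$. Hence, if $k^{\ast}$ denotes the largest index with $A_{k^{\ast}} \neq \emptyset$, then $d_{\min} = 2^{-(k^{\ast}+1)}$; moreover, the maximal $C$-avoiding paths $\rho$ realising $n(\pi,\rho) = k^{\ast}+1$ are precisely the maximal paths of $\cT_R$ whose first $k^{\ast}+1$ states land successively in $A_0, A_1, \dots, A_{k^{\ast}}$. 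When $k^{\ast} < n$ this follows because $A_{k^{\ast}+1} = \emptyset$ forbids any further matching, and when $k^{\ast} = n$ the full trace of $\pi$ is matched, so any continuation still achieves maximal prefix length $n+1$.

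Finally I would check condition~2, namely that \emph{no} such $\rho$ satisfies $\Phi$. This is equivalent to asking whether no state of $A_{k^{\ast}}$ has a maximal continuation in $\cT_R$ satisfying $\Phi$. For $\Phi = \lozenge E$ this reduces to a reachability query from $A_{k^{\ast}}$ to $E$ in $\cT_R$, and for $\Phi = \Box \neg E$ it is a safety query, namely the existence of a maximal $E$-avoiding path in $\cT_R$ from $A_{k^{\ast}}$; both correspond to the CTL formulas $\exists \lozenge E$ and $\exists \Box \neg E$ checked at every state of $A_{k^{\ast}}$, and are solvable in polynomial time. The overall procedure therefore runs in polynomial time. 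The main delicate point is the correctness argument that maximal paths of $\cT$ avoiding $C$ coincide exactly with maximal paths of $\cT_R$, so that the reachability/safety queries correctly quantify over all candidate counterfactual paths; this is precisely what the fixed-point characterisation of $R$ in the first step guarantees.
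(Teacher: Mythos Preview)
Your proposal is correct and follows essentially the same approach as the paper. The only cosmetic difference is that you precompute the set $R=\{s\mid s\vDash \exists\Box\neg C\}$ once and work inside $\cT_R$, whereas the paper builds the analogous frontier sets $T_k$ without intersecting with $R$ and instead tests $\exists\Box\neg C$ at each step (so your $A_k$ equals the paper's $T_k\cap R$, and your $k^\ast$ coincides with the paper's index $j$); the final reachability/safety check from $A_{k^\ast}$ in $\cT_R$ is exactly the paper's check that every $t\in T_j$ satisfies $\forall(\Phi\to\lozenge C)$.
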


\begin{proof}[Proof sketch]
The following algorithm solves the problem in polynomial time:
 First, we determine the last index $i$  s.t.  $C$ is not reached on any path with trace $L(s_0),\dots,L(s_i)$
and s.t. $C$ is avoidable from  some state that is reachable via a path with trace $L(s_0),\dots,L(s_i)$. In order to that,
we recursively construct sets $T_{j+1}$ of states that are reachable via paths with trace $L(s_0),\dots,L(s_{j+1})$
and check for all states $t\in T_{j+1}$ whether  $t\vDash \exists \Box \neg C$. If no such state exists, we have found the first index $j+1$ such that $C$ is not avoidable anymore after trace $L(s_0),\dots,L(s_{j+1})$; so  we have found $i=j$.
Now, we check whether $t \vDash \forall (\Phi \to \lozenge C)$ for all $t\in T_i$. If this is the case, $C$ is a $\dpref^{\AP}$-counterfactual cause for $E$ on $\pi$; otherwise, it is not.
\end{proof}


\paragraph*{Hamming distance.}

The Hamming distance is only defined for words of the same length. We will hence first consider only transition systems in which all maximal paths have the same length. We can think of such transition systems as being structured in layers with indices $1$ to $k$ for some $k$. Transitions can then only move from a state on layer $i<k$ to a state on layer $i+1$.
Afterwards, we consider a simple generalization of the Hamming distance to words of different lengths.

\noindent\textit{Original Hamming distance.}
Let $\cT=(S,\sinit, \to, L)$ be a transition system  in which all maximal paths have the same length $k$. We annotate all states with the layer they are on: For each state $s\in S$, there is a unique length $n\leq k$ of all paths from $\sinit$ to $s$. We will say that state $s$ lies on layer $n$ in this case.
By our assumption that effect states are terminal, the states $E$ are all located on the last layer $k$. We assume furthermore that all effect states have the same labels.

\begin{restatable}{theorem}{checkhammTS}
\label{thm:checking_dhamm}
Let $\cT=(S,\sinit, \to, L)$ be a transition system  in which all maximal paths have the same length $k$. Let $E$ be a set of terminal states and
let $C\subseteq S$ be a set of states disjoint from $E$. 
Let $\Phi=\lozenge E$ or $\Phi=\Box \neg E$.
Let $\pi=s_0 \dots s_n$ be an execution reaching $C$ and satisfying $\Phi$.
It is decidable in polynomial time whether $C$ is a $\dhamm$-counter\-factual cause for $\Phi$ on $\pi$.
\end{restatable}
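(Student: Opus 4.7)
The plan is to reduce the problem to a single shortest-path-style dynamic programming pass on the layered DAG underlying $\cT$, exploiting the fact that when all maximal paths have length $k$, every competitor $\rho = t_0 t_1 \cdots t_k$ for $\pi = s_0 s_1 \cdots s_k$ satisfies
\[
\dhamm(\pi,\rho) \;=\; \sum_{i=0}^{k} \bigl[L(s_i) \neq L(t_i)\bigr],
\]
so Hamming distance decomposes additively along positions.

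Concretely, I would compute in one bottom-up pass a value $d(t)$ for every state $t$, recording the minimum Hamming distance of a $C$-avoiding path from $\sinit$ to $t$ relative to the corresponding prefix of $L(\pi)$. I would initialise $d(\sinit) = 0$ if $\sinit \notin C$ and $d(\sinit) = \infty$ otherwise; for a state $t$ on layer $i+1$ I would set $d(t) = \infty$ if $t \in C$, and otherwise
\[
d(t) \;=\; \bigl[L(t) \neq L(s_{i+1})\bigr] \;+\; \min\{ d(t') \mid (t',t) \in {\to} \}.
\]
This runs in time linear in the size of $\cT$ and yields $d_{\min} = \min\{d(t) \mid t \text{ terminal}\}$, which by the additive decomposition equals the minimum Hamming distance between $\pi$ and any maximal $C$-avoiding path.

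With $d_{\min}$ in hand I would settle clause~(1) of Definition~\ref{def:conterfactual_TS} by checking $d_{\min} < \infty$. For clause~(2), because $E$ consists of terminal states on layer $k$ all carrying the same label, whether a $C$-avoiding maximal path $\rho$ satisfies $\Phi$ is determined solely by whether its endpoint lies in $E$. Hence, in the case $\Phi = \lozenge E$, $C$ is a $\dhamm$-counterfactual cause iff $\min\{d(t) \mid t \in E\} > d_{\min}$, and in the case $\Phi = \Box \neg E$, iff $\min\{d(t) \mid t \text{ terminal},\ t \notin E\} > d_{\min}$; both tests are polynomial-time.

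The main point requiring care is the bookkeeping of $C$-avoidance: a state outside $C$ may be reachable only through predecessors lying in $C$, and one must not accidentally credit such a state with a finite distance. Setting $d(t) = \infty$ at every $t \in C$ handles this uniformly, since such $t$ then contribute nothing to the minimum taken at their successors and no path visiting $C$ is ever counted in any $d(t)$. Once this is in place, correctness follows directly from the additive form of Hamming distance on layered runs.
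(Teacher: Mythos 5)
Your proposal is correct and follows essentially the same route as the paper: assign each state a $0/1$ weight according to whether its label matches the corresponding position of $L(\pi)$, exclude $C$ (the paper removes the states, you set $d=\infty$), and compare the minimal weight of a $C$-avoiding path reaching $E$ (resp.\ a terminal state outside $E$) with the overall minimum over terminal states. Your layer-by-layer dynamic program is just an explicit implementation of the paper's shortest-path computations, so the argument matches the paper's proof.
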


\begin{proof}[Proof sketch]
We sketch the proof for the case that $\Phi=\lozenge E$. 
 We  equip the states in $S$ with a weight function $\wgt\colon S\to \{0,1\}$ such that the $\dhamm$-distance of a path to $\pi$ is equal to the accumulated weight of that path. 
 A state $t$ on layer $i$ gets weight $1$ if its label is different to $L(s_i)$. Otherwise, it gets weight $0$.
Now, we can check  whether $C$ is a $\dhamm$-counterfactual cause, as follows: We remove all states in $C$ and compute a shortest (i.e., weight-minimal) path $\zeta$ to $E$ and a shortest path $\xi$ to any terminal state. If the weight of $\xi$ is lower than the weight of $\zeta$, the paths avoiding $C$ that are $\dhamm$-closest to $\pi$ do not reach $E$ and $C$ is a $\dhamm$-counterfactual cause for $\lozenge E$ on $\pi$; otherwise, it is not.
\end{proof}

\begin{remark}\label{rem:weightedHamming}
The Hamming distance between paths could easily be extended to account for different levels of similarities between labels: Given a  similarity metric $d$ on the set of labels, one could define the distance between two paths $\pi=s_1\dots s_k$ and $\rho=t_1\dots  t_k$ as $\dhamm^\prime(\pi,\rho) \eqdef \sum_{i=1}^k d(s_i,t_i)$. 
 The algorithm  in the proof of Theorem \ref{thm:checking_dhamm} can now easily be adapted to this modified Hamming distance by defining the weight function on the transition system in the obvious way.
\end{remark}

\noindent\textit{Generalized Hamming distance.}
Of course, the assumption that all paths in a transition system have the same length that we used in the previous section is quite restrictive.
Hence, we now consider the following generalized version $\dghamm$ of the Hamming distance: For words $w=w_1\dots w_n$ and $v=v_1 \dots v_m$, we define
\[
\dghamm(w,v) \eqdef 
\begin{cases}
\dhamm(w,v_{[1:n]}) + (m{-}n) & \text{if $n\leq m$,} \\
\dhamm(w_{[1:m]},v) + (n{-}m)& \text{otherwise.} 
\end{cases}
\]
So $\dghamm$ takes a prefix of the longer word of the same length as the shorter word, computes the Hamming distance of the prefix and the shorter word, and adds the difference in length of the two words.

\begin{restatable}{theorem}{checkghammTS}
Let $\cT=(S,\sinit,\to,L)$ be a transition system, $E$ a set of terminal states, and $C$ a set of states disjoint from $E$.
Let $\Phi=\lozenge E$ or $\Phi=\Box \neg E$.
Let $\pi=s_0 \dots s_n$ be an execution reaching $C$ and satisfying $\Phi$.
It is decidable in polynomial time whether $C$ is a $\dghamm$-counter\-factual cause for $\Phi$ on $\pi$.
\end{restatable}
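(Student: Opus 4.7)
The plan is to reduce $\dghamm$-counterfactual causality to two shortest-path computations in a single weighted product graph that tracks positions along $\pi$ and internalises the length-difference penalty baked into $\dghamm$, thereby extending the layered construction of Theorem~\ref{thm:checking_dhamm} to paths of differing lengths. Write $\pi = s_0 s_1 \dots s_n$. I build a product graph $\cT'$ with state set $V = S \times \{0, 1, \dots, n\}$ and initial state $(s_0, 0)$. For each transition $s \to s'$ of $\cT$, I add the edge $(s, i) \to (s', i{+}1)$ with weight $\mathbf{1}[L(s') \neq L(s_{i+1})]$ whenever $i < n$, and a ``stay-at-$n$'' edge $(s, n) \to (s', n)$ with weight $1$. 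The first kind of edge charges a Hamming mismatch at position $i{+}1$ while $\rho$ still fits within $\pi$'s length; the second kind charges $1$ for every additional step $\rho$ takes beyond position $n$.

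Correctness of the encoding follows from a case split on the length $m$ of the simulated path $\rho = t_0 t_1 \dots t_m$. Any finite maximal path in $\cT$ is simulated by a unique path in $\cT'$ that ends at $(t_m, \min(m, n))$ with $t_m$ terminal in $\cT$. If $m \geq n$, the summed edge weights equal the Hamming distance between $L(\pi)$ and the length-$(n{+}1)$ prefix of $L(\rho)$, plus $(m - n)$, which is exactly $\dghamm(\pi, \rho)$. If $m < n$, the summed edge weights equal the Hamming distance between the length-$(m{+}1)$ prefix of $L(\pi)$ and $L(\rho)$, and I recover $\dghamm(\pi, \rho)$ by adding a virtual terminal penalty of $n - m$ at $(t_m, m)$. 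Infinite maximal paths avoiding $C$ contribute $\dghamm$-distance $\infty$ to the finite $\pi$ and so never realise the minimum, provided some finite $C$-avoiding path exists as witnessed by condition~1 of Definition~\ref{def:conterfactual_TS}; the degenerate case is handled separately.

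With this encoding in place, I delete every product state $(c, i)$ with $c \in C$ from $\cT'$ and run Dijkstra from $(s_0, 0)$ on the trimmed graph; the minimum $\dghamm$-distance from $\pi$ to a maximal path avoiding $C$ is
\[
d_{\min} = \min\bigl\{\,\mathrm{dist}_{\cT'}\bigl((s_0, 0), (t, i)\bigr) + (n - i)^+ \ \big|\ t \text{ terminal in } \cT \,\bigr\},
\]
where $(x)^+ = \max(x, 0)$. Restricting the same minimum to terminal product states with $t \in E$ (when $\Phi = \lozenge E$) or $t \notin E$ (when $\Phi = \Box \neg E$; since $E$ consists of terminals, no intermediate state of $\rho$ can lie in $E$) yields $d_{\min, \Phi}$. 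By Definition~\ref{def:conterfactual_TS}, $C$ is a $\dghamm$-counterfactual cause for $\Phi$ on $\pi$ if and only if $d_{\min, \Phi} > d_{\min}$. Since $\cT'$ has $O(|S| \cdot n)$ states and polynomially many edges with nonnegative integer weights, the entire procedure runs in polynomial time. The main subtlety I expect is correctly accounting for the trailing penalty $n - i$ when $m < n$: it must be added as a virtual cost only at candidate terminal product states, rather than folded into graph edge weights, since otherwise Dijkstra could apply it at non-terminal intermediate states and mis-price the shortest path. Once this post-processing is handled correctly, the remainder of the argument mirrors the Hamming-distance case.
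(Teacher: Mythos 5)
Your construction is essentially the paper's: the same layered product $S\times\{0,\dots,n\}$ with $0/1$ mismatch weights on forward edges, weight-$1$ edges within the last copy for paths longer than $\pi$, and the $n-i$ penalty for shorter paths (the paper realizes it as explicit edges of weight $n-i$ from terminal states to the last copy, you add it as post-processing at terminal product states—an equivalent choice), followed by deleting the $C$-states and comparing two shortest-path values exactly as in the Hamming case. The only point you leave implicit, as you acknowledge, is the degenerate situation where every $C$-avoiding maximal path is infinite (then for $\Phi=\lozenge E$ the set $C$ is a cause whenever condition~1 of Definition~\ref{def:conterfactual_TS} holds, which a simple reachability check on $\cT$ with $C$ removed settles); the paper handles this corner case only about as briefly.
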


\begin{proof}[Proof sketch]
We adapt the proof of Theorem \ref{thm:checking_dhamm}:
We take $|\pi|$-many copies of the state space $S$ an let transitions lead from one copy to the next.
In the $i$th copy states with the same label as $s_i$ get weight $0$ and all other states get weight $1$.
Furthermore, we add transitions with weight $|\pi|-i$ from terminal states in a copy $i<|\pi|$ to the same state in the last copy to account for path that are shorter than $\pi$. 
The weight $|\pi|-i$ corresponds to the value added in the generalized Hamming distance when paths of different length are compared.
To account for paths longer than $\pi$, we furthermore allow transitions with weight $1$ within the last copy. These transitions are then taken until a terminal state is reached.
With these adaptations, the proof can be carried out analogously to  the proof of Theorem \ref{thm:checking_dhamm}.
\end{proof}

\paragraph*{Levenshtein distance.}
 The idea to check $\dlev$-counterfactual causality is to construct a weighted transition system to be able to check causality via the computation of shortest paths as for the Hamming distance. 
So, let $\cT=(S,\to,\sinit,L)$ be a transition system labeled by $L$ with symbols from  the alphabet $\Sigma=2^{\AP}$.
Let $E$ be a set of terminal  states and $C$ a set of states disjoint from $E$.
Let $\Phi=\lozenge E$ or $\Phi=\Box \neg E$.
Let $\pi=s_1\dots s_n$ be a maximal path reaching $C$ and satisfying $\Phi$.
The  transition system we construct in the sequel  contains transitions corresponding directly to the edit operations insertion, deletion and substitution. A path in the constructed transition system  then corresponds
to an edit sequence between the trace of $\pi$ and the trace of another path in $\cT$.
This construction  shares some similarities with the construction of Levenshtein automata \cite{schulz2002fast} that accept all words with a Levenshtein distance below a given constant $c$ from a fixed word $w$.

Now, we formally construct the new weighted transition system $\cT_{\dlev}^\pi$:
The state space of this transition system is $S\times \{1,\dots, n\}$ with the initial state $(\sinit,1)$. The labeling function  is not used.
In $\cT_{\dlev}^\pi$, we allow the following transitions  labelled with letters from the edit alphabet~$\Gamma$:
\begin{enumerate}
\item
 a transition from $(s,i)$ to $(t,i+1)$ labeled with $(L(s_{i+1}),L(t))$ for each $(s,t)\in \to$ and $i<n$, 
\item
 a transition from $(s,i)$ to $(t,i)$ labeled with $(\varepsilon,L(t))$ for each $(s,t)\in \to$ and $i\leq n$,
\item 
 a transition from $(s,i)$ to $(s,i+1)$ labeled with $(L(s_{i+1}), \varepsilon)$ for each $s\in S$ and $i<n$.
\end{enumerate}
Note that the terminal states in $\cT_{\dlev}^\pi$ are all contained in $S\times\{n\}$.
Any maximal path in $\cT_{\dlev}^\pi$ corresponds to a maximal path $\rho$ in $\cT$.
This path $\rho$ is obtained by moving from a state $s$ to a state $t$ in $\cT$ whenever a corresponding transition of type 1 or 2 is taken in $\cT_{\dlev}^\pi$.
Transitions of type 3 do not correspond to a step in $\cT$ and stay in the same state.

Furthermore, given a finite path $\tau$  in $\cT_{\dlev}^\pi$ and the corresponding path $\rho=t_1\dots t_k$  in $\cT$, the labels of the transitions of $\tau$ form an edit sequence
for the words $L(s_2)\dots L(s_n)$ and $L(t_2)\dots L(t_k)$. 
To see this, observe that, for each $i>1$, whenever the copy $S\times\{i\}$ is entered in $\cT_{\dlev}^\pi$, the label of the transition contains $L(s_i)$ in the first component; if a transition stays in a copy $S\times\{i\}$, the label contains $\varepsilon$ in the first component. So, the projection onto the first component of the labels of the transitions of $\tau$  is indeed $L(s_2)\dots L(s_n)$, potentially with $\varepsilon$s in between. In the second component, whenever a transition of type 1 or 2 is taken, the label is simply the label of the corresponding state in $\rho$. Transitions of type 3 have $\varepsilon$ in the second component of their label.
Note here that $\rho$ and $\pi$ both start in $\sinit$ and that we could hence add $(L(\sinit),L(\sinit))$ to the beginning of the  edit sequence to obtain an edit sequence for the full traces of $\tau$ and $\rho$.
Note that also for infinite paths $\tau=t_1t_2\dots$  in $\cT_{\dlev}^\pi$ the transition labels provide an edit sequence for the words
$L(s_2)\dots L(s_n)$ and $L(t_2)L(t_3)\dots$.
Vice versa,  a finite maximal path $\rho=t_1\dots t_k$  in $\cT$ together with an edit sequence $\gamma$ for $L(s_2)\dots L(s_n)$ and $L(t_2)\dots L(t_k)$  provides  a maximal path 
$\tau$ in 
$\cT_{\dlev}^\pi$: The occurrences of $\varepsilon$ in $\gamma$ dictate which type of transition to take while the path $\rho$ tells us which state to move to.
As $\gamma$ projected to the first component contains $L(s_2)\dots L(s_n)$ enriched with $\varepsilon$s exactly $n-1$ transitions of type 1 or 3 are taken in $\tau$ obtained in this way and we indeed reach the last copy $\cT\times\{n\}$. As $\rho$ ends in a terminal state $t_k$, we furthermore reach the terminal state $(t_k,n)$. 
Analogously, an infinite path $\rho$  in $\cT$ together with an edit sequence $\gamma$ for $L(\pi)$ and $L(\rho)$ yields an infinite path in $\cT_{\dlev}^\pi$.

Based on these observations, we equip $\cT_{\dlev}^\pi$ with a weight function $\wgt$ on transitions: Transitions labeled with $(\sigma,\sigma)$ for a $\sigma\in \Sigma$
get weight $0$, the remaining transitions get weight $1$.

\begin{restatable}{theorem}{checklevTS}
	\label{thm:checking_Levenstein}
Let $\cT=(S,\sinit,\to,L)$ be a transition system, $E$ a set of terminal states, and $C$ a set of states disjoint from $E$.
Let $\Phi=\lozenge E$ or $\Phi=\Box \neg E$.
Let $\pi=s_0 \dots s_n$ be an execution reaching $C$ and satisfying $\Phi$.
It is decidable in polynomial time whether $C$ is a $\dlev$-counterfactual cause for $\Phi$ on $\pi$.
\end{restatable}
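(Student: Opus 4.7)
The plan is to exploit the weighted transition system $\cT_{\dlev}^\pi$ constructed just above and reduce $\dlev$-counterfactual causality to two shortest-path computations. First, I would establish a correspondence lemma: for every finite maximal path $\rho$ in $\cT$, the minimum weight over all maximal paths $\tau$ in $\cT_{\dlev}^\pi$ that correspond to $\rho$ equals $\dlev(L(\pi),L(\rho))$. One direction is immediate, since each such $\tau$ induces an edit sequence for the traces whose $\wgt$ equals the number of nonzero-weight transitions of $\tau$. The other direction follows because any edit sequence for $L(\pi)$ and $L(\rho)$ can be read off transition-by-transition to build a matching $\tau$ of the same weight, as already argued in the paragraphs preceding the theorem. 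Because $\pi$ is finite, any $\rho$ with $\dlev(L(\pi),L(\rho))<\infty$ must also be finite, which corresponds exactly to maximal paths of $\cT_{\dlev}^\pi$ terminating in some $(t,n)$ with $t$ terminal in $\cT$; infinite maximal paths of $\cT$ contribute only infinite Levenshtein distance and can safely be ignored for the minimization.

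With the correspondence in place, the algorithm runs on the subgraph $\cT_{\dlev}^\pi[\neg C]$ obtained by deleting every state $(c,i)$ with $c\in C$. First, a standard CTL-check on $\cT$ verifies that some maximal path avoiding $C$ exists. Then we compute $d_{\min}$, the minimum weight of a path from $(\sinit,1)$ in $\cT_{\dlev}^\pi[\neg C]$ to any terminal $(t,n)$ with $t$ terminal in $\cT$. Finally, we compute a second shortest-path value $d_\Phi$: for $\Phi=\lozenge E$, the minimum weight to a terminal $(t,n)$ with $t\in E$; for $\Phi=\Box\neg E$, the minimum weight to a terminal $(t,n)$ with $t\notin E$. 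The counterfactual condition is then equivalent to $d_{\min}<\infty$ together with $d_\Phi>d_{\min}$, because that captures exactly the statement that every distance-minimizing $\rho$ avoiding $C$ violates $\Phi$. All three subroutines run in polynomial time: $\cT_{\dlev}^\pi$ has $O(|S|\cdot n)$ states, $O(|{\to}|\cdot n)$ edges and edge weights in $\{0,1\}$, so each shortest-path query can be solved by $0/1$-BFS (or Dijkstra) in polynomial time.

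The main obstacle I expect is the careful formal verification of the correspondence: one must check that taking the \emph{minimum-weight} $\tau$ in $\cT_{\dlev}^\pi[\neg C]$ among those projecting to a fixed $\rho$ realizes $\dlev(L(\pi),L(\rho))$, and that the overall shortest path in $\cT_{\dlev}^\pi[\neg C]$ to a terminal vertex agrees with $\min_\rho \dlev(L(\pi),L(\rho))$ over finite maximal $\rho$ avoiding $C$. The distinction between $\Phi=\lozenge E$ and $\Phi=\Box\neg E$ is then handled uniformly by the choice of target set in the second shortest-path query, and the existence check (clause~1 of Definition~\ref{def:conterfactual_TS}) corresponds to $d_{\min}<\infty$, giving a single polynomial-time procedure that decides $\dlev$-counterfactual causality.
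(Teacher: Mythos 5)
Your reduction is essentially the paper's: build $\cT_{\dlev}^\pi$, delete the copies of $C$, and decide causality by comparing two $0/1$-weighted shortest-path values, with the correspondence between product paths and edit sequences as the key lemma. In the regime $d_{\min}<\infty$ your criterion $d_\Phi>d_{\min}$ is correct for both $\Phi=\lozenge E$ and $\Phi=\Box\neg E$. The genuine gap is your claim that infinite maximal paths of $\cT$ ``can safely be ignored for the minimization'' and the resulting requirement $d_{\min}<\infty$ in the final equivalence (together with the remark that clause~1 of Definition~\ref{def:conterfactual_TS} ``corresponds to $d_{\min}<\infty$''). Clause~1 only asks for \emph{some} maximal path avoiding $C$, and that path may be infinite. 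If no \emph{finite} maximal path avoids $C$ but an infinite one does, then every maximal path avoiding $C$ has distance $\infty$ to the finite path $\pi$, so \emph{all} of them are distance-minimal; since $E$ consists of terminal states, none of these infinite paths satisfies $\lozenge E$, and by Definition~\ref{def:conterfactual_TS} the set $C$ \emph{is} a $\dlev$-counterfactual cause for $\Phi=\lozenge E$ --- while your criterion outputs ``no'' because $d_{\min}=\infty$. A concrete instance: $\sinit\to c\to e$ with $c\in C$ and $e\in E$ terminal, plus $\sinit\to u$ with a self-loop on $u$ and no terminal state reachable from $u$; take $\pi=\sinit\, c\, e$. (For $\Phi=\Box\neg E$ your criterion happens to remain correct in this corner case, since the infinite minimizers satisfy $\Box\neg E$, so the flaw is confined to the reachability effect.)

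The paper's proof handles exactly these situations by explicit case distinctions after removing the $C$-copies: if there are still paths but none reaches $E\times\{n\}$ (in particular if no terminal vertex is reachable at all), then $C$ is declared a cause for $\lozenge E$, and a symmetric analysis is done for $\Box\neg E$ when the respective terminal targets become unreachable. The fix to your argument is small: keep the separate check of clause~1 on $\cT$ (do not identify it with $d_{\min}<\infty$), and for $\Phi=\lozenge E$ declare $C$ a cause iff clause~1 holds and either $d_{\min}=\infty$ or $d_\Phi>d_{\min}$, while for $\Phi=\Box\neg E$ your criterion ($d_{\min}<\infty$ and $d_\Phi>d_{\min}$) can stay. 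Everything else --- the correspondence lemma, restricting the shortest-path targets to vertices $(t,n)$ with $t$ terminal in $\cT$ (which correctly avoids the dead ends created by deleting $C$), and the polynomial-time bound --- is sound and matches the paper's argument.
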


\begin{proof}[Proof sketch]
With the construction of the weighted transition system $\cT_{\dlev}^\pi$ above, the check can be done via the computation of shortest paths as for the Hamming distance above.
\end{proof}

\subsection{Relation to Halpern-Pearl causality}
\label{sec:HP-causality}

In the sequel, we want to demonstrate how our definition of counterfactual causality relates to Halpern-Pearl-style definitions of causality in \emph{structural equation models} \cite{HalpernP04,HalpernP05,Halpern15}.
A structural equation model consists of variables $X_1,\dots, X_n$ with finite domains   that are governed by   equations 
$
X_{i}=f_i(X_1,\dots, X_{i-1}, C)
$
for all $i\leq n$. Here, $f_i$ is an arbitrary function for each $i$ and $C$ is an input parameter for the context. For our consideration, the context $C$ does not play a role and we will hence omit it in the sequel.
So,  the value of variable $X_{i}$ depends on the value of (some of) the variables with lower index and the dependency is captured by the function $f_i$.
 Halpern and Pearl  use  \emph{interventions}
to define causality for an effect $E$, which is a set of valuations of $X_1, \dots, X_n$. An intervention puts the value of a variable $X_{i}$ to some $\alpha$ that is different from $f_i(X_1,\dots,X_{i-1})$, i.e., disregarding the equation $f_i$. Afterwards, the subsequent variables are evaluated as usual or by further interventions. Halpern and Pearl define:
\begin{definition}
Let  $f_1,\dots, f_n$ over variables $X_1,\dots, X_n$  be a structural equation model as above and let $E$ be an effect set of valuations such that the valuation of $X_1,\dots, X_n$ obtained by the structural equation model belongs to $E$. A \emph{but-for-cause} is a minimal subset $X\subseteq\{X_1,\dots, X_n\}$ with the following property:
There are values $\alpha_x$ for $x\in X$ such that putting variables $x\in X$ to $\alpha_x$ by intervention leads to a valuation of $X_1,\dots,X_n$ not exhibiting the effect $E$. More precisely, letting $t_i$ be the valuation $[X_1=w_1,\dots,X_{i-1}=w_{i-1}]$, where
$
w_i =
 f_{i}(w_1,\dots,w_{i-1})$   
 if $X_i\not\in X$, and 
$w_i = \alpha_{X_i}$ if $X_i\in X$, 
we get that $t_{n+1}\not\in E$.
\end{definition}

In order to compare this to our notion of counterfactual causes, we view structural equation models as tree-like transition system $\cT$:
The nodes at level $i$ are valuations for the variables $X_1,\dots, X_{i-1}$. At each node $s$ at level $i$, two actions are available: The action $\mathsf{default}$ moves to the state on level $i+1$ where the valuation in $s$ is extended by setting $X_i$ to the value $f_i(X_1,\dots,X_{i-1})$ where the values for 
$X_1,\dots,X_{i-1}$ are taken from the valuation in $s$. The action $\mathsf{intervention}$ extends the valuation of $s$ by setting $X_i$ to any other value than the action $\mathsf{default}$. 
The labelling in $\cT$ assigns the label $\{\mathsf{intervention}\}$ to all states that are reached by the action $\mathsf{intervention}$. The remaining states reached via $\mathsf{default}$ and the initial state with the empty valuation get the label $\emptyset$.
Given an effect $E$ as a set of valuations, we interpret this as the corresponding set of leaf states in $\cT$. 
The default path $\pi$ that always chooses the action $\mathsf{default}$ corresponds to evaluating the equations in the structural equation model without interventions. 
 We can now capture but-for-causality with $\dhamm$-counterfactual causality along the default path $\pi$ if all variables  are Boolean:

\begin{restatable}{proposition}{butforcausality}
\label{prop:butfor}
Let  $f_1,\dots, f_n$ over Boolean variables $X_1,\dots, X_n$  be a structural equation model and let $E$ be an effect set of valuations.
Let $X$ be a but-for-cause for $E$. Let 
$C_X$ be the set of all nodes in the transition system $\cT$ which are reached by a $\mathsf{default}$-transition for a variable $x\in X$.
Then, $C_X$ is a $\dhamm$-counterfactual cause for $E$ in $\cT$ on the default path $\pi$.
\end{restatable}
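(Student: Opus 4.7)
My plan is to show that the minimum-distance $C_X$-avoiding path with respect to $\dhamm$ is unique and matches exactly the valuation produced by the but-for recursion, which by hypothesis lies outside $E$.

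First I would observe the structural properties of $\cT$. It is a full binary tree of depth $n$, so every maximal path has $n+1$ states and $\dhamm$ is well-defined on all pairs of maximal paths. The default path $\pi$ has the all-$\emptyset$ trace, so for any maximal path $\rho$, $\dhamm(\pi,\rho)$ is simply the number of $\mathsf{intervention}$-transitions taken along $\rho$. Moreover $X$ must be nonempty (an empty $X$ would make the but-for intervention coincide with the default evaluation, contradicting that $t_{n+1} \notin E$ while the default valuation is in $E$), so $\pi$ visits $C_X$ at every level for $X_i\in X$ and satisfies $\lozenge E$; the preconditions of Def.~\ref{def:conterfactual_TS} are therefore in place.

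Next I would characterize the $C_X$-avoiding paths. A maximal path $\rho$ satisfies $\Box\neg C_X$ iff at each level $i$ with $X_i\in X$ it takes the $\mathsf{intervention}$ action, since choosing $\mathsf{default}$ at such a level lands in $C_X$ by construction. Consequently $\dhamm(\pi,\rho) \geq |X|$ for every such $\rho$, with equality iff $\rho$ additionally uses $\mathsf{default}$ at every level for $X_i \notin X$.

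The key step is then to show that among $C_X$-avoiding paths there is exactly one $\rho^*$ attaining $\dhamm(\pi,\rho^*)=|X|$, and that it ends in the but-for valuation $t_{n+1}$. Here the Boolean hypothesis is essential: at each level $i$ with $X_i\in X$ the single intervention value available is $\neg f_i$ evaluated at the current valuation, which is exactly $\alpha_{X_i}$ as used in the but-for recursion; at levels with $X_i\notin X$ both $\rho^*$ and the recursion apply $f_i$ to the current valuation. A straightforward induction on $i$ then shows that the valuation carried by the level-$i$ state along $\rho^*$ coincides with $t_i$, so $\rho^*$ ends in $t_{n+1} \notin E$. Since $\rho^*$ is the unique $C_X$-avoiding path of minimum distance to $\pi$, condition~2 of Def.~\ref{def:conterfactual_TS} holds, and its existence also yields condition~1; hence $C_X$ is a $\dhamm$-counterfactual cause for $E$ on $\pi$.

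The main subtlety, rather than a deep obstacle, lies in carrying out this inductive correspondence cleanly and noting where Booleanness is used: with larger domains the intervention value at a level with $X_i\in X$ would not be uniquely determined by the transition system, so minimum-distance $C_X$-avoiding paths could reach valuations other than $t_{n+1}$ and potentially re-enter $E$, breaking the equivalence to but-for causality.
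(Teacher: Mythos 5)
Your proof is correct and follows essentially the same route as the paper's: paths avoiding $C_X$ must take the intervention at every level in $X$, hence have $\dhamm$-distance at least $|X|$ to $\pi$; by Booleanness the unique path attaining distance $|X|$ realizes exactly the but-for valuation and thus misses $E$. Your explicit induction on levels and the observation that $X\neq\emptyset$ are just slightly more detailed renderings of steps the paper treats implicitly.
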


For non-Boolean variables, the definitions of but-for-causes and of $\dhamm$-counterfactual causes have one significant difference:
A but-for-cause $X$ merely requires the existence of values to assign to the variables in $X$ by $\mathsf{intervention}$ such that the effect is avoided. A $\dhamm$-counterfactual cause $C$ in $\cT$  requires that for all possible interventions on the variables in $X$, the effect is avoided. This universal quantification originates from the universal 
quantification over most similar worlds in the Stalnaker-Lewis semantics of counterfactual causality.

The minimality requirement of but-for-causes  does not have a counterpart in the  definition of $d$-counter\-factual causes.
This allows us to assert that a candidate set of states $C$ is a $d$-counterfactual cause for an effect even if it contains redundancies.
When trying to find $d$-counterfactual causes for a given effect, on the other hand, of course trying to find (cardinality-)minimal causes is a reasonable option.

Besides but-for causality, 
we can also capture actual causality as in \cite{Halpern15} in our framework in the case of Boolean structural equation models.
This is demonstrated in Appendix \ref{app:actual_causality}.

\section{Counterfactual causality in reachability games}

The counterfactual notion of causality introduced and investigated in the previous section can be applied to reachability games $\game$: 
We take the perspective of a player $\Pi$. 
Given a strategy $\sigma$ for the 
opponent and a play in which $\Pi$ lost, we  apply the definition to the transition system obtained from $\sigma$ and $\game$ and the given play. 
This allows us to analyze whether avoiding a certain set of states while playing against strategy $\sigma$ as similarly as possible to the given play would have allowed $\Pi$ to win.
Depending on whether we take the perspective of $\ReachPl$ or $\SafePl$, the effect that the player loses the game is a safety or reachability property, which we considered as effects in transition systems.
 The need to be given a strategy for the opponent, however, constitutes a major restriction to the usefulness of this approach.

\subsection{$D$-counterfactual causality}
\label{sec:counterfactual_games}

We provide a definition of counterfactual causality in reachability games in the sequel in which we only need the strategy $\sigma$ with which the player $\Pi$  played and are interested in  why the strategy $\sigma$  allows the opponent to win the game.
Since both players have optimal MD-strategies in a reachability game, we restrict ourselves to 
MD-strategies in the definition. 

\begin{definition}
	\label{def:game_conterfactual-cause}
Let $\game$ be a reachability game with target set $\LocsT$. Let $\Pi$ be one of the two players $\ReachPl$ or $\SafePl$ and let $\sigma$ be a MD-strategy for player $\Pi$. Let $C$ be a set of locations disjoint from $\LocsT$. Let $D$ be a distance function on MD-strategies.
We say that $C$ is a $D$-counterfactual cause for the fact that $\Pi$ loses using $\sigma$  if 
\begin{enumerate}
\item there are $\sigma$-plays that reach $C$ on which $\Pi$ loses,
\item there is an MD-strategy $\tau$ for player $\Pi$ that avoids $C$ (i.e., there is no $\tau$-play reaching $C$),
\item  all MD-strategies $\tau$ for player $\Pi$, that avoid $C$ and that have minimal $D$-distance to $\sigma$ among the strategies avoiding $C$, are winning for $\Pi$.
\end{enumerate} 
\end{definition}

If we take the perspective of player $\Pi$ in game $\game$ where the opponent $\bar\Pi$ does not control any locations, MD-strategies for $\Pi$ satisfying condition 1 of the definition are essentially simple paths satisfying a safety or reachability effect property (with additional information on the states that are not visited by the path). To some extent, the definition can now be seen as a generalization of the definition for transition systems for suitable distance functions $D$:
We say a strategy distance function $D$  \emph{generalizes a path distance function} $d$ if in games where $\bar\Pi$ does not control any location, for all strategies $\sigma, \tau$ for $\Pi$, we have  $D(\sigma,\tau) = d(\pi_{\sigma},\pi_\tau)$ where $\pi_\sigma$ and $\pi_\tau$ are the unique $\sigma$- and $\tau$-plays.
The definition that $C$ is a  $D$-counterfactual cause for $\sigma$ losing the game agrees with the definition that $C$ is a  $d$-counterfactual cause on $\pi_{\sigma}$ for 
$\lozenge \LocsT$ or $\Box \neg \LocsT$ 
 in acyclic games  in this case. In cyclic games, there is one caveat: The definition for games quantifies only over MD-strategies which induce a play that is a simple path or simple lasso. The definition for transition systems quantifies over more complicated paths as well.

\paragraph*{Hausdorff distance $\dpref^H$ based on the prefix metric $\dpref$.}
A way to obtain a strategy distance function generalizing a given path distance function is the use of the Hausdorff distance on the set of plays of the strategies \cite[Section~6.2.2]{DelfourZ-11}:
	Let \winstrategy and \loosestrategy be two MD-strategies, and 
	$d$ be a distance function over plays. 
	The Hausdorff distance $d^H$ based on $d$  is defined by 
\[	{d^H}(\loosestrategy, \winstrategy) = 
	\max \left\{ \sup_{ \text{ $\loosestrategy$-plays }\looseplay} \,
	\inf_{ \text{ $\winstrategy$-plays }\winplay} \,
	d(\looseplay, \winplay),  
	\sup_{ \text{  $\winstrategy$-plays }\winplay} \,
	 \inf_{ \text{  $\loosestrategy$-plays } \looseplay} \,
	d(\looseplay, \winplay) \right\}.
	\]
	Let us consider the Hausdorff distance $\dpref^H$ based on the prefix metric $\dpref$ assuming that all states have a unique label.
For two strategies $\sigma$ and $\tau$ for $\SafePl$, the distance $\dpref^H(\sigma,\tau)$ 
is $2^{-n}$ where $n$ is the least natural number such that there is a prefix of length $n$ of a $\tau$-play that is not a prefix of a $\sigma$-play, or vice versa.
In order to find strategies that are as similar as possible to a given strategy $\sigma$, we hence have to consider strategies that follow $\sigma$ for as many steps as possible.
This leads to an algorithm for checking $\dpref^H$-counterfactual causality in reachability games that shares some similarities with the algorithm for checking $\dpref^{\AP}$-counterfactual causality in transition systems. 
This algorithm is given in Appendix~\ref{app:proof_counterfactual_games}.

\begin{restatable}{theorem}{checkprefgames}
	\label{thm:checking-pref-game}
Let $\game=(V,\locinit,\Trans)$ where $V=\LocsReach\uplus\LocsSafe\uplus\LocsT$ be a reachability game with target set $\LocsT$ and $\sigma$ a MD-strategy for player $\Pi$. Let $C$ be a set of locations disjoint from $\LocsT$.
We can check in polynomial time whether $C$ is a $\dpref^H$-counterfactual cause for the fact that $\Pi$ is losing using $\sigma$  in $\game$.
\end{restatable}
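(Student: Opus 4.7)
The plan is to verify each of the three conditions of Definition~\ref{def:game_conterfactual-cause} by a separate polynomial-time subroutine. Conditions~1 (existence of a losing $\sigma$-play reaching $C$) and~2 (existence of an MD-strategy avoiding $C$) reduce to standard tasks: the first is two reachability checks in the graph $\game^\sigma$, and the second is a standard safety game on $\game$.

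The core work is to compute $d_{\min} = \min\{\dpref^H(\sigma,\tau) \mid \tau \text{ MD-strategy avoiding } C\}$ and then verify condition~3. For each vertex $v$ reachable under $\sigma$ let $\ell(v)$ denote the minimum depth at which $v$ occurs in a $\sigma$-play (so $\ell(\locinit)=0$), and let $P_k(\mu)$ denote the set of length-$k$ play prefixes (as vertex sequences) produced by an MD-strategy $\mu$. I would prove the following vertex-wise characterisation:
\[
P_k(\sigma) = P_k(\tau) \text{ for all } k \leq K \quad \iff \quad \tau(v) = \sigma(v) \text{ for every } \Pi\text{-vertex } v \text{ with } \ell(v) \leq K-2.
\]
The ``$\Leftarrow$'' direction is a direct induction on $k$: agreement on $\Pi$-vertices at depth $\leq k-2$ is exactly what forces each length-$k$ prefix produced by one strategy to be produced by the other (opponent-controlled vertices impose no constraint). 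The ``$\Rightarrow$'' direction follows, for any $\Pi$-vertex $v$ with $\ell(v)=m\leq K-2$, by considering a length-$(m+2)$ $\sigma$-prefix ending with $v\,\sigma(v)$, which must also be a $\tau$-prefix, forcing $\tau(v)=\sigma(v)$. Combined with the Hausdorff definition, this yields $\dpref^H(\sigma,\tau)\leq 2^{-K}$ iff $\tau$ agrees with $\sigma$ on all $\Pi$-vertices with $\ell(v)\leq K-2$.

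Using this equivalence, $d_{\min}$ is computed as follows. First compute $\ell(\cdot)$ by a BFS on $\game^\sigma$. Then determine the largest integer $N_{\max}$ such that there exists an MD-strategy $\tau$ that agrees with $\sigma$ on $V^\star = \{v \in \Locs : v \text{ is a } \Pi\text{-vertex with } \ell(v)\leq N_{\max}\}$ and avoids $C$. Feasibility in $N$ is monotone non-increasing, so a binary search over the at most $|\Locs|$ distinct values of $\ell$ suffices, where each existence test is a standard safety game on the sub-game obtained by fixing $\sigma$'s decisions at the constrained $\Pi$-vertices. Setting $K_{\max}=N_{\max}+2$ then gives $d_{\min}=2^{-K_{\max}}$.

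Finally, by the characterisation and the maximality of $N_{\max}$, the MD-strategies $\tau$ avoiding $C$ with $\dpref^H(\sigma,\tau)=d_{\min}$ are exactly those that agree with $\sigma$ on $V^\star$ and avoid $C$. Condition~3 reduces to checking that no such $\tau$ is losing for $\Pi$: in the sub-game with $\sigma$'s decisions on $V^\star$ fixed, first compute $\Pi$'s safety-winning region $W$ with respect to avoiding $C$; restrict $\Pi$'s remaining moves to edges staying inside $W$; and then check whether $\Pi$'s losing event is still achievable in the resulting graph---reachability of $\LocsT$ from $\locinit$ if $\Pi=\SafePl$, or existence of a path from $\locinit$ that avoids $\LocsT$ forever (reaching a cycle or a non-target sink) if $\Pi=\ReachPl$. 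Both are standard polynomial-time graph computations. The main delicacy is the off-by-one indexing in the characterisation lemma and, in cyclic arenas, the observation that although the same vertex may be visited at different depths along different prefixes, memorylessness commits $\tau$ to a single decision at each vertex, so the purely vertex-wise constraint $\tau=\sigma$ on $V^\star$ exactly captures the set of minimal-distance strategies.
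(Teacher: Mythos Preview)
Your proof is correct and follows essentially the same approach as the paper's. Both arguments (i) reduce the Hausdorff prefix distance to a ``depth of agreement'' condition, (ii) compute the maximal depth at which some $C$-avoiding strategy can still agree with $\sigma$, and (iii) check whether every such strategy is winning by a reachability test in the restricted arena. The main cosmetic differences are that you make the characterisation $P_k(\sigma)=P_k(\tau)\iff \tau=\sigma$ on $\{v:\ell(v)\le K-2\}$ explicit as a lemma and use a binary search over depth with a safety-game feasibility test at each step, whereas the paper iterates the layers $T_0,T_1,\dots$ linearly and tests feasibility once and for all via the opponent's attractor $A_C$ to $C$ (so your version is slightly less efficient, though still polynomial). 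Your final check, computing the safe region $W$ in the $V^\star$-fixed sub-game and then testing reachability of the losing event, is equivalent to the paper's step of removing $A_C$ and testing reachability of $\LocsT$; your treatment of the $\Pi=\ReachPl$ case via lassos is a point the paper leaves implicit.
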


For the Hausdorff lifting of $\dhamm$ or $\dlev$, the resulting notion of counterfactual causes in games is more complicated.
If we try to adapt the approach used in transition systems, we need a way to capture the minimum distance of a given strategy to the closest winning strategies. 
However,  shortest path games (as extension of the weighted transition systems used for $\dhamm$- and $\dlev$-counterfactual causes in transition systems) cannot be employed in an obvious way.
In this paper,  we now instead consider two further distance functions related to the Hamming distance for which we can provide algorithmic results.

\paragraph*{Hamming strategy distance.} 
Let \loosestrategy and \winstrategy be two MD-strategies for \Pl in  \game, we define the Hamming strategy distance  by $\dhamm^s(\loosestrategy, \winstrategy) = 
|\{\loc \in \Locs \mid \loosestrategy(\loc) \neq \winstrategy(\loc)\}|$. As the Hamming distance on paths counts positions at which traces differ, the Hamming strategy distance counts positions at which two MD-strategies differ. 
Using a similar proof using shortest-path games~\cite{KhachiyanBBEGRZ-07} as for Theorem~\ref{thm:checking_dhamm}, 
we obtain the following polynomial-time result in the case of \emph{aperiodic} games. The  proof is given in Appendix~\ref{app:proof_counterfactual_games}.
\begin{restatable}{theorem}{checkhammgames}
	\label{thm:checking-hamm-game}
	Let $\game=(V,\locinit,\Trans)$ where $V=\LocsReach\uplus\LocsSafe\uplus\LocsT$ be an acyclic reachability game with target set $\LocsT$ and $\sigma$ a MD-strategy for player $\Pi$. Let $C$ be a set of locations disjoint from $\LocsT$.
	We can check in polynomial time whether $C$ is a $\dhamm^s$-counterfactual cause for the fact that $\Pi$ is losing using $\sigma$  in $\game$.
\end{restatable}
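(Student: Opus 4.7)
The plan is to mirror the weighted-shortest-path proof of Theorem~\ref{thm:checking_dhamm}, replacing the weighted transition system by a weighted acyclic game on which the required minimum-cost strategy computations can be performed in polynomial time via shortest-path games à la Khachiyan et al. First, conditions~1 and~2 of Definition~\ref{def:game_conterfactual-cause} can be verified by standard techniques: condition~1 (some $\sigma$-play visits $C$ on which $\Pi$ loses) reduces to reachability in the transition system $\game^\sigma$, and condition~2 (some MD-strategy $\tau$ for $\Pi$ avoids $C$) amounts to solving the safety game for $\Pi$ with forbidden set $C$; both are polynomial-time.

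For condition~3, I would assign weight $0$ to every edge $(v,v')$ with $v \in \Locs_\Pi$ and $v' = \sigma(v)$, weight $1$ to every edge $(v,v')$ with $v \in \Locs_\Pi$ and $v' \neq \sigma(v)$, and weight $0$ to all edges originating from $\bar\Pi$-locations. On this weighted acyclic game I compute the minimum $\dhamm^s$-distance $d_{\min}$ of an MD-strategy that avoids $C$ from $\locinit$ by backward induction in reverse topological order: at $\Pi$-locations the value is the minimum over successors (edge weight plus subvalue), and at $\bar\Pi$-locations the value aggregates the subvalues of all successors, which must all remain safe. To check whether every minimum-distance strategy is winning, I then compute $d_{\text{lose}}$, the minimum $\dhamm^s$-distance of an MD-strategy that both avoids $C$ and admits some play reaching the losing set $B_\Pi$ for $\Pi$ (non-target dead-ends if $\Pi = \ReachPl$, and $\LocsT$ if $\Pi = \SafePl$). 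This uses the same backward induction but maintains two values per location: a pure-safety value and a safety-plus-loss-witness value; at $\bar\Pi$-locations the latter is obtained by picking one successor from which a loss witness is inherited while the other successors only need to satisfy safety. The final decision is that $C$ is a $\dhamm^s$-counterfactual cause iff $d_{\text{lose}} > d_{\min}$ (with $\infty > d_{\min}$ by convention), which is exactly the statement that no distance-$d_{\min}$ strategy avoiding $C$ is losing.

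The main obstacle is the interaction between the universal safety constraint ``avoid $C$ on every opponent play'' and the existential loss witness ``some play reaches $B_\Pi$'', combined with the global per-location nature of $\dhamm^s$, in which the same $\Pi$-location may be reachable via several $\bar\Pi$-branches and should only be counted once. Acyclicity of $\game$ is crucial both to guarantee termination of the backward induction and to reduce the minimum-cost computations to the polynomial-time shortest-path games of Khachiyan et al.; the careful separation of the pure-safety and safety-plus-loss-witness values at each location is what turns the comparison $d_{\text{lose}} > d_{\min}$ into a correct and efficiently computable test for counterfactual causality.
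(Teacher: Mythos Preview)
Your approach mirrors the paper's: both encode deviations from $\sigma$ as $\{0,1\}$ edge weights on the acyclic arena and reduce condition~3 to comparing two optimal values computed via shortest-path games \`a la Khachiyan et al. The paper removes the $\SafePl$-attractor $A_C$ of $C$, builds a shortest-path game $\widetilde\game_1$ with target $\LocsT$ to obtain $\cost_1$, and a second game $\widetilde\game_2$ with the target set swapped to the non-target terminals $\LocsNegT$ to obtain $\cost_2$, then tests $\cost_1<\cost_2$. Your $d_{\min}$ and $d_{\text{lose}}$ play exactly the roles of $\cost_1$ and $\cost_2$. Your treatment of $d_{\text{lose}}$ via two values per location---a pure-safety value and a safety-plus-loss-witness value, where at $\bar\Pi$-locations one successor supplies the witness while the others only need safety---is in fact more explicit than the paper's, which simply swaps the target set without discussing the mixed universal/existential quantification (``avoid $C$ on all plays'' versus ``lose on some play'').

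The obstacle you flag, however, is real and is not resolved in either your proposal or the paper's proof as written. A shortest-path game (or your backward induction) computes, for a strategy $\tau$, the quantity $\max_{\text{$\tau$-play }\pi}\,\lvert\{v\in\pi\cap\Locs_\Pi:\tau(v)\neq\sigma(v)\}\rvert$, not the global count $\dhamm^s(\tau,\sigma)=\lvert\{v\in\Locs_\Pi:\tau(v)\neq\sigma(v)\}\rvert$. When distinct $\Pi$-locations lie on disjoint $\bar\Pi$-branches these differ: two forced deviations on two branches give $\dhamm^s=2$ but per-play maximum $1$. The paper's proof silently identifies the two (it writes ``$\cost(\play)=\dhamm^s(\strategy,\loosestrategy)$''), so this is a shared gap rather than a defect specific to your write-up. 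Closing it would require either an argument that the \emph{comparison} $d_{\min}<d_{\text{lose}}$ is insensitive to the discrepancy between the per-play maximum and the global count, or a genuinely different algorithm that counts each deviating location once across all branches; your two-value induction does not achieve the latter, and neither your proposal nor the paper supplies the former.
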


\paragraph*{Hausdorff-inspired distance \dHH.}
The distance function \dHH computes the number of vertices where two MD-strategies make 
distinct choices along each play of both MD-strategies. It  hence has some similarity to a Hausdorff-lifting of the Hamming distance on paths. This Hausdorff-lifting, however, counts the number of 
\emph{occurrences} of  vertices at which two paths differ (in their label).
Instead, for a play $\winplay=\loc_0\loc_1\dots$ and a strategy \loosestrategy for \Pl, we define 
the \emph{distance between 
	\loosestrategy and \winplay} $\dist(\winplay, \loosestrategy)$ as the number of vertices  
$\loc \in \Locs_{\Pl}$ (i.e., not the number of occurrences) such that there exists $i \in \N$ with $\loc = \loc_i$ in \winplay, and $\loosestrategy(\loc_i) \neq 
(\loc_i, \loc_{i+1})$.We define
\dHH for  two strategies $\winstrategy, \loosestrategy$ by 
\[
\dHH(\winstrategy, \loosestrategy) = \max \big(
\sup_{\winplay \mid \text{\winstrategy-play}} \dist(\winplay, \loosestrategy), 
\sup_{\looseplay \mid \text{\loosestrategy-play}} \dist(\looseplay, \winstrategy) 
\big).
\]
To simplify the notation, we define
$\dstrat{\winstrategy}(\loosestrategy) = 
\sup_{\winplay \mid \text{\winstrategy-play}} \dist(\winplay, \loosestrategy)$.
Some properties of these distances are given in 
Appendix~\ref{app:distance-strategies} and, in particular, we prove that the threshold problem for
\dHH is NP-complete via a reduction from the \emph{longest simple path problem}:

\begin{restatable}{proposition}{propdHH}
	\label{prop:dHH_NP-c}
	Let \game be a reachability game, \loosestrategy, \winstrategy be two  
	MD-strategies for \Pl, and $k \in \N$ be a  threshold. Then deciding 
	if $\dHH(\winstrategy, \loosestrategy) \geq k$ is NP-complete.
\end{restatable}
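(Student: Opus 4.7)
The plan is twofold: prove membership in NP via a polynomial-size witness, and prove NP-hardness via a reduction from the longest simple path problem.

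For NP-membership, note that $\dHH(\winstrategy, \loosestrategy) \geq k$ iff one of $\dstrat{\winstrategy}(\loosestrategy) \geq k$ or $\dstrat{\loosestrategy}(\winstrategy) \geq k$ holds. By symmetry, it suffices to argue the first. Letting $\Delta = \{\loc \in \Locs_\Pl : \winstrategy(\loc) \neq \loosestrategy(\loc)\}$, we have $\dstrat{\winstrategy}(\loosestrategy) \geq k$ iff there is a $\winstrategy$-play from $\locinit$ visiting at least $k$ distinct vertices from $\Delta$. I would argue that whenever such a play exists, there is a corresponding walk prefix in the subgraph $\game^\winstrategy$ (obtained by retaining only the edges chosen by $\winstrategy$ at $\Pl$-vertices) of length at most $|\Locs|\cdot k$, by shortcutting between consecutive first-visits of $\Delta$-vertices using shortest intermediate paths. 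The NP algorithm thus guesses such a polynomial-length walk and verifies that it is consistent with $\winstrategy$, starts at $\locinit$, and visits at least $k$ distinct vertices of $\Delta$.

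For NP-hardness, I would reduce from the longest simple path problem: given a directed graph $G=(V,E)$, a source $s$, a target $t$, and a parameter $k$, decide whether $G$ admits a simple $s$-$t$ path of length at least $k$. The plan is to build a reachability game $\game$ together with two MD-strategies $\winstrategy, \loosestrategy$ and a threshold $k'$ such that $\dHH(\winstrategy, \loosestrategy) \geq k'$ iff the answer to the longest simple path instance is yes. Each vertex $v \in V$ is encoded by a small gadget containing exactly one $\Pl$-owned vertex at which $\winstrategy$ and $\loosestrategy$ make different choices, so that these gadget-vertices constitute the set $\Delta$; opponent vertices route the play between gadgets according to $E$, and a dedicated target gadget captures reaching $t$. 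The design is chosen so that the distinct $\Delta$-vertices visited along a play directly mirror the vertices traversed by a simple $s$-$t$ path in $G$.

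The main obstacle is the correctness of this reduction. A naive vertex-by-vertex translation would make $\dstrat{\winstrategy}(\loosestrategy)$ equal to the maximum number of distinct $\Delta$-vertices appearing along any walk in $\game^\winstrategy$; this quantity can be computed in polynomial time via dynamic programming on the SCC condensation, and in the presence of cycles can strictly exceed the longest simple path length in $G$. The gadget construction must therefore prevent cycles of the game from contributing additional distinct $\Delta$-vertices beyond those of a single simple path of $G$. Verifying both directions of the equivalence---that every walk in the constructed game visiting many $\Delta$-vertices induces a simple path of $G$ of comparable length, and conversely that every simple path of $G$ is realized by a $\winstrategy$-play attaining the corresponding distance---is the technically delicate step of the proof.
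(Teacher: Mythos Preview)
Your plan coincides with the paper's approach: NP-membership by guessing a witnessing play, and NP-hardness by reduction from the longest simple path problem. The paper's construction is in fact the ``naive vertex-by-vertex translation'' you describe: each vertex $v$ of $G$ is split into a $\Pl$-vertex $(v,0)$ with two outgoing edges (continue to $(v,1)$, or exit to $\locT$) and an opponent-vertex $(v,1)$ whose outgoing edges mirror those of $v$ in $G$; then $\loosestrategy$ always continues and $\winstrategy$ always exits, so that $\Delta$ is the full set of $\Pl$-vertices and the number of distinct $\Delta$-vertices on a $\loosestrategy$-play equals the number of distinct $G$-vertices on the projected walk.

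The gap in your proposal is that you stop exactly where the work begins. You correctly flag the obstacle---an arbitrary walk in $\game^{\loosestrategy}$ may visit more distinct $\Delta$-vertices than the length of any simple path in $G$---but you neither give a gadget nor prove the equivalence; you only assert that ``the gadget construction must therefore prevent'' this and that verifying the two directions ``is the technically delicate step.'' That is a description of what remains to be done, not a proof. The paper handles the converse direction by projecting a witnessing $\loosestrategy$-play to $G$ and truncating at the first repeated vertex, arguing that ``from then on the play loops'' so that the distinct vertices visited are exactly those of the resulting simple prefix. This is precisely the point you are worried about, and to turn your proposal into a proof you must actually carry it out: either argue that the supremum defining $\dstrat{\loosestrategy}(\winstrategy)$ is achieved on a lasso-shaped play (so truncation at the first repeat loses nothing), or exhibit a concrete gadget that structurally forbids revisits and then verify both directions of the equivalence. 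As written, your NP-hardness argument is only a statement of intent.
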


The  proposition explains why understanding $\dHH$-counterfactual causes is complex. We leave a further investigation of such notions for future work. As a first step toward a better understanding, we turn our attention to 
a conceptually simpler notion, the explanation induced by a counterfactual cause.

\begin{example}
	Let us illustrate counterfactual causes according to distances on strategies. 
	We consider the reachability game depicted in the left of \figurename{~\ref{fig:game_strat-cause-2}} 
	and the non-winning strategy $\loosestrategy$ for \ReachPl depicted in~green. 
	Under $\dpref^H$ or $\dHH$, the counterfactual cause for \ReachPl is $\{v_2, v_3\}$. 
	Indeed, there exists one play that reaches $v_3$ and loses for \ReachPl, 
	and there exists a unique strategy that avoids $\{v_2, v_3\}$ by 
	changing the choice of $\loosestrategy$ in $v_1$. Moreover, this 
	counterfactual cause is minimal since $\{v_3\}$ is not a cause. 
	Indeed, the (losing) strategy that differs from $\loosestrategy$ 
	in v$_0$ and $v_1$ avoids $\{v_3\}$ with a minimal distance to $\loosestrategy$, 
	i.e.\ $2^{-2}$ for $\dpref^H$ and $1$ for $\dHH$. 
	Under $\dhamm^s$, the counterfactual cause for \ReachPl is $\{v_3\}$. 
	Indeed, two strategies exist with a distance of $1$ to
	$\loosestrategy$ according to the vertex where \ReachPl 
	changes its decision. In these two strategies, only one 
	avoids $\{v_3\}$: the strategy where \ReachPl change its decision in $v_1$.  
	\markend
\end{example}

\subsection{$D$-counterfactual explanation}
\label{sec:explain-strat}

Given a $D$-counterfactual cause, we want to explain what is wrong in the losing strategy for \Pl. In particular, we are interested in sets of locations $C$ such that \Pl could have won the game if she had not made the decisions of $\sigma$ in the locations in $C$. 

\begin{definition}
	\label{def:game_quasi-strat-cause}
	Let \game be a reachability game  and \loosestrategy be a non-winning 
	MD-strategy for \Pl. Let $E \subseteq \Locs_{\Pl}$.
	We call $E$ an \emph{explanation} in \game under \loosestrategy if there exists a winning 
	MD-strategy \winstrategy such that for 
	all vertices $\loc \in \Locs_{\Pl}$, $\winstrategy(\loc) = \loosestrategy(\loc)$ 
	iff $\loc \notin E$. We call such a $\tau$ an
	\emph{$E$-distinct \loosestrategy-strategy}.
\end{definition}

%
%

We note that the definition of an explanation does not refer to a distance function. 
However, given a $D$-counterfactual cause, we can  compute an explanation no matter which  distance $D$ is used. 

\begin{proposition}\label{prop:difference_explanation}
	Let $\game=(V,\locinit,\Trans)$ be a reachability game, $D$ a distance function on strategies and \loosestrategy be a non-winning 
	MD-strategy for \Pl. Let $C \subseteq \Locs$ be a $D$-counterfactual cause.
	We can compute an explanation $E$ (from $C$) in polynomial time.
\end{proposition}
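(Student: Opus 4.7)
The key observation is that, although the definition of a $D$-counterfactual cause quantifies over $D$-minimally distant strategies, the mere existence of a winning MD-strategy that avoids $C$ is built into the definition (via conditions~2 and~3 of Definition~\ref{def:game_conterfactual-cause}: any MD-strategy avoiding $C$ of minimal $D$-distance to $\sigma$ is winning, and at least one such strategy exists). Hence, to produce an explanation, we do not need to reason about $D$ at all; it suffices to extract \emph{any} winning MD-strategy for $\Pi$ that avoids $C$.

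Concretely, the plan is as follows. Starting from $\game$, construct a modified reachability game $\game^\prime$ in which the locations of $C$ are turned into losing sinks for $\Pi$: if $\Pi=\ReachPl$, redirect all incoming edges of $C$ to a fresh non-target dead-end vertex; if $\Pi=\SafePl$, redirect them to a fresh target vertex. Any MD-strategy for $\Pi$ that is winning in $\game^\prime$ from $\locinit$ corresponds to an MD-strategy in $\game$ that is winning and avoids $C$, and vice versa. By conditions~2 and~3 of the cause, such a strategy exists in $\game^\prime$, so $\Pi$'s winning region in $\game^\prime$ contains $\locinit$. Using the standard polynomial-time attractor computation for reachability games (cf.\ \cite{GradelTW-02}), compute a winning MD-strategy $\tau$ for $\Pi$ in $\game^\prime$ and view it as an MD-strategy in $\game$.

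Finally, define
\[
E \;=\; \{\, \loc \in \Locs_{\Pi} \mid \sigma(\loc) \neq \tau(\loc)\,\}.
\]
By construction, $\tau(\loc)=\sigma(\loc)$ iff $\loc\notin E$, so $\tau$ is an $E$-distinct $\sigma$-strategy, and it is winning by the choice of $\tau$. Thus $E$ is an explanation according to Definition~\ref{def:game_quasi-strat-cause}. The whole procedure runs in polynomial time in the size of $\game$: the game modification is linear, solving the reachability game is polynomial, and the pointwise comparison defining $E$ is linear in $|\Locs_{\Pi}|$.

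There is no real obstacle here; the only thing to emphasise is conceptual, namely that the explanation obtained this way is in general \emph{not} $D$-minimal. The extracted $\tau$ can differ from $\sigma$ at many more vertices than strictly necessary with respect to $D$, which is precisely why the subsequent sections of the paper study the additional problem of checking $D$-minimality of an explanation.
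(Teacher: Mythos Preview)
Your proposal is correct and follows essentially the same route as the paper: neutralise $C$ in the game, use the attractor construction to extract a winning MD-strategy $\tau$ for $\Pi$ (whose existence is guaranteed by conditions~2 and~3 of Definition~\ref{def:game_conterfactual-cause}), and set $E=\{\loc\in\Locs_\Pi\mid \sigma(\loc)\neq\tau(\loc)\}$. The only cosmetic difference is that the paper simply deletes the vertices of $C$ to form $\game'=(V\setminus C,\locinit,\Trans)$, whereas you redirect incoming edges of $C$ to a fresh losing sink; both constructions achieve the same thing, and your additional remark that the resulting $E$ need not be $D$-minimal is apt and anticipates the subsequent discussion in the paper.
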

\begin{proof}
	Let $\game'=(V\setminus C,\locinit,\Trans)$ be the reachability game. Since $D$ is a $D$-counterfactual cause, we 
	know that there exists a winning strategy $\tau$ in $\game'$. We can compute this strategy in  time polynomial in the size of $\game'$ with the attractor method and we define $E = \{v \mid \loosestrategy(v) \neq \tau(v) \}$.
\end{proof}

A winning strategy differing from $\sigma$ in $E$ might not have much in common with $\sigma$. 
For this reason, explanations that point out changes in the
decisions of $\sigma$ in $E$ that enforce only the minimal necessary change to obtain a winning strategy $\tau$ from $\sigma$ are of particular interest. 
We can use a distance function $D$ to quantify how much a strategy needs to be changed.

\begin{definition}
	Let \game be a reachability game  and \loosestrategy be a non-winning 
	MD-strategy for \Pl. For a distance function $D$ for MD-strategies, we call a explanation 
	$E$ a \emph{$D$-minimal explanation}, if there exists a winning $E$-distinct 
	\loosestrategy-strategy  $\winstrategy$ with 
	$d(\winstrategy, \loosestrategy) = 
	\min \{d(\strategy, \loosestrategy)\mid {\text{$\strategy$ is a winning MD-strategy for $\Pl$}}\}$.
\end{definition}

For a strategy $\sigma$ and an explanation $E$, the distance $\dhamm^s(\sigma,\tau)$ for an $E$-distinct $\sigma$-strategy $\tau$ is precisely $|E|$. So, $\dhamm^s$-minimal explanations are cardinality-minimal explanations.

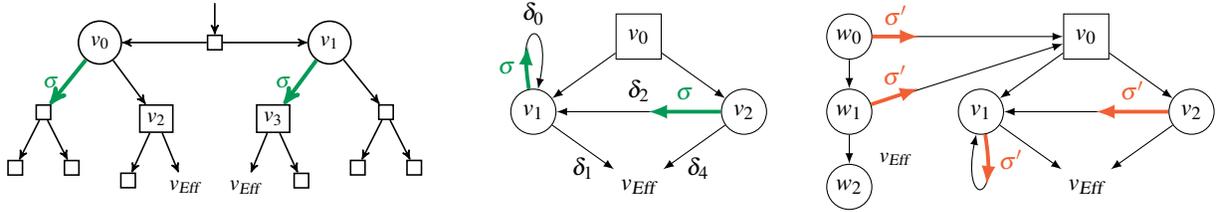
\begin{figure}[t]
\centering
	\begin{minipage}{.35\textwidth}
		\scalebox{0.75}{\begin{tikzpicture}
			[scale=1,->,>=stealth',auto ,node distance=0.5cm, thick]
			\tikzstyle{r}=[thin,draw=black,rectangle]
			
			\node[scale=1,rectangle, draw] (start) {};
			\draw[<-] (start) --++(0,+0.7);
			\node[scale=1, circle, draw, left=1.5 of start] (step0) {$v_0$};
			\node[scale=1, circle, draw, right=1.5 of start] (step1) {$v_1$};
			
			\node[scale=1, rectangle, draw, below=.7 of step0,xshift=-1cm] (step00) {};
			\node[scale=1, rectangle, draw, below=.7 of step0,xshift=1cm] (step01) {$v_2$};
			
			\node[scale=1, rectangle, draw, below=.7 of step00,xshift=-.5cm] (step000) {};
			\node[scale=1, rectangle, draw, below=.7 of step00,xshift=.5cm] (step001) {};
			
			\node[scale=1, rectangle, draw, below=.7 of step01,xshift=-.5cm] (step010) {};
			\node[scale=1, rectangle, draw=none, below=.7 of step01,xshift=.5cm] (step011) {$\locT$};

			\node[scale=1, rectangle, draw, below=.7 of step1,xshift=-1cm] (step10) {$v_3$};
			\node[scale=1, rectangle, draw, below=.7 of step1,xshift=1cm] (step11) {};

			\node[scale=1, rectangle, draw=none, below=.7 of step10,xshift=-.5cm] (step100) {$\locT$};
			\node[scale=1, rectangle, draw, below=.7 of step10,xshift=.5cm] (step101) {};
			
			\node[scale=1, rectangle, draw, below=.7 of step11,xshift=-.5cm] (step110) {};
			\node[scale=1, rectangle, draw, below=.7 of step11,xshift=.5cm] (step111) {};

			\draw[color=black,->] (start) edge  (step0) ;
			\draw[color=black,->] (start) edge  (step1) ;
			\draw[color=black,->, ForestGreen,line width=2pt] (step0) edge node[left] {\textcolor{ForestGreen}{$\loosestrategy$}} (step00) ;
			\draw[color=black,->] (step0) edge  (step01) ;
			\draw[color=black,->, ForestGreen,line width=2pt] (step1) edge node[left] {\textcolor{ForestGreen}{$\loosestrategy$}} (step10) ;
			\draw[color=black,->] (step1) edge  (step11) ;
			\draw[color=black,->] (step00) edge  (step001) ;
			\draw[color=black,->] (step00) edge  (step000) ;
			
			\draw[color=black,->] (step01) edge  (step011) ;
			\draw[color=black,->] (step01) edge  (step010) ;
			
			\draw[color=black,->] (step10) edge  (step101) ;
			\draw[color=black,->] (step10) edge  (step100) ;
			
			\draw[color=black,->] (step11) edge  (step111) ;
			\draw[color=black,->] (step11) edge  (step110) ;

			\end{tikzpicture}}
	\end{minipage}
	\hfill
	\begin{minipage}{.6\textwidth}
		\begin{tikzpicture}[xscale=.7,every node/.style={font=\footnotesize}, 
		every label/.style={font=\scriptsize}]
		\node[PlayerSafe] at (0,0) (s0) {$\loc_0$};
		\node[PlayerReach] at (-2,-1) (s1) {$\loc_1$};
		\node[PlayerReach] at (2, -1) (s2) {$\loc_2$};
		\node[target] at (0,-2) (e) {$\locT$};
		\node[strat] at (-2.1, 0) (st1) {};
		\node[strat] at (0, -1) (st2) {};
		
		\draw[->]
		(s0) edge (s1)
		(s0) edge (s2)
		(s1) edge[loop above,min distance=10mm] node[above] {$\trans_0$} (s1)
		(s1) edge node[below] {$\trans_1$} (e)
		(s2) edge node[above] {$\trans_2$} (s1)
		(s2) edge node[below, xshift=.05cm] {$\trans_4$} (e)
		;
		
		\draw[ForestGreen, ->, line width=0.5mm]
		(s1) edge[bend left=10] node[left] {\textcolor{ForestGreen}{$\loosestrategy$}} (st1)
		(s2) edge node[above] {\textcolor{ForestGreen}{$\loosestrategy$}}(st2)
		;
		
		\begin{scope}[xshift=8.5cm]
		\node[PlayerSafe] at (0,0) (s0) {$\loc_0$};
		\node[PlayerReach] at (-2,-1) (s1) {$\loc_1$};
		\node[PlayerReach] at (2, -1) (s2) {$\loc_2$};
		\node[target] at (0,-2) (e) {$\locT$};
		
		\node[strat] at (-1.9, -2.1) (st1) {};
		\node[strat] at (0, -1) (st2) {};
		\node[strat] at (-3, 0) (st3) {};
		\node[strat] at (-3, -.65) (st4) {};
		
		\node[PlayerReach] at (-4.5,0) (w0) {$w_0$};
		\node[PlayerReach] at (-4.5, -1) (w1) {$w_1$};
		\node[PlayerReach,label={30:$\locT$}] at (-4.5, -2) (w2) {$w_2$};
		
		\draw[->]
		(s0) edge (s1)
		(s0) edge (s2)
		(s1) edge[loop below,min distance=10mm] (s1)
		(s1) edge (e)
		(s2) edge (s1)
		(s2) edge  (e)
		(w0) edge (w1)	
		(w0) edge (s0)
		(w1) edge (w2)	
		(w1) edge (s0)	
		;
		
		\draw[RedOrange, ->, line width=0.5mm]
		(s1) edge[bend left=10] node[right] {\textcolor{RedOrange}{$\loosestrategy'$}} (st1)
		(s2) edge node[above] {\textcolor{RedOrange}{$\loosestrategy'$}}(st2)
		(w0) edge node[above] {\textcolor{RedOrange}{$\loosestrategy'$}} (st3)
		(w1) edge node[above] {\textcolor{RedOrange}{$\loosestrategy'$}}(st4)
		;
		\end{scope}
		
		\end{tikzpicture}
	\end{minipage}	
	\caption{On the left and in the middle, two reachability games with initial vertex $v_0$ and 
		strategy $\sigma$ for $\ReachPl$ (depicted in green). On the right, the reachability game 
		obtained by reduction of Corollary~\ref{cor:check-hamm-game} from the game depicted in the middle 
		with initial vertex $w_0$ and $\loosestrategy'$ be a non winning strategy for \ReachPl.}
	\label{fig:game_strat-cause-2}
\end{figure}

\begin{example} \label{ex:quasi-strat-cause}
	Let us illustrate explanations and $D$-minimal explanations.	
	We consider the reachability game \game where \ReachPl wins depicted 
	in the left of Figure {\ref{fig:game_strat-cause-2}} with \loosestrategy, a non-winning 
	MD-strategy for \ReachPl, depicted in green. 
	We note that $E = \{\loc_1, \loc_2\}$ 
	is an explanation in \game under \loosestrategy. A  winning 
	$E$-distinct \loosestrategy-strategy \winstrategy for \ReachPl is given by
	$\winstrategy(\loc_1) = \trans_1$ and $\winstrategy(\loc_2) = \trans_4$. 
	However, $E$ is not a $\dHH$-minimal explanation or $\dhamm^s$-minimal explanation. 
	Clearly, $\dhamm^s(\winstrategy,\loosestrategy)=2$. Further, also $\dHH(\winstrategy,\loosestrategy)=2$ as the $\sigma$-play 
	$\loc_0\loc_2\loc_1^{\omega}$ visits two states, namely $\loc_2$ and $\loc_1$ at which $\sigma$ and $\tau$ make different decisions.
	The set $E^\prime=\{\loc_1\}$, however is a $\dHH$-minimal explanation and $\dhamm^s$-minimal explanation:
	the $E^\prime$-distinct $\sigma$-strategy $\tau^\prime$ choosing $\trans_1$ in $\loc_1$ and behaving like $\sigma$ in $\loc_2$ wins and has $\dhamm^s$- and $\dHH$-distance $1$ to $\sigma$.
	As any winning strategy has at least distance $1$ to $\sigma$, $E^\prime$ is hence a $D$-minimal explanation for both distance functions.
	\markend
\end{example}

\label{subsec:strat_find-strat}

For $D$-minimal explanations, it is central to find a winning MD-strategy that minimises 
the distance $D$ to the given losing strategy $\sigma$. We take a look at this problem from the point of view of 
\ReachPl and prove that for $\dhamm^s$ and $\dHH$ the associated threshold problems are not in P if P$
\not=$NP.


\begin{restatable}{theorem}{stratFindingNPc}
	\label{thm:strat_finding-NP-c}
	Given a game $\game$,  a losing strategy $\sigma$ for $\ReachPl$, and $k\in \mathbb{N}$,
	deciding if there exists a winning MD-strategy \winstrategy for \ReachPl 
such that $\dhamm^s(\winstrategy, \loosestrategy) \leq k$ is  NP-complete. Further, the problem whether there is a winning MD-strategy \winstrategy with
$\dHH(\winstrategy, \loosestrategy) \leq k$ is not in P if P$\not=$NP.
\end{restatable}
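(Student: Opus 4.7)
I plan to establish both claims via a single polynomial-time reduction from vertex cover on connected graphs, which is well-known to be NP-hard. For the $\dhamm^s$-case, NP-membership is immediate: an MD-strategy $\tau$ has polynomial size, and both the check that $\tau$ is winning (by attractor computation in $\game^\tau$) and the computation of $\dhamm^s(\tau,\sigma)$ (counting vertices with $\tau(v)\neq\sigma(v)$) can be performed in polynomial time.

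For NP-hardness, given a connected graph $G=(V_G,E_G)$ with $V_G=\{v_1,\dots,v_n\}$ and threshold $k_{VC}$, I construct a reachability game $\game$ as follows. For each $v_i\in V_G$, introduce a \ReachPl-vertex $u_i$, a \SafePl-vertex $F_i$ and a target vertex $T_i\in\LocsT$, together with edges $u_i\to T_i$ and $u_i\to F_i$. For each edge $\{v_i,v_j\}\in E_G$, add edges $F_i\to u_j$ and $F_j\to u_i$. Finally, add a \SafePl-initial vertex $s_0$ with edges $s_0\to u_i$ for every $i$. Define $\sigma(u_i)=F_i$ for all $i$; this is losing as soon as $G$ has one edge. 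An MD-strategy $\tau$ for \ReachPl is fully determined by $S_\tau=\{v_i:\tau(u_i)=T_i\}$, and a direct analysis of $\game^\tau$ shows that $\tau$ is winning iff $S_\tau$ is a vertex cover of $G$: if $S_\tau$ is a cover then from any $u_i$ with $v_i\notin S_\tau$ every \SafePl-move $F_i\to u_j$ lands on a neighbor $v_j\in S_\tau$ and hence leads to $T_j$; conversely, if $S_\tau$ is not a cover, some edge $\{v_a,v_b\}$ has both endpoints outside $S_\tau$ and \SafePl can bounce indefinitely along $u_a,F_a,u_b,F_b,u_a,\dots$. Since the \ReachPl-vertices are exactly the $u_i$'s and $\tau(u_i)\neq\sigma(u_i)$ iff $v_i\in S_\tau$, we obtain $\dhamm^s(\tau,\sigma)=|S_\tau|$. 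Hence there is a winning $\tau$ with $\dhamm^s(\tau,\sigma)\leq k_{VC}$ iff $G$ has a vertex cover of size at most $k_{VC}$, yielding NP-hardness and thus NP-completeness.

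For $\dHH$ I reuse the same reduction and analyze both suprema in the definition of $\dHH(\tau,\sigma)$ separately. Any winning $\tau$-play has length at most $4$ (either $s_0,u_i,T_i$ when $v_i\in S_\tau$, or $s_0,u_i,F_i,u_j,T_j$ when $v_i\notin S_\tau$ and $v_j\in S_\tau$) and contains at most one \ReachPl-vertex on which $\tau$ disagrees with $\sigma$, so $\sup_{\tau\text{-play }\rho}\dist(\rho,\sigma)\leq 1$. In contrast, any $\sigma$-play is an infinite bouncing walk whose sequence of visited $u_i$'s corresponds to an adversarial walk in $G$; since $G$ is connected, \SafePl can construct a single $\sigma$-play that visits every vertex of $S_\tau$, yielding $\sup_{\sigma\text{-play }\pi}\dist(\pi,\tau)=|S_\tau|$. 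Thus $\dHH(\tau,\sigma)=|S_\tau|$ for every winning $\tau$ (which necessarily satisfies $|S_\tau|\geq 1$), and the same correspondence with vertex cover gives NP-hardness, implying the problem is not in P unless P$=$NP.

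The main obstacle I expect is the precision of the $\dHH$-analysis: because $\dHH$ is a max over plays, a priori one could fear that either long $\tau$-plays or short $\sigma$-plays distort the distance from $|S_\tau|$. The design of the game, in which winning $\tau$-plays terminate in a constant number of steps while $\sigma$-plays freely roam a connected graph via the $F_i\to u_j$ edges, is precisely what makes $\dHH$ match $|S_\tau|$ exactly. The $\dhamm^s$-analysis is comparatively routine, as this distance is purely combinatorial in the vertex cover.
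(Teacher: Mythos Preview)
Your proposal is correct and, in fact, cleaner than the paper's own argument. Both proofs use the same gadget shape (each graph vertex becomes a \ReachPl-vertex that either goes to the target or hands control to a \SafePl-vertex, which then moves along the original graph), but the paper reduces from \emph{directed feedback vertex set} whereas you reduce from \emph{vertex cover on connected undirected graphs}. In the paper, ``\winstrategy\ is winning'' corresponds to ``the chosen set breaks all cycles of $G$'', and the $\dHH$-analysis only goes through when $G$ is strongly connected (so that some \loosestrategy-play can visit every vertex). Since FVS on strongly connected graphs is not one of the standard NP-complete problems, the paper has to add a separate polynomial-time \emph{Turing} reduction from general FVS to FVS on SCCs, which is why the paper only claims ``not in P if P$\neq$NP'' for the $\dHH$-case rather than NP-hardness.

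Your choice of vertex cover on undirected graphs avoids this detour: the bidirectional edge gadgets $F_i\!\leftrightarrow\! u_j$ mean ordinary connectivity already guarantees a single \loosestrategy-play visiting every $u_i$, so you get $\dHH(\winstrategy,\loosestrategy)=|S_\winstrategy|$ directly and hence a many-one NP-hardness reduction for $\dHH$ as well --- strictly stronger than what the paper states. The price is that ``winning $\Leftrightarrow$ vertex cover'' requires the short case analysis you give (at most one uncovered hop before hitting a covered neighbour), whereas the paper's ``winning $\Leftrightarrow$ acyclic after deletion'' is immediate; but your analysis is routine and the overall argument is shorter. One small remark: state explicitly that vertex cover remains NP-hard on connected graphs with at least one edge (folklore, e.g.\ the standard 3-SAT reduction already yields connected instances), since you rely on both connectivity (for the $\dHH$ half) and $E_G\neq\emptyset$ (so that $\loosestrategy$ is indeed losing and $|S_\winstrategy|\geq 1$).
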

\begin{proof}[Proof sketch]
	To establish the NP upper bound for $\dhamm^s$, we can guess a MD-strategy $\tau$ for \ReachPl and check in polynomial time whether 
	it is winning and whether $\dhamm^s(\winstrategy, \loosestrategy) \leq k$.
	For the NP-hardness for $\dhamm^s$, we provide a polynomial-time many-one reduction from the  NP-complete
	decision version of the \emph{feedback vertex set} 
	\cite{Karp1972}. Given a cyclic (directed) graph $G$, this problem asks whether there is a set $S$ of size at most $k$
	such that if we remove this set, $G \setminus S$ becomes acyclic. 
	For the problem for \dHH, we provide a polynomial-time Turing reduction from the same problem.
	 A detailed proof  is 
	given in Appendix~\ref{app_FVS-CSS}. 
%
\end{proof}

From these  results, we deduce that also checking if a given explanation is 
$D$-minimal cannot be done in polynomial time  if P$\not=$NP  (see Appendix~\ref{app_FVS-CSS} for the proof).

\begin{restatable}{corollary}{checkdhammsstrat}
	\label{cor:check-hamm-game}
	Let \game be a reachability game, \loosestrategy be a 
	non-winning MD-strategy for \ReachPl,  and $E \subseteq \Locs$. The problem to check if $E$ is a $\dhamm^s$-minimal explanation in $\game$ for \loosestrategy
	is  coNP-complete. The problem to check if $E$ is a $\dHH$-minimal explanation in $\game$ for \loosestrategy
	is not in P if P$\not=$NP.
\end{restatable}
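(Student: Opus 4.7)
The plan is to derive both statements from Theorem~\ref{thm:strat_finding-NP-c} via a common gadget reduction. For the $\dhamm^s$ upper bound, I would use the fact that every $E$-distinct $\sigma$-strategy $\tau$ satisfies $\dhamm^s(\tau,\sigma) = |E|$ exactly. Hence $E$ is $\dhamm^s$-minimal iff (a) $E$ is an explanation, checkable in P by constructing the subgame where at each $v \in E \cap \LocsReach$ the edge to $\sigma(v)$ is removed and at each $v \in \LocsReach \setminus E$ only the edge to $\sigma(v)$ is kept and then running attractor computation, and (b) no winning MD-strategy $\mu$ has $\dhamm^s(\mu,\sigma) < |E|$, which lies in coNP because its complement is the NP-complete threshold problem from Theorem~\ref{thm:strat_finding-NP-c}. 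This gives the coNP upper bound.

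For the hardness results, the plan is a single gadget reduction from the complement of the corresponding threshold problem. Given $(\game, \sigma, k)$, I add fresh $\ReachPl$-vertices $w_0, \ldots, w_{k+1}$ with $w_0$ the new initial vertex and $w_{k+1}$ a new target, and edges $w_i \to w_{i+1}$ and $w_i \to \locinit$ (where $\locinit$ is the old initial vertex) for each $i \leq k$. The extension $\sigma'$ of $\sigma$ is defined by $\sigma'(w_i) = \locinit$; this $\sigma'$ is losing since it immediately enters $\game$ and follows the losing $\sigma$. Setting $E' = \{w_0, \ldots, w_k\}$, the unique $E'$-distinct $\sigma'$-strategy is $\tau'$ with $\tau'(w_i) = w_{i+1}$ and $\tau' = \sigma$ on $\game$; its play from $w_0$ traverses the gadget to $w_{k+1}$, so $\tau'$ wins, $E'$ is an explanation, and $\dhamm^s(\tau', \sigma') = \dHH(\tau', \sigma') = k+1$. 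A case analysis on the first gadget index $j$ at which a winning $\mu'$ chooses $\locinit$ will show that, for $D \in \{\dhamm^s, \dHH\}$, the minimum $D$-distance of a winning MD-strategy to $\sigma'$ equals $\min(k+1, m_D)$, where $m_D$ is the analogous minimum in $\game$. Hence $E'$ is $D$-minimal iff $m_D \geq k+1$, i.e., iff no winning MD-strategy in $\game$ has $D$-distance at most $k$ to $\sigma$. Taking $D = \dhamm^s$ yields coNP-hardness; taking $D = \dHH$ yields the P$\not=$NP lower bound.

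The main obstacle is the $\dHH$-case analysis, since $\dHH$ depends on plays rather than directly on the set of disagreement vertices. The crucial computation is that when $\mu'$ never enters the gadget ($j = 0$), the sup over $\mu'$-plays and over $\sigma'$-plays in $\game'$ each reduce to the corresponding sup in $\game$, because the common prefix $w_0, \locinit$ contributes no disagreement; so $\dHH(\mu', \sigma') = \dHH(\nu, \sigma)$. For $j \geq 1$, each of the first $j$ gadget steps contributes $+1$ to $\dist(\mu'\text{-play}, \sigma')$, giving $\dHH(\mu', \sigma') \geq j + \dHH(\nu, \sigma)$, strictly worse than the $j=0$ choice. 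This tightness argument is what makes the factorization $\min(k+1, m_D)$ exact and thereby completes the reduction in both cases.
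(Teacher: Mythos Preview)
Your approach coincides with the paper's: the coNP upper bound for $\dhamm^s$ via guessing a closer winning strategy, and the gadget $w_0,\dots,w_{k+1}$ reducing from the threshold problem of Theorem~\ref{thm:strat_finding-NP-c}, are exactly what the paper does (the paper only says the $\dHH$ case ``works analogously'' without spelling out the computation you attempt).

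There is, however, a small error in your $\dHH$ analysis. For $j\geq 1$, the $\sigma'$-plays all have the form $w_0,\locinit,\dots$ and hence visit only one gadget vertex; so the $\sigma'$-side of the Hausdorff max contributes $1+d^{\sigma}(\nu)$, not $j+d^{\sigma}(\nu)$. Concretely, $\dHH(\mu',\sigma')=\max\bigl(j+d^{\nu}(\sigma),\,1+d^{\sigma}(\nu)\bigr)$, which need not dominate $j+\dHH(\nu,\sigma)$ (take $j=2$, $d^{\nu}(\sigma)=1$, $d^{\sigma}(\nu)=5$: you get $6$, not $\geq 7$). The correct bound is the weaker $\dHH(\mu',\sigma')\geq 1+\dHH(\nu,\sigma)$ for $j\geq 1$, which is still strictly larger than the $j=0$ value $\dHH(\nu,\sigma)$ and therefore suffices for your factorization $\min(k{+}1,m_D)$. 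With this fix, your argument goes through.
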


Despite the hardness in the general case, if $\game^{\loosestrategy}$ is acyclic, we prove 
that we can compute the winning MD-strategy that minimises the $\dHH$-distance to $\sigma$  in polynomial time.
 From this strategy, a $\dHH$-minimal explanation can then be computed as in Proposition \ref{prop:difference_explanation}.
 The  proof of 
	Theorem~\ref{thm:find-trans_poly}  (in Appendix~\ref{app:poly-algo})  constructs a shortest-path game~\cite{KhachiyanBBEGRZ-07} without negative weights in which an optimal strategy, that leads to the desired winning strategy in the original game, can be computed in polynomial time.

\begin{restatable}{theorem}{findTransPoly}
	\label{thm:find-trans_poly}
	Let \game be reachability game where \ReachPl wins, and \loosestrategy be a 
	non-winning MD-strategy for \ReachPl such that $\game^{\loosestrategy}$ is 
	acyclic. Then, we can compute a winning MD-strategy $\winstrategy$ that minimizes the distance \dHH to 
	\loosestrategy in polynomial time.
\end{restatable}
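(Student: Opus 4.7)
My plan is to reduce the problem to a shortest-path game with non-negative weights, solvable in polynomial time by standard algorithms~\cite{KhachiyanBBEGRZ-07}. First, I would exploit the acyclicity of $\game^{\loosestrategy}$ to observe that all $\loosestrategy$-plays are finite simple paths, and that any winning MD-strategy $\winstrategy$ for \ReachPl yields simple $\winstrategy$-plays: a repeated vertex under an MD strategy would allow $\SafePl$ to construct a looping $\winstrategy$-play that never reaches a target, contradicting the fact that $\winstrategy$ wins. Consequently, both $\dstrat{\winstrategy}(\loosestrategy)$ and $\dstrat{\loosestrategy}(\winstrategy)$ reduce to additive counts of \ReachPl-deviations along a single simple path, which aligns naturally with accumulated edge-weight minimisation.

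I would then construct a shortest-path game $\game^{\star}$ made of two synchronised sub-arenas. The \emph{primary} sub-arena is a copy of $\game$ where each edge leaving a \ReachPl-vertex $\loc$ carries weight $1$ if it differs from $\loosestrategy(\loc)$ and $0$ otherwise, so that the worst accumulated weight of a $\winstrategy$-play equals $\dstrat{\winstrategy}(\loosestrategy)$. The \emph{shadow} sub-arena replays $\game^{\loosestrategy}$ restricted to the vertices reachable under $\loosestrategy$; at each $\LocsReach$-vertex $\loc$ encountered there, a small gadget first records $\ReachPl$'s synchronised choice of $\winstrategy(\loc)$ with weight $\mathbf{1}[\winstrategy(\loc) \neq \loosestrategy(\loc)]$, after which the play is forced on to $\loosestrategy(\loc)$'s successor; $\SafePl$-vertices keep their original out-edges. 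The worst accumulated weight in the shadow sub-arena then equals $\dstrat{\loosestrategy}(\winstrategy)$. A fresh $\SafePl$-owned start vertex with two zero-weight edges into the two sub-arenas makes the overall shortest-path-game value equal to $\max(\dstrat{\winstrategy}(\loosestrategy), \dstrat{\loosestrategy}(\winstrategy)) = \dHH(\winstrategy, \loosestrategy)$. Non-target terminal vertices in the primary sub-arena are converted to infinite-weight sinks, so that only winning strategies can be optimal; the construction is of polynomial size and uses only non-negative weights.

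Running a polynomial-time shortest-path game algorithm on $\game^{\star}$ then yields an optimal MD-strategy for \ReachPl whose projection onto $\game$'s \ReachPl-vertices is the sought winning MD-strategy $\winstrategy$ of minimum $\dHH$-distance to $\loosestrategy$. The main obstacle is the design of the synchronisation gadget: a single MD-choice of $\ReachPl$ at a shared \ReachPl-vertex $\loc$ must simultaneously steer the primary play toward $\winstrategy(\loc)$ and, in the shadow sub-arena, contribute the correct deviation weight while the play continues toward $\loosestrategy(\loc)$. I would implement this by inserting, for each such $\loc$, one intermediate vertex per outgoing edge of $\loc$, with forced sub-arena-specific continuations from those intermediates, so that $\ReachPl$'s single MD-choice at $\loc$ uniformly drives both behaviours and is inherited by the projection back to $\game$.
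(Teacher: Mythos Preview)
Your reduction to a single shortest-path game has a genuine gap in the synchronisation gadget, and the gap is not a detail but the heart of the matter.

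If the two copies $\loc^P$ and $\loc^S$ of a $\ReachPl$-vertex are distinct vertices of $\game^\star$, then an MD-strategy in $\game^\star$ is free to make different choices at them; nothing forces consistency. And indeed the optimal MD-strategy will be inconsistent: in the shadow sub-arena the choice at $\loc^S$ affects only the weight (the successor is forced to $\loosestrategy(\loc)$'s successor), so $\MinPl$ will always pick the $0$-weight edge there, making the shadow component contribute value $0$ and your construction collapse to the primary arena alone. Conversely, if you identify $\loc^P$ and $\loc^S$ into a single vertex $\loc$ so that the MD-choice is shared, then the game state at $\loc$ no longer records which sub-arena the play is in, and there is no way for the ``forced sub-arena-specific continuation'' from the intermediate $m_{\loc,\loc'}$ to know whether to proceed to $\loc'$ (primary) or to $\loosestrategy(\loc)$'s successor (shadow). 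Passing the tag through the intermediates does not help: the tag must be carried into $\loc$ to be available afterwards, which re-duplicates $\loc$. This is exactly the obstruction that makes $\dHH$ hard in general (Proposition~\ref{prop:dHH_NP-c} and Theorem~\ref{thm:strat_finding-NP-c}); a plain shortest-path game with MD strategies cannot enforce a global consistency constraint across disjoint sub-arenas.

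The paper circumvents this by \emph{not} trying to optimise both components simultaneously. It first solves a single shortest-path game (your primary arena) to obtain a winning $\winstrategy$ minimising $\dstrat{\winstrategy}(\loosestrategy)$, and then post-processes $\winstrategy$ by overwriting it with $\loosestrategy$ on all vertices outside the connected component of $\locinit$ in $\game^{\winstrategy}$. The acyclicity of $\game^{\loosestrategy}$ is then used to argue separately that this modified strategy also minimises $\dstrat{\loosestrategy}(\cdot)$ among winning strategies, so it minimises the $\max$. The key insight you are missing is that under the acyclicity hypothesis one can achieve the minimum of each component with the \emph{same} strategy, so no joint optimisation is needed.
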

\section{Conclusion and Outlook}
\label{sec:conclusion}
The introduced notion of $d$-counterfactual cause for a distance function $d$ in transition systems turned out to be checkable in polynomial time for the distance functions $\dpref$, $\dhamm$, and $\dlev$ and it, hence, has the potential to be employed in efficient tools to provide understandable explanations of the behavior of a system. 
In our results for safety effects $\Phi$, one caveat remains: we  only considered finite executions reaching a cause candidate $C$ and satisfying $\Phi$ in the input to the checking algorithms. Allowing also finitely representable, e.g., ultimately periodic paths, constitutes a natural possible  extension, which requires adjustments in the provided algorithms.
 
The problem of finding good causes remains as future work: Whenever causality can be checked in polynomial time,  there is an obvious non-deterministic polynomial-time upper bound on the problem to decide whether there are causes below a given size, but the precise complexities are unclear. A further idea is to use the distance function to assess how good a cause is by considering  the distance from the actual execution to the closest executions avoiding a cause. For reachability effects and the prefix and Hamming distance, the set of direct predecessors optimizes this distance. For other distance functions or safety causes, this measure could, nevertheless, be more useful.
The search for similar  measures for the quality of causes constitutes an interesting direction for future  work.

In reachability games, we saw that the analogous definition of $D$-counterfactual causes can be checked in polynomial time for the Hausdorff-lifting $\dpref^H$ of the prefix metric, as well.
For other distance functions, 
the definition seems to lead to complicated notions due to the involved quantification over all MD-strategies avoiding the cause and having a minimal distance to a given strategy.
A closer investigation of these notions might, nevertheless, be a fruitful subject for future research.
However, our analysis of the 
 conceptually simpler $D$-minimal explanations provides insights into the complications one might encounter here.
 For the Hausdorff-inspired distance function $\dHH$, we showed that already the threshold problem for the distance between two given MD-strategies is NP-hard.
 Furthermore, for the relatively simple distance function $\dhamm^s$, checking the $\dhamm^s$-minimality of an explanation is in coNP-complete. For the Hausdorff-inspired distance function \dHH, checking \dHH-minimality is
  not in P unless P$=$NP.

\bibliographystyle{abbrv}

\bibliography{main}

\newpage
\appendix
\section{Omitted proofs of Section \ref{sec:counterfactual_in_TS}}
\label{app:counterfactual_in_TS}

\subsection{Omitted proofs of Section \ref{sub:check_TS}}

\checkprefAPTS*

\begin{proof}
We have to find the last index $i<n$ such that $C$ is avoidable after having seen the trace $L(s_0\dots s_i)$.
So, we iteratively construct the set of states that are reachable via a path $\zeta$ whose trace is a prefix of the trace of $\pi$.
We set $T_0=\{\sinit\}$. Now, for each $i<n$, we construct the set
\[
T_{i+1} = \{t\in S \mid L(t)=L(s_{i+1})\text{, $t\not \in C$, and there is an $s\in T_i$ with $(s,t )\in \to$ }\}.
\]
For each $i<n$, the set $T_i$ contains all states that are reachable via a path $\zeta$ with trace $L(s_0\dots s_i)$ that does not reach $C$. Furthermore, the sequence of sets $T_i$ with $i<k$ can be constructed in polynomial time.

Now, we check for each $i<n$, whether there is a state $t\in T_i$ with $t\vDash \exists \Box \neg C$, i.e., whether $C$ is still avoidable from some state in $T_i$. Let $j+1$ be the least index for which this is not  the case. Note that if the least such index is $0$, then $C$ is not avoidable and hence not a cause.
So, the $\dpref^{\AP}$ most similar paths to $\pi$ that avoid $C$ are those that share the prefix $L(s_0\dots s_j)$ of the trace with $\pi$ and we know that there are such paths.

Now, we can proceed by checking for all states $t\in T_j$ whether they satisfy $t\vDash \forall (\Phi \to \lozenge C)$ in polynomial time as described below. If this is the case, all $\dpref^{\AP}$-closest paths to $\pi$ that avoid $C$ do not satisfy $\Phi$ and hence $C$ is a $\dpref^{\AP}$-counterfactual cause for $\Phi$ on $\pi$. Otherwise, there is a path $\pi^\prime$ sharing the prefix  $L(s_0\dots s_j)$ of the trace with $\pi$, avoiding $C$, and satisfying $\Phi$. As $\pi^\prime$ belongs to the paths avoiding $C$ with minimal $\dpref^{\AP}$-distance to $\pi$, the set $C$ is not a $\dpref^{\AP}$-counterfactual cause for $\Phi$ on $\pi$ in this case.

To check whether a state $t$ satisfies $t\vDash \forall (\Phi \to \lozenge C)$, we distinguish whether $\Phi$ is a reachability or safety property:
 If $\Phi=\lozenge E$, we observe that checking whether 
$t \vDash \forall (\Phi \to \lozenge C)$ can be done by removing all states in $C$ from $\cT$ and afterwards checking whether $E$ is reachable from $t$ in the resulting transition system which can be done in polynomial time. 
If $\Phi=\Box \neg E$, we make the states in $C$ terminal in $\cT$ and check in polynomial time whether there is a path from $t$ not reaching $E$ or $C$. If such a path exists, $t \not \vDash \forall (\Phi \to \lozenge C)$. Otherwise, $t  \vDash \forall (\Phi \to \lozenge C)$.

As the construction of the sets $T_i$ and the checks whether there is a state $t\in T_i$ with $t\vDash \exists \Box \neg C$ and 
whether all states $t\in T_i$ satisfy $t\vDash \forall (\Phi \to \lozenge C)$ can be done in polynomial time, 
the problem is indeed decidable in polynomial time. For a more efficient computation, 
one could  check whether there is a state in  $T_i$ that satisfies  $\exists \Box \neg C$ before constructing the set $T_{i+1}$. 
\end{proof}

\checkhammTS*

\begin{proof}
We provide the proof for the case that $\Phi=\lozenge E$. As the terminal states $E$ are located at the last layer and $\cT$ is acyclic, the property $\Box \neg E$ can be expressed as a reachability property with a target set containing all terminal states not in $E$ as well and the proof hence works analogously.

Recall that given a path $\rho=t_1 \dots  t_k$, the distance 
\[
\dhamm (\pi,\rho) = |\{1\leq i \leq k \mid L(s_i)\not = L(t_i)\}|.
\]
 We will equip the states in $S$ with a weight function $\wgt\colon S\to \{0,1\}$ such that the distance of a path to $\pi$ is equal to the accumulated weight of that path. For this purpose, we annotate states with the layer they are on: A state $t$ on layer $i$ is denoted by $(t,i)$.
We define
\[
\wgt((t,i))=\begin{cases}
1 & \text{if $L(t)\not=L(s_i)$,} \\
0 & \text{if $L(t)=L(s_i)$.}
\end{cases}
\]
Now, any path $\rho=t_1\dots t_k$ in $\cT$ that ends on the last layer $k$ satisfies 
\[
\wgt(\rho)\eqdef \sum_{i=0}^k \wgt(t_k) = \dhamm(\pi,\rho).
\]
When checking whether $C$ is a $\dhamm$-counterfactual cause, we have to look at paths not reaching $C$. Hence, we remove all states in $C$ from $\cT$ to obtain the weighted transition system $\cT_C$. The question is whether, among the path of $\cT_C$ that are most similar to the original execution $\pi$  according to $\dhamm$, there is a path that reaches $E$.
This question is now easy to answer: We compute the shortest path from $\sinit$ to the last layer $k$ in $\cT_C$ and denote the weight by $W_k$. If there is no path to layer $k$ anymore, then $C$ is unavoidable in $\cT$ and not a cause. 
Afterwards, we compute the shortest path from $\sinit$ to $E$ in $\cT_C$ and call the weight $W_E$. If such a path does not exist, but a path to the $k$th layer exists, then $C$ is a cause because all executions avoiding $C$ also avoid $E$. Otherwise, we compare the two weights: If $W_E=W_k$, then $C$ is not a cause. A shortest path to $E$ in $\cT_C$ is at least as similar to $\pi$ as any other path in $\cT_C$ reaching the last layer. 
If $W_E>W_k$, then the shortest paths in $\cT_C$ reaching the last layer are the paths in $\cT$ that are most similar to $\pi$ among all paths avoiding $C$ and none of these paths reaches $E$. So, in this case $C$ is a $\dhamm$-counterfactual cause for $E$ on $\pi$.

As shortest paths in a weighted transition system can be computed in polynomial time and the transformations made to obtain $\cT_C$ are also easily doable in polynomial time, the described procedure can solve the problem in polynomial time.
\end{proof}

\checkghammTS*

\begin{proof}
We first consider the case that $\Phi=\lozenge E$.
Take $|\pi|$-many copies of the state space $S$ to obtain $S\times\{1,\dots, n\}$.
For each transition $s\to t$ in $\cT$ and $i<n$, there is a transition from $(s,i)\to (t,i{+}1)$. These transitions have weight $0$ if $L(t)=L(s_{i{+}1})$, otherwise, they have weight $1$.
 Furthermore, there are   transitions from $(s,n)\to (t,n)$ with weight $1$.
Finally, from each terminal state $s$ and $i<n$, there is a transition from $(s,i)\to (s,n)$ with weight $n-i$.
We denote the resulting transition system by $\cT_{\pi}$.
Now, any path to a terminal state in this new transition system $\cT_{\pi}$, which are all contained in $S\times \{n\}$, corresponds to a path $\zeta$ in $\cT$ ending in a terminal state
and the weight of the path $\zeta$ is $\dghamm(\pi,\zeta)$.

Hence, we can now check whether $C$ is a $\dghamm$-counter\-factual cause for $E$ on $\pi$ by two shortest path queries as in the proof of Theorem \ref{thm:checking_dhamm}:
We check whether the shortest path in  $\cT_{\pi}$ after removing $C$-states to a terminal state is shorter than the shortest path to a state in $E\times \{n\}$. If this is the case,  $C$ is a $\dghamm$-counter\-factual cause for $E$ on $\pi$; otherwise, it is not.

For the case that $\Phi=\lozenge \neg E$, we use the same construction of $\cT_{\pi}$. To check whether the closest paths to $\pi$ not reaching $C$ do also not reach $E$, we remove the states from $C$ and find a shortest path $\zeta$ to any terminal state if such a path exists. (If no such path exists, the closest paths to $\pi$ avoiding $C$ are infinite. Then, $C$ is a cause only if all paths in 
$\cT_{\pi}$ without $C$ reach $E$.) Afterwards, we compute the shortest path to a state in $E$. $C$ is a cause iff
 the shortest path to $E$ is strictly shorter than $\zeta$.
\end{proof}

\checklevTS*

\begin{proof}
Let $\cT_{\dlev}^\pi$ be as above.
In oder to check whether $C$ is a $\dlev$-counterfactual cause for $\Phi$ on $\pi$, we first remove all states in $C\times\{1,\dots,n\}$ from $\cT_{\dlev}^\pi$.
If there is no path not reaching the removed states anymore, then there are no paths in $\cT$ avoiding $C$ and $C$ is not a cause.

Otherwise, we first consider the case that $\Phi=\lozenge E$ and we compute the shortest path according to the weight function $\wgt$ in $\cT_{\dlev}^\pi$ without the states $C\times\{1,\dots,n\}$  and the shortest path to $E\times\{n\}$. If there is no path to $E\times\{n\}$ anymore, but there are still paths, then $C$ is a $\dlev$-counterfactual cause.
If $E\times\{n\}$ is still reachable,  $C$ is a $\dlev$-counterfactual cause for $\lozenge E$ on $\pi$ if and only if the shortest path to $E\times \{n\}$ is longer than the shortest path to any terminal state in $\cT_{\dlev}^\pi$. Note that infinite paths always have infinite distance to $\pi$ and are hence less similar than any path to $E\times \{n\}$.

To see this, assume that the shortest path $\tau$ to $E\times \{n\}$ has weight $w$ and is at least as short as any path to a terminal state.
This path $\tau$ induces a maximal path in $\cT$ that avoids $C$, reaches $E$, and has $\dlev$-distance $w$ to $\pi$. If there would be another maximal path $\rho$ in $\cT$ that avoids $C$ and has a smaller $\dlev$-distance to $\pi$, this would induce a path $\tau^\prime$ in $\cT_{\dlev}^\pi$ without the states $C\times\{1,\dots,n\}$ to a terminal stste with weight less than $w$.

Analogously, if the shortest path $\tau$ in $\cT_{\dlev}^\pi$ without the states $C\times\{1,\dots,n\}$ to a terminal state is shorter than the shortest path to $E\times \{n\}$, then
this path corresponds to a $\dlev$-closest path to $\pi$ avoiding $C$ and any path avoiding $C$ and reaching $E$ has a larger $\dlev$-distance to $\pi$.

If $\Phi=\Box \neg E$, we can proceed analogously. We compute the shortest path to $E\times \{n\}$ after removing $C\times\{1,\dots,n\}$. If no such path exists, but there are still paths in $\cT_{\dlev}^\pi$ not reaching  $C\times\{1,\dots,n\}$, then $C$ is not a cause as all executions avoiding $C$ also satisfy $\Box\neg E$.

If a path to $E\times \{n\}$ still exists, its length is finite.
Furthermore, we compute the shortest path to a terminal state not in $E\times \{n\}$  in $\cT_{\dlev}^\pi$ without  $C\times\{1,\dots,n\}$. If such a path does not exist, all paths avoiding
$C$ and $E$ have infinite distance to $\pi$ and, hence, $C$ is a $\dlev$-counterfactual cause for $\Phi=\Box\neg E$.
If such a path exists and the shortest such path is at most as long as  the shortest path to 
$E\times \{n\}$, $C$ is not a $\dlev$-counterfactual cause for $\Phi=\Box\neg E$, as there are paths reaching $E$ among the $\dlev$-closest paths to $\pi$. Otherwise, the $\dlev$-closest paths avoiding $C$ reach $E$ and $C$ is  a $\dlev$-counterfactual cause for $\Phi=\Box\neg E$.
\end{proof}

\subsection{Omitted proofs of Section \ref{sec:HP-causality}}

\butforcausality*

\begin{proof}
Let $k=|X|$. If $\xi$ is a path with $\dhamm(\xi,\pi)<k$, then there must be at least one variable $x_i\in X$ where $\xi$ makes a default decision. Hence, $\xi\vDash \lozenge C_X$ in this case.

Next, there is a path $\hat{\pi}$ that departs from $\pi$ exactly at the nodes where decisions for the variables in $X$ are made. This path satisfies $\dhamm(\hat{\pi},\pi)=k$ and $\hat{\pi}\vDash \neg \lozenge C_X$. As all variables are binary, the path $\hat{\pi}$ always assigns the value that the only intervention would assign to the variables in $X$. As $X$ is a but-for-cause, the leave in $\cT$ that is reached by $\hat{\pi}$ does not satisfy the effect.

Finally, we observe that the path $\hat{\pi}$ is the only path with $\dhamm(\hat{\pi},\pi)=k$ and $\hat{\pi}\vDash \neg \lozenge C_X$. Any path $\xi$ that satisfies $\xi\vDash \neg \lozenge C_X$ has to reach one state labelled with $\{\mathsf{intervention}\}$ per variable in $X$. On the other hand, if $\dhamm(\xi,\pi)=k$, the path $\xi$ can only choose the action $\mathsf{intervention}$ $k$ times. If it does so precisely for the variables in $X$, then $\xi=\hat{\pi}$; otherwise $\xi\vDash \lozenge C_X$ as we have seen.
\end{proof}

\subsection{Relation to actual causality}
\label{app:actual_causality}

\paragraph*{Actual causes \`a la Halpern-Pearl}
Let $f_1,\dots,f_n$ be equations of a structural equation system on variables $X_1,\dots, X_n$  and let $E$ be an effect set of variable valuations.
A set $X\subseteq \{X_1,\dots, X_n\}$ of variables is called an \emph{actual cause} for $E$ if 
it is a minimal subset of variables with the following property:
there are values $\alpha_x$ for $x\in X$ and a set $Y\subseteq \{X_1,\dots,X_n\}\setminus X$ such that after obtaining a variable assignment for $X_1,\dots, X_n$ by evaluating the equations in the structural equation system while setting variables $x\in X$ to $\alpha_x$ by intervention and afterwards replacing the values of variables in $Y$ with their values in the default evaluation without intervention, the effect does not occur.

More precisely, let $v_1,\dots,v_n$ be the values of $X_1,\dots,X_n$ in the default evaluation without interventions, i.e.
\[
v_i=f_i(v_1,\dots,v_{i-1})
\]
for all $i$.
Now, let $w_1,\dots, w_n$ be the values of the variables $X_1,\dots, X_n$ if variables $x\in X$ are set to $\alpha_x$ by intervention, while variables $x\not\in X$ are evaluated by the structural equations. So,
\[
w_i =\begin{cases}
 f_{i}(w_1,\dots,w_{i-1})  & \text{ if }X_i\not\in X,\\
 \alpha_{X_i}& \text{ if }X_i\in X,
\end{cases}
\]
for all $i$.
Now, consider the valuation $V$ given by
\[
X_i = \begin{cases}
w_i & \text{ if $x_i\not \in Y$,}\\
v_i & \text{ if $x_i\in Y$.}
\end{cases}
\]
Now, $X$ is an actual cause for $E$ if it is minimal among all sets for which intervention values $\alpha_x$ for $x\in X$ and a set $Y$ exist such that the valuation $V$ as constructed above does not belong to $E$.

\paragraph*{Capturing actual causality}
We now illustrate how actual causality can be captured by counterfactual causality using a weighted Hamming distance as described in Remark \ref{rem:weightedHamming}.

To this end, we extend the transition system $\cT$ constructed from a structural equation model in Section \ref{sec:HP-causality}:
After each leaf $s$ of $\cT$, we enter a further binary tree in which all variables, one after the other, can be reset to their original value that they would have had if no interventions had taken place (i.e., variable $X_i$ may be reset to $v_i$ in the notation of the paragraph above).
If a state is reached by a reset, we label it with $\{\mathsf{reset}\}$.
Finally, we assign the following weights to changes in the labels:
A change from $\emptyset$ to $\{\mathsf{intervention}\}$ has weight $+1$; a change from $\emptyset$ to $\{\mathsf{reset}\}$ has weight $+1/(n+1)$.

The default path $\pi$ is now the path that always uses the  $\mathsf{default}$-action and afterwards does not reset any variables.

Given an actual cause $X$ and a reset set $Y_X$ of minimal cardinality, define $C_X$ as the set of states are reached by using the $\mathsf{default}$-action to assign a value to a variable in $X$ or by choosing not to reset a variable in $Y_X$.

\begin{proposition}
Assume we start with a structural equation system over Boolean variables $X_1,\dots, X_n$ and let all notation be as above. 
Then, $C_X$ is a $\dhamm^w$-counterfactual cause for $E$ on $\pi$ where $\dhamm^w$ is the weighted Hamming distance.
\end{proposition}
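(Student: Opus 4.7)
The plan is to imitate the argument used for Proposition~\ref{prop:butfor}, but with a careful bookkeeping of the two kinds of label changes (intervention and reset) weighted by the $\dhamm^w$ metric. The first step is to read off the constraints a path $\rho$ avoiding $C_X$ must satisfy. Since $C_X$ contains exactly (i) the $\mathsf{default}$-successor for each $X_i\in X$ and (ii) the ``do-not-reset'' successor for each $X_i\in Y_X$, any $\rho\models \Box\neg C_X$ must choose $\mathsf{intervention}$ at every variable-decision for $X_i\in X$ and $\mathsf{reset}$ at every reset-decision for $X_i\in Y_X$. Outside these forced positions, $\rho$ is free to choose either option. The first condition of Definition~\ref{def:conterfactual_TS} is then immediate: the path $\hat\pi$ that intervenes exactly on $X$ and resets exactly on $Y_X$ avoids $C_X$ and is a valid maximal execution.

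Next I would compute the $\dhamm^w$-distance of any such $\rho$ to $\pi$. Since $\pi$ takes only $\mathsf{default}$-actions and no resets (so its labels are uniformly $\emptyset$), writing $k$ for the number of additional interventions $\rho$ makes outside $X$ and $r$ for the number of additional resets $\rho$ makes outside $Y_X$, we get
\[
\dhamm^w(\pi,\rho) \;=\; (|X|+k) + \tfrac{|Y_X|+r}{n+1}.
\]
Because each extra intervention contributes $1$ while \emph{all} possible extra resets contribute at most $\tfrac{n}{n+1}<1$, every term is nonnegative and minimization forces $k=0$ and $r=0$. Therefore the minimal distance is $|X|+|Y_X|/(n+1)$, and it is achieved by any path that intervenes exactly on $X$ and resets exactly on $Y_X$. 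Under the Boolean assumption, intervention values are uniquely determined (the flip of the default), the values of variables outside $X$ are fixed by the structural equations, and the reset/non-reset choices are now prescribed everywhere; hence the minimal-distance path is \emph{unique}, namely $\hat\pi$.

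Finally, I would check that $\hat\pi$ does not reach $E$. Following the definitions, the leaf reached by $\hat\pi$ in the extended tree carries, for each $X_i\in Y_X$, the default value $v_i$, and for each $X_i\notin Y_X$ the value produced by the intervened evaluation, i.e.\ $\alpha_{X_i}$ if $X_i\in X$ and $f_i(w_1,\dots,w_{i-1})$ otherwise. This is precisely the valuation $V$ from the definition of an actual cause, which by assumption does not belong to $E$. Hence the unique $\dhamm^w$-closest path to $\pi$ avoiding $C_X$ satisfies $\Box\neg E$, and $C_X$ is a $\dhamm^w$-counterfactual cause for $E$ on $\pi$. The main subtle point of the argument — and the one worth emphasising when writing it up in full — is the choice of the weight $1/(n+1)$: it is small enough that a single ``expensive'' extra intervention is always more costly than any collection of ``cheap'' extra resets, which is exactly what makes the restriction to the Hamming-type minimiser coincide with the combinatorial object (intervene on $X$, reset on $Y_X$) used in the Halpern--Pearl definition.
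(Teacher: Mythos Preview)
Your argument is correct and follows essentially the same route as the paper: exhibit the path $\hat\pi$ that intervenes exactly on $X$ and resets exactly on $Y_X$, compute its distance $|X|+|Y_X|/(n+1)$, show that any other path avoiding $C_X$ has strictly larger distance (using the Boolean assumption for uniqueness), and conclude via the definition of actual cause that $\hat\pi\not\vDash\lozenge E$.

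One small remark: your closing comment about the weight $1/(n+1)$ is off target for \emph{this} proposition. Once a path is constrained to avoid $C_X$, it already must intervene on every variable in $X$ and reset every variable in $Y_X$; minimality of the distance then forces $k=r=0$ simply because both contributions are nonnegative, and this holds for \emph{any} positive reset weight. The specific choice $1/(n+1)$ is not used anywhere in the proof (nor in the paper's), so I would drop that paragraph or rephrase it as a remark about the design of the encoding rather than as a ``subtle point of the argument.''
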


\begin{proof}
The path $\zeta$ that makes interventions for all variables in $X$ and that resets all variables in $Y_X$ satisfies
\[
\dhamm^{w}(\zeta,\pi)=|X|+\frac{|Y_X|}{n+1}.
\]
Furthermore, this path $\zeta$ does not reach $C_X$.
As $X$ is an actual cause for $E$ and $Y_X$ is a possible set of variables to reset to avoid $E$ after performing interventions on the variables in $X$, 
the path $\zeta$ ends in a leaf containing a valuation for $X_1,\dots ,X_n$ that does not exhibit the effect $E$.

Reasoning as in the proof of Proposition \ref{prop:butfor}, we see that any path $\xi$ with $\xi\vDash \neg \lozenge C_X$ satisfies
\[
\dhamm^{w}(\xi,\pi)\geq|X|+\frac{|Y_X|}{n+1}
\]
As such a path has to move to states labeled with $\{\mathsf{intervention}\}$ at least whenever a variable in $X$ is assigned a value, i.e., at least $|X|$-many times, and similarly, has to reset at least the variables in $Y_X$, i.e., move to a state labeled  with $\{\mathsf{reset}\}$ at least $|Y_X|$-many times.

Furthermore, any path $\xi^\prime$ that satisfies $\neg \lozenge C_X$ and that is not identical to $\zeta$ has to perform at least one additional intervention or reset and hence satisfies
\[
\dhamm^{w}(\xi^\prime,\pi)>|X|+\frac{|Y_X|}{n+1}.
\]
Note that here we use that all variables are Boolean and hence there is only one possible intervention for each variable.

So, the only path $\zeta$ that has minimal $\dhamm^w$-distance to $\pi$ among the paths avoiding $C_X$ does not reach $E$ and so $C_X$ is a $\dhamm^w$-counterfactual cause for $E$ on $\pi$.
\end{proof}

\section{Omitted proofs of Section \ref{sec:counterfactual_games}}
\label{app:proof_counterfactual_games}
 
In the following proof the notion of an attractor is used. Intuitively, an attractor for a set of states $C$ and a player $\Pl$ is a region of a game in which $\Pl$ can force the game to reach $C$ together with the edges in this region.
Formally, we define an attractor for $\Pl$ to a set of vertices $C$ such 
that $\Locs^0_C = C$, $\Trans^0_C = \emptyset$, and for all $n \in \N$, we have 
\begin{align*}
\Locs^{n+1} = \Locs^n 
&\cup \{\loc \in \Locs_{\Pl} ~|~ \exists (\loc, \loc') \in \Trans 
\text{ such that } \loc' \in \Locs^n \} \\
&\cup \{\loc \in \Locs\setminus\Locs_{\Pl}~|~ \forall (\loc, \loc') \in \Trans 
\text{ such that } \loc' \in \Locs^n \} 
\end{align*}
and $\Trans^{n+1} = \{(\loc, \loc') ~|~ \loc \in \Locs^{n+1} \text{ and } \loc' \in 
V^n \}$. The attractor is the limit of theses sequences, i.e. 
$\Locs^*_C =\lim_{n \to +\infty} \Locs^n_C$, and $\Trans^*_C =\lim_{n \to +\infty} 
\Trans^n_C$. By definition, for all vertices $\loc \in \Locs^*_{\LocsT}$, 
$(\Locs^*_{\LocsT}, \loc, \Trans^*_{\LocsT})$ defines a 
reachability games where \ReachPl wins.

\checkprefgames*
\begin{proof}
We  provide the proof for player $\Pi=\SafePl$. The proof for player $\ReachPl$ goes analogously.

From $\game$ and the MD-strategy $\sigma$ for $\SafePl$, we obtain a transition system in which non-deterministic choices only take place in the location $\LocsReach$.
In this transition system, we can check in polynomial time whether there is a path reaching $C$ and $\LocsT$, i.e., whether there is a $\sigma$-play that reaches $C$ and $\LocsT$.
Furthermore, we can determine in polynomial time whether the player $\SafePl$ can win in the reachability game $\game_C$ that has the same arena as $\game$, but where the target set for $\ReachPl$ is $C$. To do so, we compute the attractor $A_C$ of locations from which $\ReachPl$ can enforce that $C$ is reached, which we need again later, in polynomial time. So, we can check the first two conditions of Definition \ref{def:game_conterfactual-cause} in polynomial time. If the check is successful, we continue as follows.

In order to check the counterfactual condition 3, we have to find the largest number $n$ such that there are strategies following $\sigma$ for $n$ steps, but still avoid $C$.
For this purpose, we construct a sequence of sets of locations that indicate where $\sigma$-plays can end after successively increasing number of steps $i$. We set $T_0=\{\locinit\}$.
Given $T_i$, we compute 
\begin{align*}
T_{i+1}=&\{t \in V \mid \text{ there is a $s\in T_i\cap \LocsSafe$ with $\sigma(s)=(s,t)$}\} \\
&\cup
\{ t\in V \mid \text{ there is a $s\in T_i\cap \LocsReach$ with $(s,t)\in \Trans$}\} \\
& \cup
(T_i\cap \LocsT) .
\end{align*}
Now, we check whether $T_{i+1}\cap A_C=\emptyset$. If this is the case, player $\SafePl$ can still avoid $C$ after following $\sigma$ for $i+1$ steps and we continue to compute $T_{i+2}$.
Otherwise, we have found that after following $\sigma$ for $i+1$ steps, it is not possible for $\SafePl$ to avoid $C$. Hence, we have found the largest number $n=i$ of steps after which this is still possible. Note that we find this $n$ after less than $V$-many steps as there is a simple $\sigma$-play reaching $C$ by condition 1, which has already been checked.
So, the sequence $T_0,\dots, T_n$ is computable in polynomial time.

Now, we have to check whether all MD-strategies $\tau$ for player $\SafePl$ that follow $\sigma$ for $n$ steps and avoid $C$ win in $\game$.
We construct a transition system: First, we let $T=\bigcup_{0\leq i\leq n}T_i$. At all these locations that are reachable within $n$ steps under $\sigma$, the strategies $\tau$ under consideration have to follow $\sigma$. So, we remove all transitions $(s,t)$ from a state $s\in T\cap \LocsSafe$ with $(s,t)\not = \sigma(s)$.
Furthermore, the strategies have to avoid $C$ and hence, we remove all locations from $A_C$ and their ingoing transitions. Let us call the set of remaining locations $S$ and the set of remaining transitions $\to$. Now, we check in polynomial time whether $\LocsT$ is reachable in the transition system $(S,\locinit,\to)$.
If this is not the case, $C$ is a $\dpref^H$ counterfactual cause for the fact that $\sigma$ loses in $\game$. Otherwise, $C$ is not such a cause.
To see this, note that any path in $(S,\locinit,\to)$ is a play under a strategy $\tau$ that follows $\sigma$ for $n$ steps and avoids $C$. Such a strategy belongs to the $\dpref^H$ closest strategies to $\sigma$ avoiding $C$. Conversely, any play under a strategy that follows $\sigma$ for $n$ steps and avoids $C$ is a path in $(S,\locinit,\to)$.
\end{proof}

\checkhammgames*

Before to make the proof, we recall some properties on shortest-path games.
A \emph{shortest-path games} is a game $\widetilde{\game} = (\widetilde{\Locs}, \locinit, \Trans, \weight)$
between players \MinPl and \MaxPl 
where $\game = (\widetilde{\Locs}, \locinit, \Trans)$ with 
$\widetilde{\Locs} = \LocsMin \uplus \LocsMax \uplus \Locs_T$ is a reachability game 
and $\weight : \Trans\to \Z$ is a weighted function on transitions of \game.
The objective of \MinPl is to reach $\Locs_T$, while 
minimising the weight of the play  \looseplay where
$\weight(\looseplay)=+\infty$ if \looseplay is infinite, and 
and $\weight(\looseplay) = \sum_{i=0}^{k-1} \weight(\loc_i,\loc_{i+1}$ if 
$\looseplay = \loc_0\cdots \loc_k$  is finite and $\loc_k \in \Locs_T$. 

The value
$\sCost(\loosestrategy) = \sup_{\winstrategy \mid \text{strategy of \MaxPl}} 
\weight(\outcomes(\locinit, \loosestrategy, \winstrategy))$ is the cost 
of a strategy \loosestrategy for $\MinPl$ where $\outcomes(\locinit, \loosestrategy, \winstrategy)$ is the unique 
\loosestrategy-play and \winstrategy-play  starting in \locinit. 
Then, the cost of the game is defined by 
$\cost = \inf_{\loosestrategy \mid \text{strategy of \MinPl}}
\sCost(\loosestrategy)$. In particular, a strategy \loosestrategy of \MinPl 
(resp. \MaxPl) is \emph{optimal} when $\sCost(\loosestrategy) \leq \cost$ (resp. 
$\sCost(\loosestrategy) \geq \cost$).
In shortest-path games \game with only non-negative weights, \cost 
is computable in polynomial time. Moreover, we can compute an optimal memoryless 
strategy for \MinPl and \MaxPl in polynomial time (see~\cite{KhachiyanBBEGRZ-07}).

\begin{proof}[Proof of Theorem~\ref{thm:checking-hamm-game}]
	We provide the proof for player $\Pl = \ReachPl$. The proof for player $\SafePl$ goes analogously. 
	
	As in the proof of Theorem~\ref{thm:checking-pref-game}, we can test in polynomial time the first two conditions 
	of Definition~\ref{def:game_conterfactual-cause}: the first one is given by testing reachability of $C$ in the 
	transition system defined by $\game$ and $\winstrategy$, the second one is given by computing an attractor to $C$ 
	for \SafePl. If the check is successful, we continue as follows. 
	
	In order to check the last counterfactual condition, we use a shortest-path game with only non-negative weights 
	that can be solved in polynomial time (see~\cite{KhachiyanBBEGRZ-07}). To do it, we define two shortest-path games\footnote{The idea of this proof is analogous that the proof of Theorem~\ref{thm:checking_dhamm}.} where 
	weights are only in $\{0, 1\}$.   
	In particular, we denote by $\widetilde{\game}_1 = (\widetilde{\Locs}_1, \locinit, \Trans, \weight)$  the shortest path 
	game such that $\widetilde{\Locs}_1  = \Locs \setminus A_C$ where $A_C$ is the set of locations in the attractor of $C$ for the player \SafePl, and for all transitions $(\loc, \loc') \in \Trans$, 
	$\weight(\loc, \loc') = 1$ if and only if $\loc \in \LocsReach$ and $\loosestrategy(\loc) \neq (\loc, \loc')$. Otherwise 
	the weight of a transition is $0$. Now, we fix $\weight_1$ be the value obtained in $\widetilde{\game}_1$ by \ReachPl. 
	Intuitively, $\cost_1$ is the minimal number of change in $\loosestrategy$ to obtain a winning strategy that avoids $C$. 
	Formally, let \strategy be such a strategy and \play be a \strategy-play, and by definition of $\widetilde{\game}_1$, 
	we have
	\begin{displaymath}
	\cost(\play) = |\{\loc_i ~|~ \loc_i \in \play \text{ and } 
	\strategy(\loc_i) \neq \loosestrategy(\loc_i)\}| = \dhamm^s(\strategy, \loosestrategy) \,.
	\end{displaymath}
	In particular, we obtain that  
	\begin{displaymath}
	\cost_1 = \min_{\substack{\strategy \\ \text{winning strategy}}} \sCost(\strategy) = 
	\sup_{\substack{\play \\ \text{\strategy-play}}} \dist(\play, \loosestrategy) \,.
	\end{displaymath} 
	
	To conclude, we need to check if all losing strategies that avoid $C$ are a distance to \loosestrategy greater that 
	$\cost_1$. To do it, we use a new shortest-path game. To define it, we note that, since $\game$ is acyclic, player \ReachPl 
	loses only when he reaches a final location, i.e. a location without any outgoing transition, that is not in $\LocsT$. 
	In the following, we denote this set by $\LocsNegT$. Now, we let $\widetilde{\game}_2 = (\widetilde{\Locs}_2, \locinit, \Trans, \weight)$ be the shortest-path game such that $\widetilde{\Locs}_2  = \widetilde{\Locs}_1$ such that $\widetilde{\Locs}_{\textsl{Eff}} = \LocsNegT$ and $\widetilde{\Locs}_{\neg\textsl{Eff}} = \LocsT$. 
	Moreover, for all transitions $(\loc, \loc') \in \Trans$, 
	$\weight(\loc, \loc') = 1$ if and only if $\loc \in \LocsReach$ and $\loosestrategy(\loc) \neq (\loc, \loc')$. Otherwise 
	the weight of a transition is $0$. Now, we fix $\weight_2$ be the value obtained in $\widetilde{\game}_2$ by \ReachPl. 
	Intuitively, $\cost_2$ is the minimal number of change in $\loosestrategy$ to obtain a losing strategy that avoids $C$. 
	Formally, by the link with the cost of a play and the distance $\dhamm^s$, we obtain that  
	\begin{displaymath}
	\cost_2 = \min_{\substack{\strategy \\ \text{losing strategy}}} \sCost(\strategy) = 
	\sup_{\substack{\play \\ \text{\strategy-play}}} \dist(\play, \loosestrategy) 
	\end{displaymath} 
	since we have reversed the winning condition in $\widetilde{\game}_2$.
	
	Finally, we check in polynomial time whether $\cost_1 < \cost_2$, otherwise $C$ is not a $\dhamm^s$-counterfactual cause. 
	Indeed, by interpreting the optimal cost in $\widetilde{\game}_1$ and $\widetilde{\game}_2$, we note that $\cost_1 < \cost_2$ 
	if and only if the strategies that avoid $C$ with a minimal distance to \loosestrategy is winning. 
\end{proof}

\section{Proof of Proposition~\ref{prop:dHH_NP-c}}
\label{app:distance-strategies}

In this appendix, we give the detailed proof of Proposition~\ref{prop:dHH_NP-c}.

\propdHH*

To prove this Proposition, we will prove that deciding if 
$\dstrat{\winstrategy}(\loosestrategy) \geq k$ is NP-complete. The NP-easiness 
is given by guessing a \winstrategy-play  \winplay such that 
$\dist(\winplay, \loosestrategy) \geq k$. This check can be done in polynomial time 
since we simply count the number of vertices along \winplay where \loosestrategy does 
not choose the same edge as \winstrategy. 

For the NP-hardness, we provide a reduction from the problem of the \emph{longest path problem} 
in directed graphs that is a NP-hard problem \cite{Schrijver-03}. In particular, 
given a directed graph $G$, the longest path problem ask if there exists a path 
in $G$ that visits at most one time each vertices of $G$ with a length at least 
$k$.

Let $G = (L, \Delta)$ be an instance of the longest path problem.
Intuitively, we define  a reachability game \game in which each vertex of $L$ is duplicated and one copy of the state belongs to each of the players. 
 In the vertex belonging to 
\Pl, there are the choices to go to the copy of the vertex belonging to the opponent or  to go 
 to the effect vertex. In the states belonging to the opponent $\neg\Pl$  edges are given by $\Delta$ and lead to the copies belonging to $\Pl$ (see 
Figure~\ref{fig:ex_longest-path}). Formally, we define \game such that 
$\Locs = \Locs_{\Pl} \cup \Locs_{\neg\Pl} \cup \{\locT\}$ where
$\Locs_{\Pl} = L\times\{0\}$, $\Locs_{\neg \Pl} = L\times\{1\} \cup \{\locinit\}$, and 
\begin{align*} 
\Trans 
&= \{((\loc,0), ),(\loc,1)) ~|~ \loc \in L \} \\
&\cup \{((\loc,0), \locT) ~|~ \loc \in L \} \\
&\cup \{((\loc,1), (\loc',0)) ~|~  (\loc, \loc') \in \Delta \} \\
&\cup \{(\locinit, (\loc,0)) ~|~ \loc \in L\}
\end{align*}
To conclude this reduction, we fix $\loosestrategy(\loc) = ((\loc,0),(\loc,1))$ 
and $\winstrategy(\loc) = ((\loc,0), \locT)$ for all vertices of \Pl. 
\figurename~\ref{fig:ex_longest-path} gives the reduction from the graph (i.e. an 
instance of the longest path problem) depicted on the left and the reachability 
game obtained with its non-winning MD-strategy (in green) and the winning MD-strategy 
(in pink) on the right. 
Now, we claim:

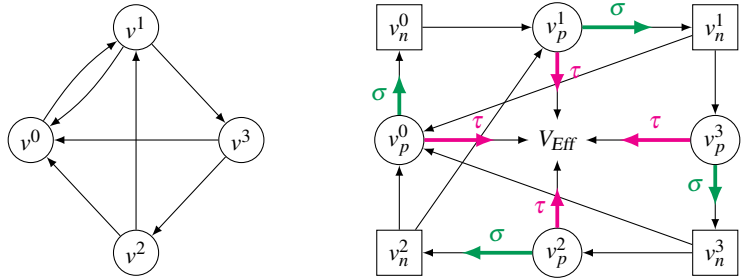
\begin{figure}[tbp]
	\centering
	\begin{tikzpicture}[xscale=.7,every node/.style={font=\footnotesize}, 
	every label/.style={font=\scriptsize}]
	\node[PlayerReach] at (0, 0) (v0) {$\loc^0$};
	\node[PlayerReach] at (2, 1.5) (v1) {$\loc^1$};
	\node[PlayerReach] at (2,-1.5) (v2) {$\loc^2$};
	\node[PlayerReach] at (4, 0) (v3) {$\loc^3$};
	
	\draw[->] 
	(v0) edge[bend left=10] (v1)
	(v1) edge[bend left=10] (v0)
	(v1) edge (v3)
	(v2) edge (v1)
	(v2) edge (v0)
	(v3) edge (v2)
	(v3) edge (v0);
	
	\begin{scope}[xshift=7cm]
	\node[PlayerReach] at (0, 0) (v0) {$\loc^0_p$};
	\node[PlayerReach] at (3, 1.5) (v1) {$\loc^1_p$};
	\node[PlayerReach] at (3,-1.5) (v2) {$\loc^2_p$};
	\node[PlayerReach] at (6, 0) (v3) {$\loc^3_p$};
	\node[PlayerSafe] at (0, 1.5) (s0) {$\loc^0_n$};
	\node[PlayerSafe] at (6, 1.5) (s1) {$\loc^1_n$};
	\node[PlayerSafe] at (0,-1.5) (s2) {$\loc^2_n$};
	\node[PlayerSafe] at (6, -1.5) (s3) {$\loc^3_n$};
	\node[target] at (3,0) (t) {$\LocsT$};
	
	\node[strat] at (0,1) (st0) {};
	\node[strat] at (5,1.5) (st1) {};
	\node[strat] at (1,-1.5) (st2) {};
	\node[strat] at (6,-1) (st3) {};
	
	\node[strat] at (2,0) (st01) {};
	\node[strat] at (3,.5) (st11) {};
	\node[strat] at (3,-.5) (st21) {};
	\node[strat] at (4,0) (st31) {};
	
	\draw[->] 
	(v0) edge (s0)
	(v0) edge (t)
	(v1) edge (s1)
	(v1) edge (t)
	(v2) edge (s2)
	(v2) edge (t)
	(v3) edge (s3)
	(v3) edge (t)
	(s0) edge (v1)
	(s1) edge (v0)
	(s1) edge (v3)
	(s2) edge (v1)
	(s2) edge (v0)
	(s3) edge (v2)
	(s3) edge (v0)
	;
	
	\draw[->,ForestGreen, line width=0.5mm] 
	(v0) edge node[left] {\textcolor{ForestGreen}{\loosestrategy}} (st0)
	(v1) edge node[above] {\textcolor{ForestGreen}{\loosestrategy}} (st1)
	(v2) edge node[above] {\textcolor{ForestGreen}{\loosestrategy}} (st2)
	(v3) edge node[left] {\textcolor{ForestGreen}{\loosestrategy}} (st3);
	
	\draw[->,Magenta, line width=0.5mm] 
	(v0) edge node[above, near end] {\textcolor{Magenta}{\winstrategy}} (st01)
	(v1) edge node[right] {\textcolor{Magenta}{\winstrategy}} (st11)
	(v2) edge node[left] {\textcolor{Magenta}{\winstrategy}} (st21)
	(v3) edge node[above] {\textcolor{Magenta}{\winstrategy}} (st31);
	\end{scope}
	\end{tikzpicture}
	\caption{On the left, we have a cyclic graph $G$ be an instance of the 
		longest path problem. On the right, we define \game be the reachability 
		game obtained by the reduction (without the initial vertex).}
	\label{fig:ex_longest-path}
\end{figure}

\begin{lemma}
	The following statements are equivalent:
	\begin{enumerate}
		\item in $G$, there exists a path \play that contains  each 
		vertex of $L$ at most once such that $|\play| \geq k$;
		\item in \game, $\dstrat{\winstrategy}(\loosestrategy) \geq k$.
	\end{enumerate}
\end{lemma}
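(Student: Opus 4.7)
My plan is to establish the biconditional by translating between simple paths in $G$ and plays in $\game$ that visit many distinct $\Pl$-vertices. The key observation is that, by the construction of $\loosestrategy$ and $\winstrategy$, the two strategies disagree at \emph{every} $\Pl$-vertex $(\loc,0)$: one moves to $(\loc,1)$ and the other to $\locT$. Consequently $\dist(\play,\cdot)$ along a suitable play $\play$ reduces to counting the number of distinct $\loc\in L$ whose copy $(\loc,0)$ appears on $\play$, and the biconditional becomes a correspondence between simple paths in $G$ of length at least $k$ and plays in $\game$ visiting at least $k$ distinct $\Pl$-vertices.

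For the forward direction $(1)\Rightarrow(2)$, given a simple path $\loc^{1},\loc^{2},\ldots,\loc^{k}$ in $G$, I would construct the play that starts $\locinit,(\loc^{1},0),(\loc^{1},1),(\loc^{2},0),(\loc^{2},1),\ldots,(\loc^{k},0),\ldots$. This is valid in $\game$ because $\locinit$ has an edge to every $(\loc,0)$, each edge $(\loc^{j},1)\to(\loc^{j+1},0)$ exists by virtue of $(\loc^{j},\loc^{j+1})\in \Delta$, and the moves at $\Pl$-vertices can be made consistent with whichever of $\loosestrategy$ or $\winstrategy$ is needed. Because the $\loc^{j}$ are pairwise distinct, the play visits $k$ distinct $\Pl$-vertices, so $\dist$ against the opposing strategy is at least $k$, giving the required bound on $\dstrat{\winstrategy}(\loosestrategy)$.

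For the converse $(2)\Rightarrow(1)$, given a play in $\game$ visiting $k$ distinct $\Pl$-vertices $(\loc^{1},0),\ldots,(\loc^{k},0)$ ordered by first appearance, I would argue that $\loc^{1},\ldots,\loc^{k}$ forms a simple path in $G$. I expect the main obstacle to lie precisely in this extraction: between two consecutive first appearances, the play may traverse already-visited $\Pl$-vertices, so the first-appearance sequence does not automatically produce consecutive edges of $\Delta$. Resolving this---either by restricting to plays whose projection onto $L$ is itself a simple path and arguing that the supremum is still attained on such plays, or by a direct rearrangement argument exploiting the bipartite structure of $\game$ induced by the split of $\Locs$ into copies indexed by $0$ and $1$---will be the technical heart of the converse, and is the step I would spend the most care on.
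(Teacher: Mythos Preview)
Your forward direction is essentially the paper's: lift the simple path $\loc^{1},\ldots,\loc^{k}$ in $G$ to the $\loosestrategy$-play $\locinit,(\loc^{1},0),(\loc^{1},1),(\loc^{2},0),\ldots$ and observe that $\loosestrategy$ and $\winstrategy$ disagree at each of the $k$ distinct $\Pl$-vertices visited.

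For $(2)\Rightarrow(1)$ the paper takes a shorter route than the one you outline. Instead of listing the $\Pl$-vertices by first appearance and then having to argue that consecutive first appearances are joined by an edge of $\Delta$, the paper projects the given play to a walk in $G$ and simply takes the \emph{prefix up to the first repeated vertex}, asserting that ``from then on the play loops''. Under that assertion every distinct $\Pl$-vertex already occurs in this initial segment, which is by construction a simple path in $G$ of the required length; the obstacle you anticipate never arises and no rearrangement argument is needed.

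It is worth noting, though, that the clause ``from then on the play loops'' tacitly treats the opponent as memoryless: if arbitrary opponent behaviour is allowed in the supremum, a play can revisit some $(\loc,1)$ and leave to a fresh $(\loc',0)$, and then your concern is genuine. For instance, if $G$ is a star with centre $a$ and leaves $b,c,d$ with edges in both directions, a $\loosestrategy$-play can visit all four $\Pl$-vertices while every simple path in $G$ contains at most three. So the paper's extraction is really the clean argument once one restricts attention to plays against a memoryless opponent; your identified difficulty is precisely what shows that some such restriction (or an additional argument that the supremum is attained on such plays) is needed.
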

\begin{proof}
	We suppose that there exists a path \play that contains at most one time each 
	vertex of $L$ such that $|\play| \geq k$. This path can be seen as a \loosestrategy-play  \looseplay that enters a loop in the last step.
	We observe  that $\dist(\looseplay, \winstrategy) = |\play| \geq k$. In particular, 
	by applying the supremum over \loosestrategy-plays, we conclude that 
	$\dstrat{\winstrategy}(\loosestrategy) \geq k$.
	
	Conversely, we suppose that $\dstrat{\winstrategy}(\loosestrategy) \geq k$. In 
	particular, there exists a \loosestrategy-play denoted \looseplay such that 
	$\dist(\looseplay, \loosestrategy) \geq k$. 
	Now, by projecting \looseplay in $G$, 
	i.e. by keeping only choices of vertices of the opponent, we obtain a path $\xi$ until a state is repeated for the first time (from then on the play loops) with 
	$|\play| = \dist(\looseplay, \loosestrategy) \geq k$.
\end{proof}

Now, we have tools to prove Proposition~\ref{prop:dHH_NP-c}.
\begin{proof}[Proof of Proposition~\ref{prop:dHH_NP-c}]
	To conclude, we need to prove that the reduction is  a polynomial-time reduction. 
	In particular, we remark that $|\Locs| \leq 2|L| + 1$, and  
	$|\Trans| \leq |\Delta| + 2|L|$. Thus, the reduction from $G$ into \game 
	with \loosestrategy and \winstrategy is given in polynomial time.
\end{proof}

\section{Proof of Theorem~\ref{thm:strat_finding-NP-c}}
\label{app_FVS-CSS}

In this appendix, we make the detailed proof of Theorem~\ref{thm:strat_finding-NP-c}.
\stratFindingNPc*

Now, we prove Theorem~\ref{thm:strat_finding-NP-c}. As explained in the main paper, 
we prove the NP upper bound for $ \dhamm^s$ since we can test the winning and compute 
the Hamming strategy distance in polynomial time.

 Hence, we will focus on 
the hardness proofs. To do that, we use polynomial-time many-one reduction to the problem for $\dhamm^s$ and a polynomial-time Turing reduction to the problem for \dHH from the decision 
version of the \emph{feedback vertex set} problem that is NP-complete by \cite{Karp1972}.
Given a cyclic (directed) graph $G$, the feedback vertex set problem  
consists of finding the minimal set of vertices of $G$, denoted $S$, such that if 
we remove this set, $G \setminus S$ becomes acyclic. Its decision version takes a 
threshold $k \in N$ and asks the existence of this set $S$ such that 
$|S| \leq k$.

\paragraph*{NP-hardness of the problem for $\dhamm^s$.}
We consider a (cyclic) directed graph $G = (L, \Delta)$ to be an instance of the 
feedback vertex set problem. Intuitively, from $G$, we define 
$\game = (\Locs, \locinit, \Trans)$ by adding a vertex of \ReachPl before each 
\SafePl that is given by $G$. Thus, vertices of \SafePl keep the possible choices 
from $G$, and vertices of \ReachPl can choose between staying in the game by the 
choice of \loosestrategy or reaching the effect (see Figure~\ref{fig:ex_feedback}).
Formally, we define \game such that 
$\Locs = \LocsReach \cup \LocsSafe \cup \{\locT\}$ where
$\LocsReach = L\times\{0\}$, $\LocsSafe = L\times\{1\} \cup \{\locinit\}$, and 
\begin{align*} 
\Trans 
&= \{((\loc,0), ),(\loc,1)) ~|~ \loc \in L \} \\
&\cup \{((\loc,0), \locT) ~|~ \loc \in L \} \\
&\cup \{((\loc,1), (\loc',0)) ~|~  (\loc, \loc') \in \Delta \} \\
&\cup \{(\locinit, (\loc,0)) ~|~ \loc \in L\}
\end{align*}
To conclude this reduction, we fix $\loosestrategy(\loc) = ((\loc,0), (\loc,1))$ 
for all vertices of \ReachPl. \figurename{~\ref{fig:ex_feedback}} gives the 
reduction from the graph (i.e. an instance of the feedback vertex set problem) 
depicted on the left and the reachability game obtained with its non-winning 
MD-strategy on the right. 
Now, we can compute this reduction is given in polynomial time since $|\Locs| \leq 2|L| + 1$, and  
$|\Trans| \leq |\Delta| + 2|L|$. Thus, the reduction from $G$ into \game 
with \loosestrategy is given in polynomial time. 
Finally, we prove the correctness of this reduction.

\begin{figure}[t]
	\centering
	\begin{tikzpicture}[xscale=.7,every node/.style={font=\footnotesize}, 
	every label/.style={font=\scriptsize}]
	\node[PlayerReach] at (0, 0) (v0) {$\loc^0$};
	\node[PlayerReach] at (2, 1) (v1) {$\loc^1$};
	\node[PlayerReach] at (2,-1) (v2) {$\loc^2$};
	\node[PlayerReach] at (4, 0) (v3) {$\loc^3$};
	
	\draw[->] 
	(v0) edge[bend left=10] (v1)
	(v1) edge[bend left=10] (v0)
	(v1) edge (v3)
	(v2) edge (v1)
	(v2) edge (v0)
	(v3) edge (v2)
	(v3) edge (v0);
	
	\begin{scope}[xshift=7cm]
	\node[PlayerReach] at (0, 0) (v0) {$\loc^0_r$};
	\node[PlayerReach] at (3, 1) (v1) {$\loc^1_r$};
	\node[PlayerReach] at (3,-1) (v2) {$\loc^2_r$};
	\node[PlayerReach] at (6, 0) (v3) {$\loc^3_r$};
	\node[PlayerSafe] at (0, 1) (s0) {$\loc^0_s$};
	\node[PlayerSafe] at (6, 1) (s1) {$\loc^1_s$};
	\node[PlayerSafe] at (0,-1) (s2) {$\loc^2_s$};
	\node[PlayerSafe] at (6, -1) (s3) {$\loc^3_s$};
	\node[target] at (3,0) (t) {$\LocsT$};
	
	\node[strat] at (0,.75) (st0) {};
	\node[strat] at (5,1) (st1) {};
	\node[strat] at (1,-1) (st2) {};
	\node[strat] at (6,-.75) (st3) {};
	
	\draw[->] 
	(v0) edge (s0)
	(v0) edge (t)
	(v1) edge (s1)
	(v1) edge (t)
	(v2) edge (s2)
	(v2) edge (t)
	(v3) edge (s3)
	(v3) edge (t)
	(s0) edge (v1)
	(s1) edge (v0)
	(s1) edge (v3)
	(s2) edge (v1)
	(s2) edge (v0)
	(s3) edge (v2)
	(s3) edge (v0)
	
	(v0) edge[ForestGreen, line width=0.5mm] node[left] 
	{\textcolor{ForestGreen}{\loosestrategy}} (st0)
	(v1) edge[ForestGreen, line width=0.5mm] node[above] 
	{\textcolor{ForestGreen}{\loosestrategy}} (st1)
	(v2) edge[ForestGreen, line width=0.5mm] node[above] 
	{\textcolor{ForestGreen}{\loosestrategy}} (st2)
	(v3) edge[ForestGreen, line width=0.5mm] node[left] 
	{\textcolor{ForestGreen}{\loosestrategy}} (st3);
	\end{scope}
	\end{tikzpicture}
	\caption{On the left, we have a cyclic graph $G$ as in the
		feedback vertex set problem. On the right, the game  \game  as in the reduction (without the initial vertex) is depicted.}
	\label{fig:ex_feedback}
\end{figure}
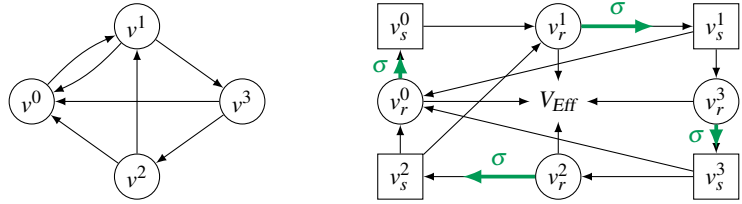

\begin{lemma}
	\label{lem:dhamm-NPhard}
	The both propositions are equivalent
	\begin{enumerate}
		\item in \game, there exists a winning MD-strategy for \ReachPl such 
		that $\dhamm^s(\winstrategy, \loosestrategy) \leq k$;
		\item in $G$, there exists a set of vertices $S$ such that $|S| \leq k$ 
		and $G \setminus S$ is acyclic.
	\end{enumerate}
\end{lemma}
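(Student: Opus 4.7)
The plan is to prove the equivalence by exploiting the binary choice structure at $\ReachPl$ vertices: from each $(\loc,0)$, any MD-strategy for $\ReachPl$ either goes to $(\loc,1)$ (agreeing with $\loosestrategy$) or jumps to $\locT$. Hence MD-strategies $\winstrategy$ for $\ReachPl$ are in bijection with subsets $S_\winstrategy = \{\loc \in L \mid \winstrategy(\loc,0) = ((\loc,0), \locT)\}$, and $\dhamm^s(\winstrategy, \loosestrategy) = |S_\winstrategy|$. So the bound $\dhamm^s(\winstrategy,\loosestrategy) \leq k$ translates literally into $|S_\winstrategy| \leq k$, and the whole content of the lemma is the claim that $\winstrategy$ is winning iff $G \setminus S_\winstrategy$ is acyclic. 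I would also observe that one may assume w.l.o.g.\ that $G$ has no sinks (vertices with no outgoing edge in $\Delta$): adding or removing such sinks does not affect the feedback vertex set problem, so NP-hardness is preserved.

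For the direction $1 \Rightarrow 2$, suppose $\winstrategy$ is winning with associated set $S$ and, towards a contradiction, that $G \setminus S$ contains a simple cycle $\loc_{i_1}, \loc_{i_2}, \dots, \loc_{i_m}, \loc_{i_1}$. I would exhibit a $\SafePl$-strategy that, starting from $\locinit$, moves to $(\loc_{i_1}, 0)$ and then, at every $\SafePl$-vertex $(\loc_{i_j}, 1)$, selects the edge to $(\loc_{i_{j+1}\bmod m}, 0)$; since none of the $\loc_{i_j}$ lies in $S$, the strategy $\winstrategy$ always chooses to stay in the game, so the resulting infinite play never reaches $\locT$, contradicting that $\winstrategy$ is winning.

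For the direction $2 \Rightarrow 1$, assume $G \setminus S$ is acyclic with $|S| \leq k$ and define $\winstrategy$ by $\winstrategy(\loc,0) = ((\loc,0),\locT)$ iff $\loc \in S$. Consider any $\winstrategy$-play $\pi$ from $\locinit$. After $\SafePl$'s first choice it reaches some $(\loc_0, 0)$. Each time the play is at a vertex $(\loc, 0)$ with $\loc \in S$, $\winstrategy$ moves to $\locT$ and the play ends successfully. Otherwise it passes through $(\loc,1)$ and $\SafePl$ picks an edge $(\loc,\loc')\in \Delta$. Projecting to $L$, the sequence of visited vertices before reaching $S$ forms a path in $G \setminus S$; by acyclicity this path has length at most $|L|$, so either it terminates at a vertex whose outgoing edges in $G$ all go to $S$ (forcing the next move into $S$ and then $\locT$) or, under the no-sink assumption, it enters $S$ directly. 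In both cases $\locT$ is reached, so $\winstrategy$ is winning.

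The main conceptual obstacle is the treatment of dead-end vertices in $G$: a sink $\loc \in L \setminus S$ would leave $\SafePl$ stuck at the non-target state $(\loc,1)$, breaking the clean correspondence between winning strategies and acyclic subgraphs. I expect this to be handled cleanly by the preprocessing remark above, noting that the reduction and the NP-hardness of feedback vertex set are unaffected by the assumption that every vertex of $G$ has positive out-degree. With that in place, combining the two directions yields the claimed equivalence, and hence NP-hardness of deciding the existence of a winning MD-strategy at $\dhamm^s$-distance at most $k$ from $\loosestrategy$.
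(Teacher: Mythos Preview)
Your proof is correct and follows essentially the same approach as the paper: in both directions, the set $S$ of vertices at which $\winstrategy$ deviates from $\loosestrategy$ is the feedback vertex set, the bound $\dhamm^s(\winstrategy,\loosestrategy)\le k$ is read off as $|S|\le k$, a cycle in $G\setminus S$ is lifted to an infinite $\winstrategy$-play avoiding $\locT$, and conversely acyclicity of $G\setminus S$ forces every $\winstrategy$-play to eventually enter $S$ and hence $\locT$. The paper's writeup is terser and does not explicitly discuss the sink issue you raise; your observation that one may assume without loss of generality that every vertex of $G$ has positive out-degree is a legitimate way to make the argument airtight, since removing sinks (iteratively) does not change the minimum feedback vertex set and preserves NP-hardness.
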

\begin{proof}
	We suppose that there exists a winning MD-strategy \winstrategy for \ReachPl 
	player such that $\dhamm^s(\winstrategy, \loosestrategy) \leq k$ in 
	\game. From \winstrategy we define $S \subseteq \LocsReach$ be the set 
	of vertices where \winstrategy and \loosestrategy are distinct, i.e. 
	$S = \{\loc ~|~ \winstrategy(\loc) \neq \loosestrategy(\loc) \text{ and } 
	\loc \in \LocsReach\}$. By definition of the Hamming strategy distance, we know 
	that $|S| \leq k$. To conclude, we prove that $G \setminus S$ is 
	acyclic. By contradiction, we suppose that $c = \loc_0 \loc_1 \cdots \loc_n 
	\loc_0$ be a cycle of $G \setminus S$. By definition of $S$, we know that 
	for all vertices of $c$ that belong to \ReachPl satisfy 
	$\winstrategy(\loc) = \loosestrategy(\loc)$. In particular, there exists a 
	\winstrategy-play that is cyclic and that never reaches the effect. Thus, 
	\winstrategy is non winning.
	
	Conversely, we suppose that there exists a set of vertex $S$ such that 
	$|S| \leq k$ and $G \setminus S$ is acyclic. From $S$, we define an 
	MD-strategy for \ReachPl such that, for all $\loc \in \LocsReach$, 
	we have
	\begin{displaymath}
	\winstrategy(\loc_s) = 
	\begin{cases}
	\loosestrategy(\loc_s) & \text{if $\loc \notin S$;} \\
	\locT & \text{otherwise.}
	\end{cases}
	\end{displaymath}
	Since the graph $G \setminus S$ is acyclic, then all \winstrategy-play 
	reach the effect (by definition of \game). Thus, this MD-strategy is winning. 
	Moreover, the MD-strategy satisfy that $\dhamm^s(\winstrategy, \loosestrategy) 
	\leq k$ since the vertices where \winstrategy and \loosestrategy are  
	distinct is exactly the vertices of $S$. By hypothesis, $|S| \leq k$, thus 
	$\dhamm^s(\winstrategy, \loosestrategy) = |S| \leq k$.
\end{proof}

\paragraph*{Polynomial-time Turing reduction to the problem for \dHH.}
To conclude the proof of Theorem~\ref{thm:strat_finding-NP-c}, we need to 
adapt the previous reduction to the case of \dHH. However, the reduction can not 
directly apply in the case of \dHH. Indeed, when we define $S$ from a winning 
MD-strategy with a distance less than $k$, we can not guarantee its size (see 
the following example).

\begin{figure}[tbp]
	\centering
	\begin{tikzpicture}[xscale=.7,every node/.style={font=\footnotesize}, 
	every label/.style={font=\scriptsize}]
	\node[PlayerReach] at (0, 1.5) (v0) {$\loc^0$};
	\node[PlayerReach] at (0, 0) (v1) {$\loc^1$};
	\node[PlayerReach] at (2,-1.5) (v2) {$\loc^2$};
	\node[PlayerReach] at (2, 0) (v3) {$\loc^3$};
	\node[PlayerReach] at (-2,-1.5) (v4) {$\loc^4$};
	\node[PlayerReach] at (-2, 0) (v5) {$\loc^5$};
	
	\draw[->] 
	(v0) edge[bend left=10] (v1)
	(v1) edge[bend left=10] (v0)
	(v2) edge[bend left=10] (v3)
	(v3) edge[bend left=10] (v2)
	(v4) edge[bend left=10] (v5)
	(v5) edge[bend left=10] (v4)
	(v1) edge (v2)
	(v1) edge (v4);
	
	\begin{scope}[xshift=10cm]
	\node[PlayerReach] at (-1, 1.5) (v0) {$\loc^0_r$};
	\node[PlayerReach] at (0, 0) (v1) {$\loc^1_r$};
	\node[PlayerReach] at (2,-1.5) (v2) {$\loc^2_r$};
	\node[PlayerReach] at (2, 0) (v3) {$\loc^3_r$};
	\node[PlayerReach] at (-2,-1.5) (v4) {$\loc^4_r$};
	\node[PlayerReach] at (-2, 0) (v5) {$\loc^5_r$};
	\node[PlayerSafe] at (1, 1.5) (s0) {$\loc^0_s$};
	\node[PlayerSafe] at (0, -1.5) (s1) {$\loc^1_s$};
	\node[PlayerSafe] at (4,-1.5) (s2) {$\loc^2_s$};
	\node[PlayerSafe] at (4, 0) (s3) {$\loc^3_s$};
	\node[PlayerSafe] at (-4,-1.5) (s4) {$\loc^4_s$};
	\node[PlayerSafe] at (-4, 0) (s5) {$\loc^5_s$};
	
	\node[strat] at (0.5,1.5) (st0) {};
	\node[strat] at (0,-1) (st1) {};
	\node[strat] at (3.5,-1.5) (st2) {};
	\node[strat] at (3.5,0) (st3) {};
	\node[strat] at (-3.5,-1.5) (st4) {};
	\node[strat] at (-3.5,0) (st5) {};
	
	\draw[->] 
	(v0) edge (s0)
	(v1) edge (s1)
	(v2) edge (s2)
	(v3) edge (s3)
	(v4) edge (s4)
	(v5) edge (s5)
	(s0) edge (v1)
	(s1) edge[bend left=10] (v0)
	(s1) edge (v2)
	(s1) edge (v4)
	(s2) edge (v3)
	(s3) edge (v2)
	(s4) edge (v5)
	(s5) edge (v4)
	
	(v0) edge[ForestGreen, line width=0.5mm] node[above] 
	{\textcolor{ForestGreen}{\loosestrategy}} (st0)
	(v1) edge[ForestGreen, line width=0.5mm] node[right] 
	{\textcolor{ForestGreen}{\loosestrategy}} (st1)
	(v2) edge[ForestGreen, line width=0.5mm] node[above] 
	{\textcolor{ForestGreen}{\loosestrategy}} (st2)
	(v3) edge[ForestGreen, line width=0.5mm] node[above] 
	{\textcolor{ForestGreen}{\loosestrategy}} (st3)
	(v4) edge[ForestGreen, line width=0.5mm] node[above] 
	{\textcolor{ForestGreen}{\loosestrategy}} (st4)
	(v5) edge[ForestGreen, line width=0.5mm] node[above] 
	{\textcolor{ForestGreen}{\loosestrategy}} (st5);
	\end{scope}
	\end{tikzpicture}
	\caption{On the left, we have a cyclic graph $G$ be an instance of the 
		feedback vertex set. On the right, we define \game be the reachability 
		game obtained with the reduction for Hamming strategy distance (without 
		the initial states).}
	\label{fig:ex_feedback-wrong}
\end{figure}
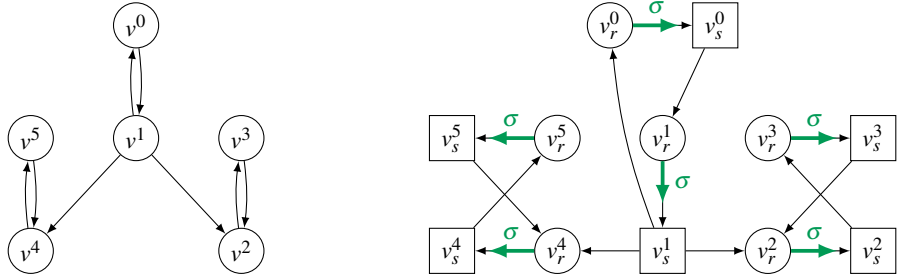

\begin{example}[Counter-example to the reduction for \dHH]
	\label{ex:feedbcak-wrong}
	We consider the directed graph $G$ depicted in the left of 
	Figure~\ref{fig:ex_feedback-wrong}. The size of a minimal feedback vertex set 
	in $G$ is $3$ (like we need to break the three independent elementary 
	cycles). 
	
	Now, we consider the reachability game \game and the non-winning strategy 
	\loosestrategy for \ReachPl depicted in the right of 
	\figurename{~\ref{fig:ex_feedback-wrong}}. In \game, a winning MD-strategy 
	\winstrategy that minimises \dHH satisfies 
	$\dHH(\loosestrategy, \winstrategy) = 2$ since 
	$\dstrat{\winstrategy}(\loosestrategy) = 1$ and 
	$\dstrat{\loosestrategy}(\winstrategy) = 2$ to cut the two cycles visiting by 
	some \winstrategy-play. Thus, \dHH between the both MD-strategies is less than 
	the minimal feedback vertex set.
	\markend
\end{example}

\begin{remark}
	If we extend the bound on the size of $S$ to avoid this behaviour, we have a 
	problem when $G$ contains only an SCC (like there exists a 
	\loosestrategy-play that reaches all vertices of \game). Indeed, in this 
	case, we can not define \winstrategy with the good distance from a set $S$.
\end{remark}

However, we can adapt the reduction in the case of \dHH by considering only 
instances of the feedback vertex set problem  with only one SCC. 
We consider a directed graph $G = (L, \Delta)$ to be an SCC that is an instance 
of the feedback vertex set problem. From the previous reduction, we define 
$\game = (\Locs, \locinit, \Trans)$ with a non-winning MD-strategy. This reduction 
is depicted in Figure {\ref{fig:ex_feedback}}. By the same arguments as for $\dhamm^s$, 
we prove that the reduction is given in polynomial time. Moreover, we can adapt the proof of 
Lemma~\ref{lem:dhamm-NPhard} to prove the correctness of the reduction for \dHH.

\begin{lemma}
	\label{lem:dHH-NPhard}
	The two following statements are equivalent
	\begin{enumerate}
		\item in \game, there exists a winning MD-strategy for \ReachPl such that 
		$\dHH(\winstrategy, \loosestrategy) \leq k$;
		\item in $G$, there exists a set of vertex $S$ such that $|S| \leq k$ and 
		$G \setminus S$ is acyclic.
	\end{enumerate}
\end{lemma}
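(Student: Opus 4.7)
The plan is to adapt the proof of Lemma~\ref{lem:dhamm-NPhard} to \dHH, exploiting the fact that $G$ is a single SCC in order to control both suprema in the definition
$\dHH(\winstrategy,\loosestrategy)=\max\bigl(\dstrat{\winstrategy}(\loosestrategy),\dstrat{\loosestrategy}(\winstrategy)\bigr)$. Throughout, I will write $S_\tau\eqdef\{v\in L\mid \winstrategy((v,0))\neq\loosestrategy((v,0))\}=\{v\in L\mid \winstrategy((v,0))=\locT\}$, since for each $v_r=(v,0)$ only two outgoing choices exist: $(v,1)$ (taken by $\loosestrategy$) and $\locT$.

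For the direction $1\Rightarrow 2$, given a winning MD-strategy $\winstrategy$ with $\dHH(\winstrategy,\loosestrategy)\leq k$, I would first show that $G\setminus S_\tau$ is acyclic exactly as in the proof of Lemma~\ref{lem:dhamm-NPhard}: any cycle $v_0\cdots v_n v_0$ in $G\setminus S_\tau$ yields, via $\SafePl$'s choices at $v_i^s$, an infinite \winstrategy-play that never visits $\locT$, contradicting that $\winstrategy$ is winning. The bound $|S_\tau|\leq k$ is the point that differs from the $\dhamm^s$-case; the key computation is $\dstrat{\loosestrategy}(\winstrategy)=|S_\tau|$. Indeed, along any \loosestrategy-play $\looseplay$ the only vertices where $\winstrategy$ can disagree with $\looseplay$'s choice are the $v_r$ with $v\in S_\tau$ that actually occur on $\looseplay$, and since $G$ is an SCC and $\SafePl$ has full freedom in $v_s$-vertices, there is a single \loosestrategy-play visiting every $v_r$ with $v\in S_\tau$, so the supremum realises $|S_\tau|$. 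Combined with $\dHH(\winstrategy,\loosestrategy)\leq k$, this gives $|S_\tau|\leq k$.

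For the direction $2\Rightarrow 1$, given $S\subseteq L$ with $|S|\leq k$ and $G\setminus S$ acyclic, I would define the MD-strategy
\[
\winstrategy((v,0))=\begin{cases}\locT & \text{if }v\in S,\\ \loosestrategy((v,0)) & \text{otherwise,}\end{cases}
\]
and verify that $\winstrategy$ is winning: any infinite \winstrategy-play would project to an infinite path in $G\setminus S$, contradicting acyclicity. It then remains to bound $\dHH(\winstrategy,\loosestrategy)$. As before, $\dstrat{\loosestrategy}(\winstrategy)=|S|$ since $G$ is an SCC. For the symmetric term $\dstrat{\winstrategy}(\loosestrategy)$, note that every \winstrategy-play visits $\ReachPl$-vertices $v_r$ with $v\notin S$ (where $\winstrategy$ agrees with $\loosestrategy$) until it reaches a first $v_r$ with $v\in S$, after which it immediately moves to $\locT$; thus each \winstrategy-play contributes at most $1$ to $\dist(\cdot,\loosestrategy)$, so $\dstrat{\winstrategy}(\loosestrategy)\leq 1\leq|S|$. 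Consequently $\dHH(\winstrategy,\loosestrategy)=|S|\leq k$.

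The main conceptual obstacle — and the reason only SCCs are considered in this reduction, as \emph{Example}~\ref{ex:feedbcak-wrong} already illustrates — is precisely the handling of $\dstrat{\loosestrategy}(\winstrategy)$: we need a single \loosestrategy-play that witnesses the $|S_\tau|$ disagreements simultaneously, and strong connectedness of $G$ is exactly what guarantees this. Once this observation is in place, the rest of the argument mirrors the $\dhamm^s$-proof closely.
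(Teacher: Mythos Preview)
Your proposal is correct and follows essentially the same approach as the paper: both directions hinge on the observation that, because $G$ is strongly connected, some \loosestrategy-play can be routed through every \ReachPl-vertex, which yields $\dstrat{\loosestrategy}(\winstrategy)=|S_\tau|$ and hence the size bound, while the acyclicity/winning arguments and the bound $\dstrat{\winstrategy}(\loosestrategy)\leq 1$ are handled exactly as in the $\dhamm^s$-case. The paper argues via a single \loosestrategy-play visiting \emph{all} vertices of \game rather than just those in $S_\tau$, but this is the same idea.
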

\begin{proof}
	We suppose that there exists a winning MD-strategy \winstrategy for 
	\ReachPl such that $\dHH(\winstrategy, \loosestrategy) \leq k$ in 
	\game. From \winstrategy we define $S \subseteq \LocsReach$ be the set 
	of vertices where \winstrategy and \loosestrategy are distinct, i.e. 
	$S = \{\loc ~|~ \winstrategy(\loc) \neq \loosestrategy(\loc) \text{ and } 
	\loc \in \LocsReach\}$. Since $G$ is a SCC, there exists a 
	\loosestrategy-play \looseplay such that \looseplay reaches all 
	vertices of \game. By hypothesis on the distance between \winstrategy 
	and \loosestrategy, we know that $\dist(\looseplay, \winstrategy) \leq k$. 
	Thus, we deduce that the number of vertices of \game such that 
	\winstrategy and \loosestrategy are distinct is at least $k$, and 
	$|S| \leq k$. To conclude, we prove that $G \setminus S$ is acyclic. By 
	contradiction, we suppose that $c = \loc_0 \loc_1 \cdots \loc_n \loc_0$ be a 
	cycle of $G \setminus S$. By definition of $S$, we know that for all 
	vertices of $c$ that belong to \ReachPl satisfy 
	$\winstrategy(\loc) = \loosestrategy(\loc)$. In particular, there exists a 
	\winstrategy-play that is cyclic and that never reaches the effect. Thus, 
	\winstrategy is non winning.
	
	Conversely, we suppose that there exists a set of vertex $S$ such that 
	$|S| \leq k$ and $G \setminus S$ is acyclic. From $S$, we define an 
	MD-strategy for \ReachPl such that for all $\loc_r \in \LocsReach$, 
	we have 
	\begin{displaymath}
	\winstrategy(\loc_r) = 
	\begin{cases}
	\loosestrategy(\loc_r) & \text{if $\loc \notin S$;} \\
	\locT & \text{otherwise.}
	\end{cases}
	\end{displaymath}
	Since the graph $G \setminus S$ is acyclic, then all \winstrategy-play 
	reach the effect. Otherwise, the graph $G \setminus S$ will contains a cycle. 
	Thus, this MD-strategy is winning. Moreover, the MD-strategy satisfy that 
	$\dHH(\winstrategy, \loosestrategy) \leq k$ since 
	$\dstrat{\winstrategy}(\loosestrategy) = 1$ (if \winstrategy does not 
	choose the same transition than \loosestrategy, then the play reaches 
	$\LocsT$), and $\dstrat{\winstrategy}(\loosestrategy) \leq k$ since the worst 
	case is when there exists a \looseplay that be a \loosestrategy-play 
	and that reaches all vertices of $G$, in this case $\dist(\looseplay, 
	\winstrategy) \leq k$ by definition of $S$. 
\end{proof}

To conclude the proof of the theorem, we 
will show that the feedback vertex set problem is not in P if P$\not=$NP when we 
restrict the family of graphs to graphs that contain only an SCC.

\paragraph*{Feedback vertex set problem on SCCs.}
Formally, the feedback vertex set problem on SCCs is defined by a directed 
graph $G$ that is strongly connected and a threshold $k \in \N$. In this problem, we 
ask if there exists a set of vertices of $G$, denoted $S$, such that $|S|\leq k$ 
and $G \setminus S$ is acyclic. We provide a polynomial-time Turing reduction from the general feedback vertex set problem to the feedback vertex set problem on SCCs.

\begin{proposition}
	\label{prop:FVS-SCC-NP-h}
	The feedback vertex set problem on SCC is not in P if P$\not=$NP.
\end{proposition}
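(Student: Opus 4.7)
The plan is to give a polynomial-time Turing reduction from the (general) feedback vertex set problem, which is NP-hard by Karp, to the feedback vertex set problem on strongly connected graphs. Given a general instance $(G,k)$, the reduction will decompose $G$ into its strongly connected components, query the oracle on each component, and combine the answers.

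First, I would compute the SCC decomposition $C_1, \dots, C_m$ of $G$ in polynomial time, e.g., via Tarjan's algorithm. The crucial structural observation is that every directed cycle of $G$ lies entirely within a single SCC: if a cycle visits vertices $u$ and $v$, then $u$ and $v$ are mutually reachable, hence in the same SCC. Consequently, a set $S \subseteq V(G)$ is a feedback vertex set of $G$ if and only if $S \cap V(C_i)$ is a feedback vertex set of the induced subgraph on $C_i$ for every $i$. Writing $f(H)$ for the minimum FVS size of a graph $H$, this yields the decomposition
\[
f(G) \;=\; \sum_{i=1}^{m} f(C_i).
\]

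Each induced subgraph $C_i$ is strongly connected by definition of SCC, so it is a valid input to the restricted problem. For each $C_i$ I would determine $f(C_i)$ by binary search on the threshold: ask the oracle whether $C_i$ admits a feedback vertex set of size at most $t$, for $t$ ranging over $\{0,1,\dots,|V(C_i)|\}$. This uses $O(\log |V(C_i)|)$ oracle calls per component, hence polynomially many calls overall. Singleton SCCs can be handled directly without invoking the oracle: $f(C_i)=1$ if $C_i$ has a self-loop, and $f(C_i)=0$ otherwise. Finally, the reduction returns ``yes'' iff $\sum_{i=1}^m f(C_i) \leq k$, and this decides the general FVS instance correctly by the decomposition above.

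Since SCC decomposition, binary search, and summation are all polynomial-time, the reduction runs in polynomial time with polynomially many oracle calls, establishing the proposition. There is no real obstacle here beyond justifying the decomposition $f(G)=\sum_i f(C_i)$, which follows immediately from the fact that cycles respect SCC boundaries; the small care needed for trivial (singleton) SCCs with or without self-loops is easily dealt with in preprocessing.
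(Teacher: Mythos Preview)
Your proposal is correct and follows essentially the same approach as the paper: decompose into SCCs via Tarjan, use the fact that every directed cycle lies within a single SCC to get $f(G)=\sum_i f(C_i)$, recover each $f(C_i)$ from the decision oracle by binary search, and sum. The paper organizes this slightly differently (stating the equivalence of the optimisation problems as a separate lemma before invoking binary search), and does not single out singleton SCCs, but the underlying argument is the same.
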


To prove this result, we consider the optimisation version 
of feedback vertex set problems on SCC or not, i.e. finding the size of a set $S$ with the 
minimal size. In particular, we prove that to solve the feedback vertex set 
optimization problem, we can solve the problem on each SCC independently. Then, we will use the fact
that we can solve  the optimisation problem on SCCs using the decision version via a binary search algorithm.

\begin{lemma}
	\label{lem:FVS-SCC-NP-h}
	The following two problems are polynomial-time Turing-equivalent: 
	\begin{enumerate}
		\item computing the size of a minimal feedback vertex set in an (arbitrary) undirected graph;
		\item computing  the size of the minimal feedback vertex set in each SCC 
		of a directed graph.
	\end{enumerate}
\end{lemma}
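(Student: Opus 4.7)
The plan is to establish both directions of the polynomial-time Turing equivalence by combining a structural identity about strongly connected components with the classical NP-completeness of both variants of the feedback vertex set problem.

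The structural heart of the argument is the observation that in any directed graph $G$, every directed cycle lies entirely inside a single strongly connected component: the vertices traversed by a cycle are pairwise mutually reachable along the cycle itself, so they belong to a common SCC of $G$. Consequently, a vertex set $S$ is a feedback vertex set of $G$ if and only if $S \cap C$ is a feedback vertex set of the induced subgraph $G[C]$ for every SCC $C$ of $G$, which yields the identity
\[
	\mathrm{fvs}(G) \;=\; \sum_{C \in \mathrm{SCC}(G)} \mathrm{fvs}(G[C]),
\]
where $\mathrm{fvs}$ denotes the size of a minimum feedback vertex set. Combined with Tarjan's linear-time SCC decomposition, this identity shows that the per-SCC problem~(2) is polynomial-time Turing equivalent to the minimum FVS problem on arbitrary directed graphs: in one direction, compute the SCCs and sum the per-SCC oracle's answers; in the other, invoke a total-FVS oracle separately on each induced subgraph $G[C_i]$.

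It remains to bridge the directed-graph FVS problem to the undirected FVS problem of item~(1). Both variants are classical NP-complete problems on Karp's original list~\cite{Karp1972}, each derived by a separate reduction from Vertex Cover. Since both lie in NP and both are NP-hard, they are polynomial-time Turing equivalent through any NP-complete intermediary: given an instance of one, reduce it to SAT in polynomial time via the Cook--Levin theorem, then reduce that SAT instance to the other problem in polynomial time via the corresponding hardness reduction. Composing this bridge with the SCC equivalence above yields the Turing equivalence between (1) and (2) claimed in the lemma.

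The main obstacle is establishing the cycle-confinement-to-SCCs identity cleanly; once it is in hand, the per-SCC side is immediate. The undirected/directed bridge rests on standard NP-completeness theory and requires no further combinatorial gadgetry, although a more direct many-one reduction between the two FVS variants could replace the NP-completeness detour if a gadget preserving cycle structure (and not collapsing to vertex cover) is desired.
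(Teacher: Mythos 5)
Your core argument is the paper's own: every directed cycle lies inside a single SCC, hence a minimum feedback vertex set of $G$ decomposes as the disjoint union of minimum feedback vertex sets of the induced subgraphs $G[C]$, giving $\mathrm{fvs}(G)=\sum_{C}\mathrm{fvs}(G[C])$; Tarjan's algorithm then yields both directions of the Turing equivalence. You state this identity more explicitly than the paper does, which is a plus. Where you diverge is the extra bridge between \emph{undirected} and \emph{directed} FVS via Cook--Levin. The paper's proof never makes this step --- it works with directed graphs on both sides, and the word ``undirected'' in the statement is evidently a slip: the feedback vertex set problem is defined on directed graphs throughout this appendix, and the lemma is only invoked (in the proof of Proposition~\ref{prop:FVS-SCC-NP-h}) for directed graphs. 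Your detour is nevertheless sound for a polynomial-time \emph{Turing} equivalence, with one small caveat: the NP-completeness argument equates the decision versions, while the lemma speaks of computing sizes, so you should add the (standard) observation that the size-computation and threshold-decision versions are Turing-interreducible via binary search before composing the chain. Be aware, though, that this bridge carries no structural content --- it would equally relate FVS to any NP-complete problem --- so the mathematical substance of the lemma remains the SCC decomposition, which you have correctly.
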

\begin{proof}
	We suppose that we can solve the minimal feedback vertex set. Let $G$ be 
	a directed graph, and with Tarjan's algorithm, for example, we compute all SCC 
	of $G$ in polynomial time. Now, since each SCC is a directed graph, then in 
	each SCC of $G$, we can solve the minimal vertex set.
	
	Conversely, we suppose that we can solve independently the feedback vertex set 
	in each SCC of all directed graph. Let $G$ be a directed graph. We compute 
	all SCCs of $G$ in polynomial time with Tarjan's algorithm for example. Let 
	$S_C$ be the solution of the minimal feedback vertex set problem for the 
	SCC $C$ of $G$. We fix $S = \cup_{C}~ S_C$ be the set of vertices that contains 
	all solutions for each SCC $C$. We note that $S$ is a solution for the 
	feedback vertex set problem of $G$ like each SCC is acyclic and the graph of 
	SCC is acyclic too. Now, $S$ is also minimal size, since all $S_C$ are a 
	minimal size.
\end{proof}

Now, we can do the proof of Proposition~\ref{prop:FVS-SCC-NP-h}.

\begin{proof}[Proof of Proposition~\ref{prop:FVS-SCC-NP-h}]
	We now describe a Turing reduction from the optimization version of the feedback 
	vertex set problem on a strongly connected graph $G$ to the decision version of this problem:
	We compute  the size $K$ of the minimal feedback vertex set by a binary search 
	between $0$ and $|G|$. This requires only polynomially many calls to a subroutine for the decision version of the feedback vertex set problem.
	 Thus, we can find the size of a minimal feedback vertex set on an SCC in polynomial time using an oracle for the decision version on SCCs. Now, by Lemma~\ref{lem:FVS-SCC-NP-h}, we know that we can use this computation on each SCC to compute the size of a minimal feedback vertex set on arbitrary graphs.
	 So, if the decision version of the feedback vertex set were in P, we could compute the size of a minimal feedback vertex set on arbitrary graphs and hence also solve the NP-hard decision version. So, if P$\not=$NP, the decision version of the feedback vertex set problem on SCCs cannot be solved in polynomial time.
\end{proof}


\checkdhammsstrat*
\begin{proof}
	Checking if $E$ is an explanation can be done in polynomial time by fixing the decisions of \loosestrategy in states not in $E$ and removing the choices of \loosestrategy in state in $E$ and solving the resulting game. If it is an explanation, we know that $\dhamm^s(\tau,\sigma)=|E|$ for any $E$-distinct $\sigma$-strategy $\tau$.
	To check that $E$ is not a $\dhamm^s$-minimal explanation, we can then guess a winning strategy for $\ReachPl$ in $\game$ with $\dhamm^s(\tau,\sigma)<|E|$. So, the problem is in coNP.
	
	For the coNP-hardness,
	we provide a reduction from the problem for $\dhamm^s$ shown to be NP-complete in Theorem \ref{thm:strat_finding-NP-c} to the problem whether $E$ is not a $\dhamm^s$-minimal explanation.
	Let $\game$ with initial vertex $\locinit$, \loosestrategy, and $k$ be given as in Theorem~\ref{thm:strat_finding-NP-c}. We add $k+2$ new vertices $w_0,\dots, w_{k+1}$. 
	From $w_i$ with $i\leq k$, there are transitions to $w_{i+1}$ and to $\locinit$ and $w_i$ is controlled by $\ReachPl$. The vertex $w_{k+1}$ is a new additional target state, $w_0$ is the new initial vertex.
	Call the new game $\game^\prime$.
	We let $E=\{w_0,\dots,w_k\}$ and $\sigma^\prime$ be $\sigma$ extended with $\sigma^\prime(w_i)=(w_i,\locinit)$ for all $i$. An illustration of this construction is given 
	in Figure  \ref{fig:game_strat-cause-2}.
	
	Now, we claim that $E$ is not a $\dhamm^s$-minimal explanation if and only if there is a winning strategy $\tau$ for $\ReachPl$ in $\game$ with $\dhamm^s(\tau,\loosestrategy)\leq k$.
	Note that there is only one $E$-distinct $\sigma^\prime$-strategy $\tau^\prime$ in $\game^\prime$: this strategy chooses $(w_i,w_{i+1})$ in $w_i$ for $i\leq k$ and is winning for $\ReachPl$. Clearly, 
	$\dhamm^s(\tau^\prime,\loosestrategy^\prime)=k+1$. If there is a winning strategy $\tau$ for $\ReachPl$ in $\game$ with $\dhamm^s(\tau,\loosestrategy)\leq k$, then this strategy together with the choices $(w_i,\locinit)$ for $i\leq k$ is winning in $\game^\prime$ and has distance at most $k$ to $\sigma^\prime$.
	Conversely, any winning strategy $\tau^\prime$ in $\game^\prime$ with $\dhamm^s(\tau^\prime,\loosestrategy^\prime)\leq k$ cannot induce the play $w_0,\dots, w_{k+1}$. Hence, it has to move to $\locinit$ and consequently contain a winning strategy in $\game$ with a $\dhamm^s$-distance to $\sigma$ of at most $k$.
	
	The  proof that checking the \dHH-minimality of an explanation is not simpler than 
	the threshold problem for the distance to a winning strategy in Theorem~\ref{thm:strat_finding-NP-c} works analogously.
\end{proof}

\section{Proof of Theorem~\ref{thm:find-trans_poly}}
\label{app:poly-algo}

In this appendix, we give the detailed proof of Theorem~\ref{thm:find-trans_poly}.
\findTransPoly*

\paragraph*{Step 1: the shortest-path game reduction.}
Our goal is to compute a MD-strategy that is winning and minimises the 
distance $\dstrat{\strategy}(\loosestrategy)$. To compute this strategy, we 
use a shortest-path game to compare the distance of the different winning 
MD-strategies (that exist by hypothesis) according to the non-winning MD-strategy. 
Thus, shortest-path games allow us to find the ”best” one according to 
$\dstrat{\strategy}(\loosestrategy)$.

Formally, from a reachability game $\game = (\Locs, \locinit, \Trans)$, we define 
a shortest-path game $\widetilde{\game} = (\Locs, \locinit, \Trans, \weight)$ 
such that, for all transitions $(\loc, \loc') \in \Trans$, we fix 
\begin{displaymath}
\weight(\loc, \loc') = 
\begin{cases}
0 & \text{if $\loc \in \LocsSafe$, or if 
	$\loosestrategy(\loc) = (\loc, \loc')$;} \\
1 & \text{otherwise, i.e. if $\loc \in \LocsReach$ and 
	$\loosestrategy(\loc) \neq (\loc, \loc')$.}
\end{cases}
\end{displaymath}
This reduction (with the non-winning MD-strategy \loosestrategy is depicted in green) 
is illustrated in \figurename{~\ref{fig:ex_G-computation}}. Moreover, an MD-strategy that 
minimises $\dstrat{\strategy}(\loosestrategy)$ with \loosestrategy is \winstrategy 
depicted in pink.

\begin{figure}[tbp]
	\centering
	\begin{tikzpicture}[xscale=.7,every node/.style={font=\footnotesize}, 
	every label/.style={font=\scriptsize}]
	\node[PlayerReach] at (3, 1.5) (v1) {$\loc_1$};
	\node[PlayerSafe] at (0, 0) (v2) {$\loc_2$};
	\node[PlayerReach] at (3, -1.5) (v3) {$\loc_3$};
	\node[PlayerSafe] at (5, 0) (v0) {$\loc_0$};
	\node[target] at (0, 1.5) (t) {$\locT$};
	\node[target] at (0, -1.5) (t1) {$\locT$};
	
	\node[strat] at (1.2,.6) (st11) {};
	\node[strat] at (1,1.5) (st12) {};
	\node[strat] at (3,0) (st31) {};
	\node[strat] at (1,-1.5) (st32) {};
	
	\draw[->] 
	(v0) edge (v1)
	(v0) edge (v3)
	(v1) edge (t)
	(v1) edge (v2)
	(v2) edge (v3)
	(v3) edge (v1)
	(v3) edge (t1)
	;
	
	\draw[->,ForestGreen, line width=0.5mm]
	(v1) edge node[below] {\textcolor{ForestGreen}{\loosestrategy}} (st11)
	(v3) edge node[left] {\textcolor{ForestGreen}{\loosestrategy}} (st31)
	;
	
	\draw[->,Magenta, line width=0.5mm]
	(v1) edge node[above,yshift=.1cm] {\textcolor{Magenta}{\winstrategy}} (st12)
	(v3) edge node[below,yshift=-.1cm] {\textcolor{Magenta}{\winstrategy}} (st32);
	
	\begin{scope}[xshift=9cm]
	\node[PlayerReach] at (3, 1.5) (v1) {$\loc_1$};
	\node[PlayerSafe] at (0, 0) (v2) {$\loc_2$};
	\node[PlayerReach] at (3, -1.5) (v3) {$\loc_3$};
	\node[PlayerSafe] at (5, 0) (v0) {$\loc_0$};
	\node[target] at (0, 1.5) (t) {$\locT$};
	\node[target] at (0, -1.5) (t1) {$\locT$};
	
	\draw[->] 
	(v0) edge node[above] {$0$} (v1)
	(v0) edge node[below] {$0$} (v3)
	(v1) edge node[above] {$1$} (t)
	(v1) edge node[above] {$0$} (v2)
	(v2) edge node[above] {$0$} (v3)
	(v3) edge node[left] {$0$} (v1)
	(v3) edge node[below] {$1$} (t1);
	\end{scope}
	\end{tikzpicture}
	\caption{The transformation into $\widetilde{\game}$ (on the right) the shortest-path game 
		define to compute a winning strategy that minimises
		$\dstrat{\strategy}(\loosestrategy)$ from the reachability game \game 
		(on the left) and the non winning strategy \loosestrategy.}
	\label{fig:ex_G-computation}
\end{figure}

Now, we prove the correctness of this reduction by proving that an optimal 
MD-strategy in $\widetilde{\game}$ defines a winning strategy in \game such that 
it minimises $\dstrat{\strategy}(\loosestrategy)$. Let \winstrategy be an optimal 
MD-strategy in $\widetilde{\game}$. When we see \winstrategy like a MD-strategy 
in \game, we obtain the following Lemma.

\begin{lemma}
	\label{lem:DHH-strat_find} 
	Let \winstrategy be an optimal MD-strategy in the shortest-path game 
	$\widetilde{\game}$. Then, in the reachability game \game, \winstrategy is a 
	winning MD-strategy that minimises $\dstrat{\strategy}(\loosestrategy)$, i.e.
	$\dstrat{\winstrategy}(\loosestrategy) = 
	\min \{\dstrat{\strategy}(\loosestrategy) \mid 
	\strategy \text{ be a winning MD-strategy}\}$.
\end{lemma}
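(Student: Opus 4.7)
My plan is to argue that the shortest-path game $\widetilde{\game}$ faithfully encodes the optimization of $\dstrat{\cdot}(\loosestrategy)$ over winning MD-strategies, so that any optimal $\MinPl$-strategy in $\widetilde{\game}$ is automatically a winning distance-minimizer in $\game$.

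The first key step will be to observe that whenever $\strategy$ is a winning MD-strategy for $\ReachPl$ in $\game$, the reachable part of $\game^{\strategy}$ must be acyclic: any reachable cycle in $\game^{\strategy}$ would let $\SafePl$ loop forever, and since $\LocsT$ consists of terminal vertices such a cycle cannot meet the target, contradicting that $\strategy$ wins. It follows that every $\strategy$-play is a simple finite path ending in $\LocsT$, visiting each vertex at most once.

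Next, I plan to use this to identify $\sCost(\strategy)$ with $\dstrat{\strategy}(\loosestrategy)$ for winning $\strategy$: for any $\strategy$-play $\looseplay=\loc_0\loc_1\cdots\loc_k$, the weight $\weight(\looseplay)$ counts exactly the edges $(\loc_i,\loc_{i+1})$ with $\loc_i\in\LocsReach$ and $\loosestrategy(\loc_i)\neq(\loc_i,\loc_{i+1})$, and by simplicity of $\looseplay$ these edges are in bijection with the distinct vertices counted by $\dist(\looseplay,\loosestrategy)$. Taking the supremum over $\strategy$-plays then gives $\sCost(\strategy)=\dstrat{\strategy}(\loosestrategy)$. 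For a non-winning MD-strategy $\strategy$, some $\strategy$-play is infinite and misses $\LocsT$, so $\sCost(\strategy)=+\infty$ by the shortest-path convention. Since the hypothesis supplies a winning MD-strategy, the value $\cost$ of $\widetilde{\game}$ is finite (bounded by $|\LocsReach|$); hence any optimal MD-strategy $\winstrategy$ in $\widetilde{\game}$ has finite cost, must be winning in $\game$, and achieves $\sCost(\winstrategy)=\cost=\min\{\dstrat{\strategy}(\loosestrategy)\mid \strategy\text{ winning MD-strategy}\}$.

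The main obstacle I anticipate is the exact identification $\weight(\looseplay)=\dist(\looseplay,\loosestrategy)$: the weight counts edges (so repeated vertices would contribute repeatedly) while $\dist$ counts distinct vertices. This gap is closed precisely by the acyclicity observation in the first step, which is where the winning property of $\strategy$ enters crucially.
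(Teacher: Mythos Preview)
Your proposal is correct and follows essentially the same approach as the paper's proof: both hinge on the observation that every play under a winning MD-strategy is simple (acyclic), which allows the identification $\weight(\looseplay)=\dist(\looseplay,\loosestrategy)$ and hence $\sCost(\strategy)=\dstrat{\strategy}(\loosestrategy)$ for winning $\strategy$; the finiteness of $\cost$ then forces the optimal $\winstrategy$ to be winning and distance-minimizing. The only cosmetic difference is that the paper wraps the minimality argument in a proof by contradiction, whereas you argue directly.
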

\begin{proof}
	Since there exists a winning MD-strategy in \game, then $\sCost$ of \locinit 
	is finite, i.e. $\sCost(\winstrategy) < +\infty$. Thus, \winstrategy is a 
	winning MD-strategy in \game.
	
	To conclude the proof, we prove that \winstrategy minimises 
	$\dstrat{\strategy}(\loosestrategy)$. To do that, we reason by contradiction, 
	and we suppose that there exists \strategy be a winning MD-strategy for \ReachPl 
	in \game such that $\dstrat{\strategy}(\loosestrategy) < 
	\dstrat{\winstrategy}(\loosestrategy)$.
	
	Let \play be a \strategy-play, and by definition of $\widetilde{\game}$, 
	$\cost(\play)$ denotes the number the occurrence of vertices of \ReachPl 
	along \play such that \strategy and \loosestrategy make distinct choices. 
	Formally, we have
	\begin{displaymath}
	\cost(\play) = |\{\loc_i ~|~ \loc_i \in \play \text{ and } 
	\strategy(\loc_i) \neq \loosestrategy(\loc_i)\}|.
	\end{displaymath}
	However, like \strategy is a winning MD-strategy, \play is acyclic. 
	In particular, $\cost(\play)$ denotes the number of vertices of \ReachPl 
	along \play such that \strategy and \loosestrategy make distinct choices. 
	Thus, by definition of the distance $\dist$ between a play and a MD-strategy, 
	we have $\cost(\play) = \dist(\play, \loosestrategy)$, and by 
	applying the same reasoning for \winstrategy, we obtain that  
	\begin{displaymath}
	\sCost(\strategy) = \sup_{\substack{\play \\ \text{\strategy-play}}} 
	\dist(\play, \loosestrategy) 
	\qquad \text{and} \qquad \sCost(\winstrategy) = 
	\sup_{\substack{\winplay \\ \text{\winstrategy-play}}} 
	\dist(\winplay, \loosestrategy).
	\end{displaymath} 
	Now, since \winstrategy is an optimal MD-strategy in $\widetilde{\game}$, we 
	deduce that 
	\begin{displaymath}
	\dstrat{\strategy}(\loosestrategy) = \sCost(\strategy) \geq 
	\sCost(\winstrategy) = \dstrat{\winstrategy}(\loosestrategy).
	\end{displaymath} 
	Finally, we apply the hypothesis over distance of \strategy, and we obtain a 
	contradiction since $\dstrat{\strategy}(\loosestrategy) \geq  
	\dstrat{\winstrategy}(\loosestrategy) > 
	\dstrat{\strategy}(\loosestrategy)$.
\end{proof}

\begin{remark}
	This transformation is correct for all reachability games and non-winning 
	MD-strategies. In particular, we can always compute a winning MD-strategy 
	to minimises $\dstrat{\strategy}(\loosestrategy)$ in polynomial time. 
\end{remark}

The previous Lemma implies that an optimal MD-strategy in $\widetilde{\game}$ 
is winning and minimises $\dstrat{\strategy}(\loosestrategy)$ in \game. 
Nevertheless, this strategy does not minimise the distance $\dHH$ since it 
does not minimise $\dstrat{\loosestrategy}(\strategy)$ (see 
following example). In particular, we want to optimise 
\winstrategy according to $\dstrat{\loosestrategy}(\winstrategy)$ without 
changing $\dstrat{\winstrategy}(\loosestrategy)$. This transformation is given 
by the following steps of the algorithm.

\begin{example}
	\label{ex:dHH_looseplay}
	We consider \game be the reachability game depicted on the left of 
	\figurename{~\ref{fig:ex_G-computation}} and \loosestrategy be the non-winning 
	MD-strategy for \ReachPl depicted in green. Moreover, we consider \winstrategy, 
	depicted in pink, be a winning MD-strategy for \ReachPl that minimises 
	$\dstrat{\winstrategy}(\loosestrategy)$ 
	(but not for $\dHH$). Now, we compute \dHH by computing both terms. For the 
	first one, we have  
	$\dstrat{\winstrategy}(\loosestrategy)  
	= \max(\dist(\loc_0 \, \loc_1 \, \locT, \loosestrategy), 
	\dist(\loc_0 \, \loc_3 \, \locT, \loosestrategy)) = 1 $
	and $\dstrat{\loosestrategy}(\winstrategy) = 
	\dist((\loc_1 \, \loc_2 \, \loc_3)^{\omega}, \loosestrategy) = 2$ for the second one. 
	Thus, $\dHH(\loosestrategy, \winstrategy) = 2 = \dstrat{\loosestrategy}(\winstrategy) 
	> \dstrat{\winstrategy}(\loosestrategy)$.
	\markend
\end{example}

\paragraph*{Step 2: the set of connected component reduction.}
The goal of this second step is to define another winning MD-strategy in \game 
from \winstrategy computed in the previous step such that this new MD-strategy 
minimises the distance $\dstrat{\strategy}(\loosestrategy)$ without increasing 
the distance $\dstrat{\loosestrategy}(\winstrategy^1) = 
\dstrat{\loosestrategy}(\winstrategy)$. To do that, we can synchronise choices 
of \winstrategy with ones of \loosestrategy for all vertices where any 
\winstrategy-plays can go, i.e. for vertices that are not reachable under 
\winstrategy against all strategies of \SafePl. With this synchronisation, we 
want to decrease the distance between all \loosestrategy-play that follows any 
choice of \winstrategy (see Example~\ref{ex:synch-1}) so that, under the 
hypothesis on \game and \loosestrategy, we prove that this new MD-strategy 
minimises $\dHH$ according to \loosestrategy.

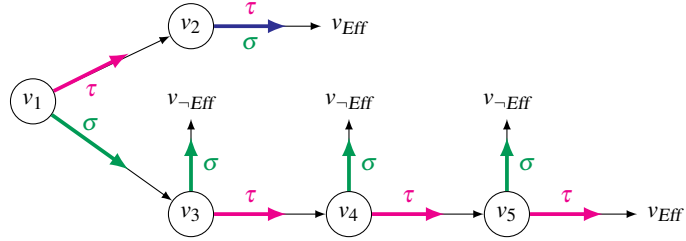
\begin{figure}[tbp]
	\centering
	\begin{tikzpicture}[xscale=.7,every node/.style={font=\footnotesize}, 
	every label/.style={font=\scriptsize}]
	\node[PlayerReach] at (0,0) (v1) {$\loc_1$};
	\node[PlayerReach] at (3, 1) (v2) {$\loc_2$};
	\node[PlayerReach] at (3,-1.5) (v3) {$\loc_3$};
	\node[PlayerReach] at (6, -1.5) (v4) {$\loc_4$};
	\node[PlayerReach] at (9, -1.5) (v5) {$\loc_5$};
	\node[target] at (6,1) (t1) {$\locT$};
	\node[target] at (3,0) (t2) {$\locNegT$};
	\node[target] at (6,0) (t3) {$\locNegT$};
	\node[target] at (9,0) (t4) {$\locNegT$};
	\node[target] at (12,-1.5) (t5) {$\locT$};
	
	\node[strat] at (2,-1) (st11) {};
	\node[strat] at (2,.7) (st12) {};
	\node[strat] at (5, 1) (st2) {};
	\node[strat] at (3,-.35) (st31) {};
	\node[strat] at (5,-1.5) (st32) {};
	\node[strat] at (6,-.35) (st41) {};
	\node[strat] at (8,-1.5) (st42) {};
	\node[strat] at (9,-.35) (st51) {};
	\node[strat] at (11,-1.5) (st52) {};
	
	\draw[->] 
	(v1) edge (v2)
	(v1) edge (v3)
	(v2) edge (t1)
	(v3) edge (v4)
	(v3) edge (t2)
	(v4) edge (v5)
	(v4) edge (t3)
	(v5) edge (t4)
	(v5) edge (t5)
	;
	
	\draw[->,ForestGreen, line width=0.5mm]
	(v1) edge node[above] {\textcolor{ForestGreen}{\loosestrategy}} (st11)
	(v3) edge node[right] {\textcolor{ForestGreen}{\loosestrategy}} (st31)
	(v4) edge node[right] {\textcolor{ForestGreen}{\loosestrategy}} (st41)
	(v5) edge node[right] {\textcolor{ForestGreen}{\loosestrategy}} (st51)
	;
	
	\draw[->,Blue, line width=0.5mm]
	(v2) edge node[above] 
	{\textcolor{Magenta}{\winstrategy}} node[below] 
	{\textcolor{ForestGreen}{\loosestrategy}} (st2)
	;
	
	\draw[->,Magenta, line width=0.5mm]
	(v1) edge node[below] {\textcolor{Magenta}{\winstrategy}} (st12)
	(v3) edge node[above] {\textcolor{Magenta}{\winstrategy}} (st32)
	(v4) edge node[above] {\textcolor{Magenta}{\winstrategy}} (st42)
	(v5) edge node[above] {\textcolor{Magenta}{\winstrategy}} (st52)
	;
	\end{tikzpicture}
	\caption{A reachability game where \winstrategy (depicted in pink) and 
		\loosestrategy (depicted in green) can be synchronised in 
		$S =\{\loc_3, \loc_4, \loc_5\}$ since if any \winstrategy-play from 
		\locinit reaches $S$.}
	\label{fig:ex_G-tree}
\end{figure}

\begin{example}
	\label{ex:synch-1}
	We consider \game be the reachability game depicted in 
	\figurename{~\ref{fig:ex_G-tree}}, and \looseplay be a non-winning MD-strategy 
	for \ReachPl depicted in green. By Lemma~\ref{lem:DHH-strat_find}, 
	we compute \winstrategy the winning MD-strategy depicted in pink that minimises 
	$\dstrat{\winstrategy}(\loosestrategy)$. However, 
	$\dstrat{\loosestrategy}(\winstrategy) = 2$ since the unique 
	\loosestrategy-play is $(\loc_1~\loc_3~\neg\LocsT)$. Now, consider 
	$\winstrategy^1$ define such that for all $\loc \in \LocsReach$
	\begin{displaymath}
	\winstrategy^1(\loc) = 
	\begin{cases}
	\winstrategy(\loc) & \text{if $\loc = \loc_1$;}\\
	\loosestrategy(\loc) & \text{otherwise.}
	\end{cases}
	\end{displaymath}
	This MD-strategy remains a winning MD-strategy in \game like the unique 
	$\winstrategy^1$-play is $(\loc_1~\loc_2~\LocsT)$ and minimises 
	$\dstrat{\winstrategy^1}(\loosestrategy)$. Moreover, 
	$\dstrat{\loosestrategy}(\winstrategy^1) = 1$ since the last choice of 
	$(\loc_1~\loc_3~\neg\LocsT)$ become a choice of $\winstrategy^1$. In 
	particular, this winning MD-strategy minimises $\dHH$.
	\markend
\end{example}

To synchronise \winstrategy with \loosestrategy in every needed vertex, we 
compute the connected components of the underlying undirected graph of 
$\game^{\winstrategy}$ and we synchronise all vertices that are not in the connected 
component of the initial vertex. More formally, let $S \subseteq \Locs$ be the 
set of vertices in the connected component of \locinit in the underlying 
undirected graph of $\game^{\winstrategy}$. From $S$ and \winstrategy, we define 
$\winstrategy^1$ such that for all vertices $\loc \in \LocsReach$, we have
\begin{displaymath}
\winstrategy^1(\loc) = 
\begin{cases}
\winstrategy(\loc) & \text{if $\loc \in S$;} \\
\loosestrategy(\loc) & \text{otherwise.}
\end{cases}
\end{displaymath}
Such a strategy is defined in Example~\ref{ex:synch-1}. To conclude this step, we 
need to prove that $\winstrategy^1$ remains optimal in $\widetilde{\game}$ and 
$\winstrategy^1$ minimises the distance $\dstrat{\loosestrategy}(\strategy)$. 

\begin{remark}
	In general case, $\winstrategy^1$ decreases the distance 
	$\dstrat{\loosestrategy}(\strategy)$, i.e. 
	$\dstrat{\loosestrategy}(\winstrategy^1) \leq  
	\dstrat{\loosestrategy}(\winstrategy)$. However, $\winstrategy^1$ does not 
	minimises $\dHH$: in the reachability game depicted in the left of 
	\figurename{~\ref{fig:ex_G-computation}}, $\winstrategy = \winstrategy^1$ and does not 
	minimise $\dHH$ as explained in Example~\ref{ex:synch-1}.
\end{remark}

First, we prove that $\winstrategy^1$ is still an optimal MD-strategy in 
$\widetilde{\game}$, i.e. it minimises the distance 
$\dstrat{\loosestrategy}(\strategy)$.

\begin{lemma}
	\label{lem:DHH-strat_transformation1}
	$\winstrategy^1$ is an optimal MD-strategy in $\widetilde{\game}$.
\end{lemma}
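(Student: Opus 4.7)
The plan is to show that $\winstrategy^1$ induces \emph{exactly the same set of plays from $\locinit$} as $\winstrategy$, so that $\sCost(\winstrategy^1) = \sCost(\winstrategy)$ in $\widetilde{\game}$. Since $\winstrategy$ was chosen optimal in $\widetilde{\game}$, this yields optimality of $\winstrategy^1$.

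First, I would introduce the set $R \subseteq \Locs$ of vertices forward-reachable from $\locinit$ in the directed graph $\game^{\winstrategy}$, i.e.\ the vertices occurring on some maximal $\winstrategy$-play starting in $\locinit$. Every edge of $\game^{\winstrategy}$ between two vertices of $R$ is also an edge of the underlying undirected graph, so $R$ is contained in the connected component $S$ of $\locinit$ in that undirected graph. Hence, for every $\loc \in R \cap \LocsReach$, the definition of $\winstrategy^1$ gives $\winstrategy^1(\loc) = \winstrategy(\loc)$.

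Next, I would use this agreement on $R$ to argue by induction on prefix length that the sets of $\winstrategy$-plays and $\winstrategy^1$-plays from $\locinit$ coincide. For any such play $\looseplay = \loc_0 \loc_1 \dots$ with $\loc_0 = \locinit$, the prefix up to any index $i$ stays in $R$; at each $\loc_i \in \LocsReach$ the two strategies agree, so the same successor is forced, and at each $\loc_i \in \LocsSafe$ both strategies allow exactly the same outgoing edges of $\game$. Consequently, every $\winstrategy^1$-play is a $\winstrategy$-play and vice versa.

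Since $\sCost(\strategy)$ in the shortest-path game $\widetilde{\game}$ is computed solely from plays starting at $\locinit$ that are consistent with $\strategy$, we conclude $\sCost(\winstrategy^1) = \sCost(\winstrategy) = \cost$, so $\winstrategy^1$ is optimal. The only delicate point is the inclusion $R \subseteq S$: forward reachability in $\game^{\winstrategy}$ clearly implies undirected reachability, which is all that is needed — using the (possibly larger) connected component $S$ in the definition of $\winstrategy^1$ is safe because the modification of $\winstrategy$ happens only \emph{outside} $R$ and therefore cannot be witnessed by any play from $\locinit$.
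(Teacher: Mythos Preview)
Your proof is correct and follows essentially the same idea as the paper's: both arguments rest on the observation that every vertex reachable from $\locinit$ under $\winstrategy$ lies in the connected component $S$, so $\winstrategy^1$ and $\winstrategy$ agree there and induce the same plays from $\locinit$. The only cosmetic difference is that you introduce the forward-reachable set $R$ explicitly and prove equality of the play sets, whereas the paper argues by contradiction that every $\winstrategy^1$-play is a $\winstrategy$-play and then invokes optimality of $\winstrategy$ to close the inequality.
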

\begin{proof}
	We want to prove that all $\winstrategy^1$-plays are 
	also a \winstrategy-play. We reason by contraction and we suppose that 
	\winplay is a $\winstrategy^1$-play but not a \winstrategy-play. In 
	particular, there exists a vertex $\loc \in \LocsReach$ such that 
	$\winstrategy^1(\loc) \neq \winstrategy(\loc)$ (and without lost of 
	generality, we take the first one). By definition of $\winstrategy^1$, we 
	know that \loc is not in the connected component that contains 
	$\locinit$ (otherwise, both decision are equal). However, the prefix of 
	\winplay until \loc is also a \winstrategy-play (like we have chosen 
	\loc like the first vertex along \winplay such that both strategies are 
	distinct). In particular, \loc is in the connect component of $\locinit$ 
	(as there exists a path between $\locinit$ and \loc in the non-directed 
	graph underlying by $\game^{\winstrategy}$). We obtain a contradiction, and 
	like all $\winstrategy^1$-plays are also a \winstrategy-play, we 
	deduce that $\sCost(\winstrategy^1) \leq \sCost(\winstrategy)$. We conclude 
	by optimality of \winstrategy.
\end{proof}

To conclude the proof of Theorem~\ref{thm:find-trans_poly}, we want to prove 
that $\winstrategy^1$ minimises $\dstrat{\loosestrategy}(\strategy)$, when 
we suppose that $\game^{\loosestrategy}$ is acyclic. In particular, we 
obtain the following Proposition.

\begin{proposition}
	\label{prop:dHH-find_synch}
	Let \game be a reachability game and \loosestrategy be a non-winning 
	MD-strategy for \ReachPl such that $\game^{\winstrategy}$ is acyclic. Then, 
	$\winstrategy^1$ minimises $\dstrat{\loosestrategy}(\strategy)$.
\end{proposition}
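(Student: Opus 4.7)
The plan is to show that for any winning MD-strategy \strategy that is optimal in $\widetilde{\game}$ (i.e.\ that also minimises $\dstrat{\strategy}(\loosestrategy)$), the inequality $\dstrat{\loosestrategy}(\winstrategy^1) \leq \dstrat{\loosestrategy}(\strategy)$ holds. Combined with Lemma~\ref{lem:DHH-strat_transformation1}, this yields the proposition. The first observation is that, by construction, $\winstrategy^1(v) \neq \loosestrategy(v)$ holds only for $v \in R := \{v \in \LocsReach \cap S \mid \winstrategy(v) \neq \loosestrategy(v)\}$. Thanks to the acyclicity of $\game^{\loosestrategy}$, every $\loosestrategy$-play \looseplay is finite and simple, so its contribution to the distance is $\dist(\looseplay, \winstrategy^1) = |R \cap V(\looseplay)|$, where $V(\looseplay)$ denotes the set of vertices occurring on \looseplay.

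The central lemma to establish would then be: \emph{every vertex $v \in R$ that appears on some $\loosestrategy$-play also satisfies $\strategy(v) \neq \loosestrategy(v)$.} Once this is proved, for every $\loosestrategy$-play \looseplay each vertex in $R \cap V(\looseplay)$ is also counted in $\dist(\looseplay, \strategy)$, giving $\dist(\looseplay, \winstrategy^1) \leq \dist(\looseplay, \strategy)$; taking the supremum over $\loosestrategy$-plays then delivers $\dstrat{\loosestrategy}(\winstrategy^1) \leq \dstrat{\loosestrategy}(\strategy)$, as desired.

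The main obstacle is proving this lemma. My plan is to argue by contradiction: if some $v \in R$ were both visited by a $\loosestrategy$-play and satisfied $\strategy(v) = \loosestrategy(v)$, I would construct a modified MD-strategy $\strategy'$ that coincides with \strategy outside $v$ and plays $\winstrategy(v)$ at $v$. Using that $v \in S$, one checks that from $v$ the move $\winstrategy(v)$ reaches the region in which \winstrategy's choices are defined and winning, so \strategy' remains winning. The count of vertices along any \strategy'-play where \strategy' disagrees with \loosestrategy is then strictly smaller than along the corresponding \strategy-play through $v$, contradicting the optimality of \winstrategy in $\widetilde{\game}$ established in Lemma~\ref{lem:DHH-strat_find}. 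The delicate and most error-prone point is to formalise why this single local substitution actually produces a strictly better strategy rather than only an alternative optimum, which requires combining the acyclicity of $\game^{\loosestrategy}$ (to rule out new losing loops introduced by the swap) with the undirected-connectivity structure that defines $S$ (to guarantee that the move $\winstrategy(v)$ is consistent with \winstrategy's winning sub-strategy).
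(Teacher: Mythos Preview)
Your central lemma --- that every $v\in R$ visited by some $\loosestrategy$-play must satisfy $\strategy(v)\neq\loosestrategy(v)$ for any optimal winning strategy $\strategy$ --- is false. Consider the one-player game (all vertices in $\LocsReach$) with $\locinit\to v_1$, $v_1\to v_2$, $v_1\to w$, $w\to\locT$, $w\to\locNegT$, where $v_2\in\LocsT$. Let $\loosestrategy(v_1)=w$ and $\loosestrategy(w)=\locNegT$. The strategy $\winstrategy$ with $\winstrategy(v_1)=v_2$, $\winstrategy(w)=\locT$ is optimal in $\widetilde\game$ with cost $1$; the connected component of $\locinit$ in $\game^{\winstrategy}$ is $S=\{\locinit,v_1,v_2\}$, so $R=\{v_1\}$ and $\winstrategy^1(w)=\loosestrategy(w)$. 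Now take $\strategy(v_1)=w$, $\strategy(w)=\locT$: this is also winning and optimal (its unique play $\locinit v_1 w\,\locT$ has weight $1$), yet $v_1\in R$ lies on the $\loosestrategy$-play and $\strategy(v_1)=\loosestrategy(v_1)$. Your lemma fails here even though the proposition's conclusion holds (both distances equal $1$).

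The flaw is structural: distinct optimal strategies in $\widetilde\game$ can disagree with $\loosestrategy$ at entirely different vertices, so $R$ has no reason to be contained in the disagreement set of another optimal $\strategy$. Your proposed contradiction argument also points the wrong way: swapping $\strategy(v)=\loosestrategy(v)$ for $\winstrategy(v)\neq\loosestrategy(v)$ \emph{adds} a disagreement at $v$ on the resulting $\strategy'$-play, so you cannot conclude $\sCost(\strategy')<\sCost(\strategy)$ from this alone. The paper's proof (Lemma~\ref{lem:dHH-game}) does not attempt a vertex-wise comparison; instead it fixes a $\loosestrategy$-play $\looseplay$ witnessing $\dist(\looseplay,\strategy)<\dist(\looseplay,\winstrategy^1)$, isolates the first vertex along $\looseplay$ where $\strategy$ agrees with $\loosestrategy$ but $\winstrategy^1$ does not, and then argues by cases on whether the suffix of $\looseplay$ remains in $S$, in each case deriving a contradiction with the optimality of $\winstrategy^1$ in $\widetilde\game$ or with $\strategy$ being winning. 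You need an argument along those lines that works at the level of a single witnessing play rather than vertex by vertex.
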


Before proving this result for all reachability games that satisfy the hypothesis 
with the non-winning MD-strategy, we start by proving that 
$\dstrat{\loosestrategy}(\winstrategy^1) = 1$ when the reachability game is a 
tree. Intuitively, if \game is a tree when \loosestrategy and \winstrategy make 
distinct choices, then the play reaches a vertex that is not in $S$ (the 
connected component of \locinit). So, from this vertex, \winstrategy and 
\loosestrategy are synchronised. However, when \game is not a tree (even if it is 
acyclic), $\dstrat{\loosestrategy}(\winstrategy^1) = 1$ may be greater than $1$ 
(see \figurename{~\ref{fig:ex_G-dag}}).

\begin{lemma}
	\label{lem:dHH-tree}
	If \game is a tree, then $\dstrat{\loosestrategy}(\winstrategy^1) = 1$. 
\end{lemma}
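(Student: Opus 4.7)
The plan is to exploit the tree structure of $\game$ in two ways: first, to identify $S$ concretely, and second, to argue that a $\loosestrategy$-play can disagree with $\winstrategy^1$ at most once before leaving $S$ for good.

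I would start by giving a clean description of $S$ in this setting. Since $\game$ is a tree rooted at $\locinit$, there is a unique undirected path from $\locinit$ to every vertex $v$, and it coincides with the unique directed path in the tree. Consequently, in the undirected graph underlying $\game^{\winstrategy}$, the connected component of $\locinit$ is exactly the set of vertices reachable from $\locinit$ under $\winstrategy$, i.e., those $v$ whose tree-path from $\locinit$ uses only the edges selected by $\winstrategy$ at the Reach vertices along it. Call this set $S$; by construction, $\winstrategy^1$ agrees with $\winstrategy$ on $S$ and with $\loosestrategy$ on $\LocsReach \setminus S$.

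Next I would fix an arbitrary $\loosestrategy$-play $\looseplay = \loc_0\loc_1\cdots$ and let $i$ be the first index (if one exists) at which $\loc_i \in \LocsReach \cap S$ and $\loosestrategy(\loc_i) \neq \winstrategy(\loc_i)$. Then $\loc_{i+1} \notin S$: the only undirected path from $\locinit$ to $\loc_{i+1}$ in $\game^{\winstrategy}$ would have to traverse the removed edge $(\loc_i,\loc_{i+1})$, and by the tree property there is no alternative route. Moreover every subsequent $\loc_j$ with $j>i$ is a descendant of $\loc_{i+1}$ in the tree, and the unique undirected path from $\locinit$ to such a $\loc_j$ still has to cross the removed edge, so $\loc_j \notin S$ as well. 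At all such later Reach vertices $\winstrategy^1 = \loosestrategy$, so the play has no further disagreements, yielding $\dist(\looseplay,\winstrategy^1) \leq 1$. Taking the supremum over $\loosestrategy$-plays gives $\dstrat{\loosestrategy}(\winstrategy^1) \leq 1$.

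For the matching lower bound I would use that $\winstrategy^1$ is optimal in $\widetilde{\game}$ by Lemma~\ref{lem:DHH-strat_transformation1}, hence has finite cost, hence is winning in $\game$. Since $\loosestrategy$ is non-winning, there is a $\loosestrategy$-play $\looseplay^\ast$ that does not reach $\LocsT$; this $\looseplay^\ast$ cannot be a $\winstrategy^1$-play, so $\loosestrategy$ and $\winstrategy^1$ must disagree at some Reach vertex along $\looseplay^\ast$, giving $\dist(\looseplay^\ast,\winstrategy^1)\geq 1$. The main obstacle is really the first step: pinning down that, in a tree, leaving $S$ is irreversible. This is precisely where the tree hypothesis is used and where the argument would break for DAGs, in agreement with the subsequent counterexample in Figure~\ref{fig:ex_G-dag}; once this irreversibility is in hand, the counting reduces to inspecting the first disagreement along the play.
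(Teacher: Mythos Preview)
Your proof is correct and follows essentially the same approach as the paper's: both arguments hinge on the observation that disagreements with $\winstrategy^1$ can only occur at vertices of $S$, and that in a tree the play leaves $S$ immediately after the first such disagreement and can never return. Your presentation is a bit more explicit (you characterize $S$ directly and argue forward, whereas the paper argues by contradiction assuming two disagreement vertices), and you spell out the lower bound via a losing $\loosestrategy$-play more carefully than the paper does, but the underlying idea is the same.
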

\begin{proof}
	Let \looseplay be a \loosestrategy-play. If \looseplay is also a 
	$\winstrategy^1$-play then $\dist(\looseplay, \winstrategy^1) = 0$. Now, we 
	suppose that \looseplay is not a $\winstrategy^1$-play (exists, else 
	\loosestrategy is a winning strategy). To conclude the proof, we want to 
	prove that there exists exactly one vertex of \ReachPl along 
	\looseplay such that $\loosestrategy(\loc) \neq \winstrategy^1(\loc)$. 
	Since \looseplay is not a $\winstrategy^1$-play, there exists $\loc_i \in 
	\LocsReach$ a vertex of \looseplay such that $\loosestrategy(\loc_i) 
	\neq \winstrategy^1(\loc_i)$. By contradiction, we suppose that there exists 
	$\loc_j \neq \loc_i$ be a vertex of \ReachPl along \looseplay such that
	$\loosestrategy(\loc_j) \neq \winstrategy^1(\loc_j)$. Without lost of 
	generality, we suppose that $j > i$. By definition of $\winstrategy^1$, there
	exists a play in $\game^{\winstrategy}$ between $\locinit$ and $\loc_j$ 
	without reaching $\loc_i$. Otherwise, in $\loc_i$, we have 
	$\loosestrategy(\loc_i) = \winstrategy^1(\loc_i)$ (since this play between 
	$\locinit$ and $\loc_j$ is the prefix of \looseplay). However, since 
	\game is a tree, there exists only one play in \game between $\locinit$ 
	and $\loc_j$. We obtain a contradiction. 
\end{proof}

Even if \game is acyclic, we can not prove that, along all \loosestrategy-plays, 
there is at most one vertex such that $\winstrategy^1$ and \loosestrategy are 
distinct. In particular, we need to directly prove that the minimality of the 
distance $\dstrat{\loosestrategy}(\winstrategy)$ is reached by $\winstrategy^1$. 
Intuitively, the minimality of this distance is given by remarking that 
$\winstrategy^1$ synchronises most vertices as possible with \loosestrategy 
without breaking its optimality in $\widetilde{\game}$.

\begin{figure}[tbp]
	\centering
	\begin{tikzpicture}[xscale=.7,every node/.style={font=\footnotesize}, 
	every label/.style={font=\scriptsize}]
	\node[PlayerReach] at (0,0) (v1) {$\loc_1$};
	\node[PlayerReach] at (3, 1) (v2) {$\loc_2$};
	\node[PlayerSafe] at (3,-1) (v3) {$\loc_3$};
	\node[PlayerReach] at (6, -1) (v4) {$\loc_4$};
	\node[target] at (6,1) (t1) {$\locT$};
	\node[target] at (0,-1) (t2) {$\locNegT$};
	\node[target] at (9,-1) (t4) {$\locNegT$};
	
	\node[strat] at (2,-.7) (st11) {};
	\node[strat] at (2,.7) (st12) {};
	\node[strat] at (5, 1) (st2) {};
	\node[strat] at (1.5,-1) (st31) {};
	\node[strat] at (5,-1) (st32) {};
	\node[strat] at (6,.5) (st41) {};
	\node[strat] at (8,-1) (st42) {};
	
	\draw[->] 
	(v1) edge (v2)
	(v1) edge (v3)
	(v2) edge (t1)
	(v3) edge (v4)
	(v3) edge (t2)
	(v4) edge (t1)
	(v4) edge (t4)
	;
	
	\draw[->,ForestGreen, line width=0.5mm]
	(v1) edge node[above] {\textcolor{ForestGreen}{\loosestrategy}} (st11)
	(v3) edge node[below] {\textcolor{ForestGreen}{\loosestrategy}} (st31)
	(v4) edge node[above] {\textcolor{ForestGreen}{\loosestrategy}} (st42)
	;
	
	\draw[->,Blue, line width=0.5mm]
	(v2) edge node[above] 
	{\textcolor{Magenta}{\winstrategy}} node[below] 
	{\textcolor{ForestGreen}{\loosestrategy}} (st2)
	;
	
	\draw[->,Magenta, line width=0.5mm]
	(v1) edge node[below] {\textcolor{Magenta}{\winstrategy}} (st12)
	(v3) edge node[above] {\textcolor{Magenta}{\winstrategy}} (st32)
	(v4) edge node[right] {\textcolor{Magenta}{\winstrategy}} (st41);
	\end{tikzpicture}
	\caption{An acyclic reachability game where $\winstrategy^1$ depicted in pink 
		minimises $\dstrat{\loosestrategy}(\winstrategy^1)$ but not 
		$\dstrat{\loosestrategy}(\winstrategy^1) = 2$.}
	\label{fig:ex_G-dag}
\end{figure}

\begin{lemma}
	\label{lem:dHH-game}
	Let \game be a reachability game with \loosestrategy be a non-winning 
	MD-strategy such that $\game^{\loosestrategy}$ is acyclic, then 
	$\winstrategy^1$ minimises $\dstrat{\loosestrategy}(\strategy)$, 
	$\dstrat{\loosestrategy}(\winstrategy^1) = 
	\min\{\dstrat{\loosestrategy}(\strategy) \mid \winstrategy 
	\text{ a winning MD-strategy}\}$. 
\end{lemma}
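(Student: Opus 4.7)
The plan is to argue by contradiction, exploiting the optimality of $\winstrategy$ in the shortest-path game $\widetilde{\game}$ together with the acyclicity of $\game^{\loosestrategy}$. Suppose a winning MD-strategy $\strategy^*$ exists with $\dstrat{\loosestrategy}(\strategy^*) < \dstrat{\loosestrategy}(\winstrategy^1) = k^*$. I would fix a $\loosestrategy$-play $\looseplay^* = v_0 v_1 \cdots v_m$ realising $\dist(\looseplay^*,\winstrategy^1) = k^*$; by acyclicity of $\game^{\loosestrategy}$ it is a simple finite path, and since $k^* \geq 1$ it must end in a non-target terminal (otherwise $\looseplay^*$ would be a $\winstrategy^1$-play too, giving distance $0$). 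Let $i_1 < \cdots < i_{k^*}$ be the indices along $\looseplay^*$ at which $\winstrategy$ (hence $\winstrategy^1$) disagrees with $\loosestrategy$; each $v_{i_j} \in \LocsReach$ and, by construction of $\winstrategy^1$, each $v_{i_j}$ lies in the connected component $S$.

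The second step would show that each disagreement $v_{i_j}$ is \emph{essential}: any winning $\strategy^*$ must deviate from $\loosestrategy$ somewhere in the suffix $v_{i_j} v_{i_j+1} \cdots v_m$, because otherwise \SafePl could resolve her choices along the suffix exactly as in $\looseplay^*$ and produce a losing $\strategy^*$-play, contradicting the winning property of $\strategy^*$. Combined with the optimality of $\winstrategy$ in $\widetilde{\game}$, under a tie-breaking rule favouring zero-weight (i.e.\ $\loosestrategy$-agreeing) edges in the backward-induction computation of $\winstrategy$, this argues that the disagreements of $\winstrategy^1$ along $\looseplay^*$ correspond exactly to positions that every winning strategy is forced to fix on some $\loosestrategy$-play. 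For the tree case this already suffices, and indeed Lemma~\ref{lem:dHH-tree} gives the base instance where uniqueness of paths in the underlying tree forces $k^* = 1$.

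The main obstacle is converting these $k^*$ local necessities into a single global lower bound $\dist(\tilde\looseplay,\strategy^*) \geq k^*$ witnessed by one $\loosestrategy$-play $\tilde\looseplay$, because $\strategy^*$ may amortise several necessities through a shared downstream vertex of $\looseplay^*$ and a purely local argument only yields $\dstrat{\loosestrategy}(\strategy^*) \geq 1$. I would handle this by an inductive walk over the $v_{i_j}$'s from the terminal of $\looseplay^*$ back to $\locinit$: at each disagreement index $i_j$, if the deviation of $\strategy^*$ attributed to indices $i_{j+1}, \dots, i_{k^*}$ cannot also be charged to $i_j$ (because the deviation vertex is not a descendant of $v_{i_j}$ in $\game^{\loosestrategy}$), I would exploit DAG-branching of $\game^{\loosestrategy}$ at a \SafePl ancestor of $v_{i_j}$ on $\looseplay^*$ to exhibit an alternative $\loosestrategy$-play $\tilde\looseplay$ that accumulates one fresh deviation per disagreement index. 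The delicate part is to show that this branching is always available, and this is where the ``agreement-preferring'' choice of $\winstrategy$ is crucial: it guarantees that each disagreement at $v_{i_j}$ reflects a genuine local branching where $\loosestrategy$'s choice strictly leads into a losing subgraph, so that an alternative $\loosestrategy$-play separating $v_{i_j}$ from $v_{i_{j+1}}, \dots, v_{i_{k^*}}$ in $\game^{\loosestrategy}$ must exist.
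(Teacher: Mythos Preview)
Your approach differs substantially from the paper's. You aim to lower-bound $\dstrat{\loosestrategy}(\strategy^*)$ by $k^*$ directly, via an inductive accumulation of ``necessary'' deviations along a single witnessing $\loosestrategy$-play. The paper never tries to count deviations of the competitor $\strategy$. Instead, from the inequality $\dist(\looseplay,\strategy)<\dist(\looseplay,\winstrategy^1)$ for some $\loosestrategy$-play $\looseplay$, it isolates the \emph{first} vertex $v$ on $\looseplay$ with $\strategy(v)=\loosestrategy(v)$ but $\winstrategy^1(v)\neq\loosestrategy(v)$, writes $\looseplay=\looseplay_1\,v\,\looseplay_2$, and performs a case split on whether the suffix $\looseplay_2$ re-enters the connected component $S$. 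In the re-entering case (say at $v'$), the paper builds a \emph{hybrid} strategy $\strategy'$ that follows $\loosestrategy$ on the segment of $\looseplay$ from $v$ to $v'$ and follows $\winstrategy^1$ everywhere else, and argues that $\strategy'$ is winning with $\sCost(\strategy')<\sCost(\winstrategy^1)$. The contradiction is thus obtained against the optimality of $\winstrategy^1$ in the shortest-path game $\widetilde{\game}$ (i.e.\ against minimality of $\dstrat{\strategy}(\loosestrategy)$), not by bounding $\dstrat{\loosestrategy}(\strategy^*)$ from below.

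Your proposal has two genuine gaps. First, the ``agreement-preferring'' tie-breaking rule for $\winstrategy$ is not among the lemma's hypotheses: in the paper, $\winstrategy$ is an \emph{arbitrary} optimal MD-strategy in $\widetilde{\game}$, and $\winstrategy^1$ is derived from that fixed $\winstrategy$. Your whole argument that each disagreement $v_{i_j}$ is ``genuine'' rests on this extra rule. Second, even granting the tie-breaking, the inductive walk is not an argument as it stands. You assert that when a single deviation of $\strategy^*$ serves several indices $i_j$, a DAG-branching at some $\SafePl$ ancestor produces an alternative $\loosestrategy$-play separating them; but you give no reason such a branching exists. The tie-breaking only tells you that at $v_{i_j}$ the $\loosestrategy$-successor has strictly larger $\widetilde{\game}$-value; it says nothing about the branching structure of $\game^{\loosestrategy}$ between $v_{i_j}$ and later disagreement vertices, so there is no mechanism forcing the ``fresh deviation per index'' you need. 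The paper's route sidesteps this completely: it needs only the single vertex $v$ to manufacture the cheaper strategy $\strategy'$ and close the contradiction.
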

\begin{proof}
	We reason by contradiction and we suppose that there exists a 
	winning MD-strategy \strategy such that $ \dstrat{\loosestrategy}(\strategy) 
	< \dstrat{\loosestrategy}(\winstrategy^1)$. In particular, by 
	rewriting distances $\dstrat{\loosestrategy}(\strategy)$, we suppose that 
	\begin{displaymath}
	\dstrat{\loosestrategy}(\strategy) = 
	\sup_{\substack{\looseplay \\ \text{\loosestrategy-play}}} 
	\dist(\looseplay, \strategy) <  
	\sup_{\substack{\looseplay \\ \text{\loosestrategy-play}}} 
	\dist(\looseplay, \winstrategy^1) = \dstrat{\loosestrategy}(\winstrategy^1)
	\end{displaymath}
	Thus, we deduce that there exists a \loosestrategy-play, denoted \looseplay, 
	such that $\dist(\looseplay, \strategy) < \dist(\looseplay, \winstrategy^1)$. 
	Moreover, along this play, there exists $\loc \in \looseplay$ such that 
	$\strategy(\loc) = \loosestrategy(\loc)$ and $\loosestrategy(\loc) \neq 
	\winstrategy^1(\loc)$. Let $\loc$ be the first such vertex along \looseplay. 
	In particular, we can decomposed \looseplay as 
	$\looseplay = \looseplay_1 \loc \looseplay_2$. 
	
	We suppose that along $\looseplay_2$, all vertices $\loc' \in S$ (be the 
	connected component of \locinit). In particular, by definition of 
	$\winstrategy^1$,$\looseplay_2$ is a $\winstrategy^1$-play, thus 
	$\dist(\looseplay, \winstrategy^1) = 1$. In particular, by hypothesis 
	$\dist(\looseplay, \strategy) = 0$ which is only possible when \looseplay 
	reaches the effect. In this case $\looseplay$ be a \strategy-play that 
	contradict the optimality of $\winstrategy^1$ in $\widetilde{\game}$. 
	Otherwise, we can define another \play from \loc that is a \strategy-play 
	and a \loosestrategy-play that does not reach the target that contradict 
	the fact that \strategy is a winning MD-strategy.
	
	Now, we suppose that there exists $\loc'$ in \looseplay such that 
	\looseplay reaches $\loc'$ after \loc and $\loc' \in S$. In this case, we 
	define $\strategy'$ be a strategy such that for all plays $\play'$, we have 
	\begin{displaymath}
	\strategy'(\play') = 
	\begin{cases}
	\winstrategy^1(\play') & \text{if $\play'$ is not a prefix of \looseplay;} \\
	\winstrategy^1(\play') & \text{if $\play'$ is a prefix of $\looseplay_1$, or 
		is is a prefix of $\looseplay_1 \loc \looseplay_2$ with $\loc' \in 
		\looseplay_2$;} \\
	\loosestrategy(\play') & \text{otherwise.}
	\end{cases}
	\end{displaymath}
	This strategy is winning since \strategy is a winning MD-strategy and 
	$\sCost(\strategy) < \sCost(\winstrategy^1)$ that contradict the 
	optimality of $\winstrategy^1$ in $\widetilde{\game}$.
\end{proof}

\end{document}